\newtheorem{defn}{Definition}[section]
\newtheorem{remark}[defn]{Remark}
\newtheorem{lemma}[defn]{Lemma}
\newtheorem{theorem}[defn]{Theorem}
\newtheorem{corollary}[defn]{Corollary}
\newcommand{\algo}{\mathcal{A}}
\newcommand{\M}{\mathcal{M}}
\newcommand{\F}{\mathcal{F}}
\newcommand{\cvginprob}{\overset{p}\to}
\title{Asymptotic welfare performance of Boston assignment algorithms}
\date{\today}
\author{Geoffrey Pritchard}
\author{Mark C. Wilson}
\begin{document}

\begin{abstract}
We make a detailed analysis of three key algorithms (Serial Dictatorship and the naive and adaptive variants of the Boston algorithm) for the housing allocation problem, under the assumption that agent preferences are chosen iid uniformly from linear orders on the items.
We compute limiting distributions (with respect to some common utility functions) as $n\to \infty$ of both the utilitarian  welfare  and the order bias. To do this, we compute limiting distributions of the outcomes for an arbitrary agent whose initial relative position in the tiebreak order is $\theta\in[0,1]$, as a function of $\theta$. We expect that these fundamental results on the stochastic processes underlying these mechanisms will have wider applicability in future. Overall our results show that the differences in utilitarian welfare performance of the three algorithms are fairly small, but the differences in order bias are much greater. Also, Naive Boston beats Adaptive Boston, which beats Serial Dictatorship, on both welfare and order bias.
\end{abstract}


\maketitle

%


\section{Introduction}
\label{s:intro}

Algorithms for allocation of indivisible goods are widely applicable and have been heavily studied. There are many variations on the problem, for example one-sided matching or housing allocation (each agent gets a unique item), school choice (each student gets a single school seat, and schools have limited preferences over students), and multi-unit assignment (for example each student is allocated a seat in each of several classes). One can also vary the type of preferences for agents over items, but here we focus on the most commonly studied case, of complete strict preferences. We focus on the housing allocation problem \cite{HyZe1979}, whose relative simplicity allows for more detailed analysis. 

\subsection{Our contribution}
\label{ss:contrib}

We make a detailed analysis of three prominent algorithms (Serial Dictatorship and the naive and adaptive variants of the Boston algorithm) for the housing allocation problem, under the standard assumption that agent preferences are independently chosen uniformly from linear orders on the items (often called the Impartial Culture distribution), and the further assumption that agents express truthful preferences. 

We compute limiting distributions as $n\to \infty$ of both the utilitarian  welfare (with respect to some common utility functions) and the order bias (a recently introduced \cite{FrPW2021} fairness concept). In order to do this, we compute limiting distributions of the outcomes for an arbitrary agent whose initial relative position in the tiebreak order is $\theta\in[0,1]$, as a function of $\theta$. We expect that these fundamental results on the stochastic processes underlying these mechanisms will have wider applicability in future. While the results for Serial Dictatorship are easy to derive, the Boston mechanisms require substantial work. 

To our knowledge, no precise results of this type on average-case welfare performance of allocation algorithms  have been published. In Section~\ref{s:conclude} we discuss the limitations and implications of our results, situate our work in the literature on welfare of allocation mechanisms, and point out opportunities for future work.

We first derive the basic limiting results for exit time and rank of the item attained, for Naive Boston, Adaptive Boston and Serial Dictatorship in Sections~\ref{s:naive_asymptotics}, \ref{s:adaptive_asymptotics} and \ref{s:serial_dictatorship} respectively. 
Each section first deals with average-case results for an arbitrary initial segment of agents in the choosing order, and then with the fate of an individual agent at an arbitrary position. The core technical results are found in Theorems~\ref{thm:naive_asym_agent_numbers}, \ref{thm:naive_individual_asymptotics}, \ref{thm:adaptive_asym_agent_numbers}, \ref{thm:adaptive_asymptotics},  \ref{thm:adaptive_individual_asymptotics} 
and their corollaries.
We apply the basic results to utilitarian welfare in Section~\ref{s:welfare} and order bias in Section~\ref{s:order_bias}, and discuss the implications, relation to previous work, and ideas for possible future work in Section~\ref{s:conclude}.

The results for Serial Dictatorship are straightforwardly derived, but the other algorithms require nontrivial analysis. Of those, Naive Boston is much easier, because the nature of the algorithm means that the exit time of an agent immediately yields the preference rank of the item obtained by the agent. However in Adaptive Boston this link is much less direct and this necessitates substantial extra technical work.

\section{Preliminaries}
\label{s:preliminaries}

We define the mechanisms Naive Boston, Adaptive Boston and Serial Dictatorship, and show how to model the assignments they give via stochastic processes.

\subsection{The mechanisms}
\label{s:algo}

We assume throughout that we have $n$ agents and $n$ items, where each agent has a complete strict preference ordering of items. Each mechanism allows for strategic misrepresentation of preferences by agents, but we assume sincere behavior here for this baseline analysis. We are therefore studying the underlying preference aggregation algorithms. These can be described as centralized procedures that take an entire preference profile and output a matching of agents to items, but are more easily and commonly interpreted dynamically as explained below.

Probably the most famous mechanism for housing allocation is \emph{Serial Dictatorship} (SD). In a common implementation, agents choose according to the exogenous order $\rho$,  each agent in turn choosing the item he most prefers among those still available.

The Boston algorithms in the housing allocation setting are as follows.
\emph{Naive Boston} (NB) proceeds in rounds: in each round, some of the agents and items will be permanently matched, and the rest will be relegated to the following round. At round $r$ ($r=1,2\ldots$), each remaining
unmatched agent bids for his $r$th choice among the items, and will be matched to that item if it is still available.
If more than one agent chooses an item, then the order $\rho$ is used as a tiebreaker.

\emph{Adaptive Boston} (AB) \cite{MeSe2014} differs from Naive Boston in the set of items available at each round. In each round of this algorithm, all remaining agents submit a bid for their most-preferred item among those still
available at the start of the round, rather than for their most-preferred item among those for which they have not yet bid. The Adaptive Boston algorithm takes fewer rounds to finish than the naive version, because agents do not waste time bidding for their $r$th choice in round $r$ if it has already been assigned to someone else in a previous round. This means that the algorithm runs more quickly, but agents, especially those late in the choosing order, are more likely to have to settle for lower-ranked items. Note that both Naive and Adaptive Boston behave exactly the same in the first round, but differently thereafter.

\subsection{Important stochastic processes in the IC model}
\label{s:stoch}

Under the Impartial Culture assumption, it is convenient to imagine the agents developing their preference
orders as the algorithm proceeds, rather than in advance. This allows the evolution of the assignments for the Boston algorithms
to be described by the following stochastic processes (for SD the analysis is easier).

In the first round, the naive and adaptive Boston processes proceed identically: each agent randomly
chooses one of the $n$ items, independently of other agents and with uniform probabilities 
$\frac{1}{n}$, as his most preferred item for which to bid. Each item that is so chosen is
assigned to the first (in the sense of the agent order $\rho$) agent who bid for it;
items not chosen by any agent are relegated, along with the unsuccessful agents, to the next round.
In the $r$th round ($r\geq2$), the naive algorithm causes each remaining agent to randomly choose
his $r$th most-preferred item, independently of other agents and of his own previous choices, uniformly
from the $n-r+1$ items for which he has not previously bid. (Note that included among these are all the items
still available in the current round.) Each item so chosen is assigned to the first agent who chose it;
other items and unsuccessful agents are relegated to the next round. The adaptive Boston method is similar,
except that agents may choose only from the items still available at the start of the round.
This can be achieved by having each remaining agent choose his next most-preferred item by repeated
sampling without replacement from the set of items he has not yet considered, until one of the items sampled
is among those still available at this round.

An essential feature of these bidding processes is captured in the following two results.

\begin{lemma}
\label{lem:A}
Suppose we have $m$ items ($m\geq2$) and a sequence of agents (Agent 1, Agent 2, $\ldots$) who each randomly
(independently and uniformly) choose an item. Let $A\subseteq{\mathbb N}$ be a subset of the agents, and
$C_A$ be the number of items first chosen by a member of $A$. (Equivalently, $C_A$ is the number of members
of $A$ who choose an item that no previous agent has chosen.) Then
$$ \hbox{Var}(C_A) \;\leq\; E[C_A] \;=\; \sum_{a\in A} \left(1-\frac1m\right)^{a-1}
  $$
\end{lemma}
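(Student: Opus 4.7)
My plan is to write $C_A = \sum_{a \in A} X_a$, where $X_a$ is the indicator that agent $a$'s item differs from those chosen by agents $1, \ldots, a-1$. The expectation then follows by linearity and symmetry: conditional on agent $a$'s choice, each of the $a-1$ earlier agents independently avoids it with probability $1 - 1/m$, so $E[X_a] = (1 - 1/m)^{a-1}$, and summing over $a \in A$ yields the stated formula for $E[C_A]$.

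For the variance I would use the standard decomposition
\[
\mathrm{Var}(C_A) = \sum_{a \in A} \mathrm{Var}(X_a) + 2 \sum_{\substack{a < b \\ a,b \in A}} \mathrm{Cov}(X_a, X_b).
\]
Because each $X_a$ is Bernoulli, $\mathrm{Var}(X_a) = E[X_a](1 - E[X_a]) \leq E[X_a]$, so the diagonal contribution is already bounded by $E[C_A]$. It therefore suffices to establish pairwise negative correlation, i.e.\ $\mathrm{Cov}(X_a, X_b) \leq 0$ for all $a < b$.

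This negative correlation is the only nontrivial step. The intuitive reason is that $X_a = 1$ enlarges by one the set of items already used, making it harder for agent $b$ to choose a novel item. To make this precise I would condition on the pair $(I_a, I_b)$ of items chosen by agents $a$ and $b$: on $\{I_a = I_b\}$ the product $X_a X_b$ vanishes, while on $\{I_a \neq I_b\}$ (probability $(m-1)/m$) the events ``agents $1, \ldots, a-1$ all avoid $\{I_a, I_b\}$'' and ``agents $a+1, \ldots, b-1$ all avoid $I_b$'' involve disjoint sets of independent choices, with probabilities $((m-2)/m)^{a-1}$ and $((m-1)/m)^{b-a-1}$ respectively. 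Multiplying gives
\[
E[X_a X_b] = \Bigl(\tfrac{m-2}{m}\Bigr)^{a-1}\Bigl(\tfrac{m-1}{m}\Bigr)^{b-a},
\]
to be compared with $E[X_a]E[X_b] = ((m-1)/m)^{a+b-2}$. Cancelling common factors reduces the required inequality to $(m-2)/m \leq ((m-1)/m)^2$, which after clearing denominators is just $0 \leq 1$. Combining the diagonal and off-diagonal bounds then gives $\mathrm{Var}(C_A) \leq E[C_A]$, completing the proof.
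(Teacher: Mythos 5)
Your proposal is correct: the decomposition $C_A=\sum_{a\in A}X_a$ with $X_a$ the indicator that agent $a$ picks a previously unchosen item, the formula $E[X_a]=(1-\tfrac1m)^{a-1}$, the computation $E[X_aX_b]=\bigl(\tfrac{m-2}{m}\bigr)^{a-1}\bigl(\tfrac{m-1}{m}\bigr)^{b-a}$ for $a<b$, and the reduction of $\mathrm{Cov}(X_a,X_b)\le 0$ to $\tfrac{m-2}{m}\le\bigl(\tfrac{m-1}{m}\bigr)^2$ are all sound, and together with $\mathrm{Var}(X_a)\le E[X_a]$ they give $\mathrm{Var}(C_A)\le E[C_A]$. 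The underlying probabilistic computation is the same as the paper's (the key step in both is the inequality $1-\tfrac2m\le(1-\tfrac1m)^2$ applied to the joint probability that two agents are each first on distinct items), but the packaging differs: you work with per-agent indicators and prove exact pairwise negative correlation, whereas the paper indexes indicators by item--agent pairs ($X_{ia}=1_{\{F_i=a\}}$), expands $E[C_A^2]$ as a sum over pairs, and bounds it term by term. The paper's choice is driven by the fact that it proves the more general Lemma~\ref{lem:Aprime} (only $\ell$ of the $m$ items are ``blue''), which is what Naive Boston actually needs, and then obtains Lemma~\ref{lem:A} as the special case $\ell=m$; your negative-correlation argument buys a cleaner and slightly sharper proof of Lemma~\ref{lem:A} itself, and in fact it extends to the blue-item setting with only a small extra step (there $E[X_aX_b]$ acquires the factor $\tfrac{\ell(\ell-1)}{m^2}$, and one uses $\tfrac{\ell-1}{\ell}\le\tfrac{m-1}{m}$, which is exactly the paper's bound $\tfrac{\ell(\ell-1)}{m^2}\le\tfrac{\ell^2}{m^2}(1-\tfrac1m)$), though as written your proof covers only the statement asked.
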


\begin{lemma}
\label{lem:Aprime}
Suppose we have the situation of Lemma \ref{lem:A}, with the further stipulation that $\ell$ of the $m$ items
are blue. Let $C_A$ be the number of blue items first chosen by a member of $A$ (equivalently, the number of
members of $A$ who choose a blue item that no previous agent has chosen.) Then
$$ \hbox{Var}(C_A) \;\leq\; E[C_A] \;=\; \frac{\ell}{m} \sum_{a\in A} \left(1-\frac1m\right)^{a-1}
  $$
\end{lemma}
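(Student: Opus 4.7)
The plan is to reduce to Lemma~\ref{lem:A} by a coloring argument. Let $S$ be the random set of items first chosen by some member of $A$, so that $D := |S|$ is exactly the quantity studied in Lemma~\ref{lem:A}; in particular $E[D] = \sum_{a \in A}(1-1/m)^{a-1}$ and $\hbox{Var}(D) \le E[D]$. Then $C_A$ is simply the number of blue items in $S$, so it suffices to understand how many blue items fall within this random set.

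The crucial observation is a symmetry: since the agents' choices are uniform over items and the blue/non-blue labels play no role in the construction of $S$, the distribution of $S$ is invariant under permutations of the items. Hence, conditional on $D=k$, the set $S$ is uniformly distributed among the $k$-subsets of the $m$ items, and consequently $C_A \mid D$ is Hypergeometric$(m,\ell,D)$. This immediately delivers $E[C_A \mid D] = (\ell/m)\,D$, so that $E[C_A] = (\ell/m)\,E[D]$ matches the claimed mean.

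For the variance, write $p = \ell/m$ and apply the law of total variance together with the hypergeometric formula $\hbox{Var}(C_A \mid D) = D\,p(1-p)(m-D)/(m-1)$:
\[
\hbox{Var}(C_A) \;=\; p(1-p)\,E\!\left[\frac{D(m-D)}{m-1}\right] + p^2\,\hbox{Var}(D).
\]
The first summand is at most $p(1-p)\,E[D]$: since $D$ is a nonnegative integer, $D^2 \ge D$, whence $E[D(m-D)] \le mE[D] - E[D] = (m-1)E[D]$. The second summand is at most $p^2\,E[D]$ by Lemma~\ref{lem:A}. Adding these gives $\hbox{Var}(C_A) \le p\,E[D] = E[C_A]$, as required.

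The only step demanding genuine thought beyond routine bookkeeping is the uniformity of $S$ given $|S|$, but this follows immediately from the permutation-invariance of the agents' underlying choice process. Everything else is a clean application of the conditional variance identity, with the slack absorbed by the two elementary inequalities $D^2 \ge D$ and $\hbox{Var}(D) \le E[D]$.
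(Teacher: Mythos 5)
Your argument is correct, and it takes a genuinely different route from the paper's. The paper proves Lemma~\ref{lem:Aprime} directly: it introduces the indicators $X_{ia}$ of the events $\{F_i=a\}$ (where $F_i$ is the first chooser of item $i$), computes $E[C_A]$ exactly, bounds the cross-moments $E[X_{ia}X_{jb}]$ by $\frac{1}{m^2}\left(1-\frac1m\right)^{a+b-3}$ using $1-\frac2m<\left(1-\frac1m\right)^2$, and compares the resulting bound on $E[C_A^2]$ with $E[C_A]^2$; Lemma~\ref{lem:A} is then obtained as the special case $\ell=m$. You instead reduce \ref{lem:Aprime} to \ref{lem:A}: by exchangeability of the items, the set $S$ of items first claimed by members of $A$ is, conditionally on $D=|S|$, uniform over $D$-subsets, so $C_A\mid D$ is hypergeometric; the law of total variance together with the elementary bounds $E[D^2]\geq E[D]$ and $\hbox{Var}(D)\leq E[D]$ then gives $\hbox{Var}(C_A)\leq p(1-p)E[D]+p^2E[D]=pE[D]=E[C_A]$. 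This is clean and arguably more conceptual --- the way the $p(1-p)$ and $p^2$ terms recombine to $p$ makes the slack transparent --- whereas the paper's computation is self-contained and delivers both lemmas in one stroke. The one caveat is logical order: in the paper, Lemma~\ref{lem:A} has no independent proof (it is declared to be the special case $\ell=m$ of \ref{lem:Aprime}), so if your argument replaced the paper's, Lemma~\ref{lem:A} would still need its own proof (e.g.\ the paper's indicator computation restricted to $\ell=m$, which is where the genuine second-moment work lies); as a derivation of \ref{lem:Aprime} from \ref{lem:A}, however, your proof is complete and correct.
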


\begin{remark}
Lemmas \ref{lem:A} and \ref{lem:Aprime} are applicable to the adaptive and naive Boston mechanisms, respectively.
The blue items in Lemma \ref{lem:Aprime} correspond to those still available at the start of the round.
In the actual naive Boston algorithm, the set of unavailable items that an agent may still bid for will typically
be different for different agents, but the number of them ($m-\ell$) is the same for all agents, which is
all that matters for our purposes.
\end{remark}

\begin{proof}{Proof of Lemmas \ref{lem:A} and \ref{lem:Aprime}.}
Lemma \ref{lem:A} is simply the special case of Lemma \ref{lem:Aprime} with $\ell=m$, so the following
direct proof of Lemma \ref{lem:Aprime} suffices for both.
\newcommand{\rrm}{\left(1-\frac1m\right)}
Let $F_i$ denote the agent who is first to choose item $i$, and $X_{ia}$ the indicator of the event
$\{F_i=a\}$. That is, $X_{ia}=1$ if and only if $F_i=a$.
We have $P(F_i=a)=\frac1m\rrm^{a-1}$: agent $a$ must choose $i$, while all previous agents choose items
other than $i$.
Let $B$ be the set of blue items.
Then $C_A=\sum_{i\in B} \sum_{a\in A} X_{ia}$, so
$$ E[C_A] \;=\; \sum_{i\in B} \sum_{a\in A} P(F_i=a)
    \;=\; \sum_{i\in B} \sum_{a\in A} \frac1m \rrm^{a-1} 
    \;=\; \frac{\ell}{m} \sum_{a\in A} \rrm^{a-1},
  $$
as claimed. Also,
\begin{equation}
\label{eq:ecasq}
E\left[C_A^2\right] \;=\; \sum_{i\in B} \sum_{j\in B} \sum_{a\in A} \sum_{b\in A} E[X_{ia}X_{jb}] .
\end{equation}
For $a\neq b$ these summands are identical for all $i\neq j$ (and zero for $i=j$);
for $a=b$, they are identical for all $i=j$ (and zero for $i\neq j$). Thus
(\ref{eq:ecasq}) reduces to
\begin{equation}
\label{eq:ecasqq}
E\left[C_A^2\right] \;=\; \ell(\ell-1) \sum_{a,b\in A; a\neq b} E[X_{1,a}X_{2,b}]
     \;+\; \ell \sum_{a\in A} E[X_{1,a}] .
\end{equation}
The second term of (\ref{eq:ecasqq}) is $E[C_A]$ again.
For $a<b$ and $i\neq j$ we have
$$ E[X_{ia}X_{jb}] \;=\; P(F_i=a\hbox{ and }F_j=b) = \left(1-\frac2m\right)^{a-1} \frac1m \rrm^{b-a-1} \frac1m
  $$
(Agents prior to $a$ must choose neither $i$ nor $j$, $a$ must choose $i$, agents between $a$ and $b$
must choose items other than $j$, and $b$ must choose $j$.) Since $1-\frac2m < \rrm^2$, this gives
\begin{equation}
\label{eq:xiaxjb}    
 E[X_{ia}X_{jb}] \;\leq\; \frac{1}{m^2} \rrm^{a+b-3} .
\end{equation}
As this last expression is symmetric in $a$ and $b$, (\ref{eq:xiaxjb}) also holds for $a>b$.
Hence,
$$ E\left[C_A^2\right] \;\leq\; E[C_A] \;+\;
  \frac{\ell(\ell-1)}{m^2} \sum_{a,b\in A; a\neq b} \rrm^{a+b-3} .
  $$
We have
$\frac{\ell(\ell-1)}{m^2} = \frac{\ell^2}{m^2}\left(1-\frac1\ell\right)\leq \frac{\ell^2}{m^2}\rrm$,
since $\ell\leq m$. This gives
$$ E\left[C_A^2\right] \;\leq\; E[C_A] \;+\;
  \frac{\ell^2}{m^2} \sum_{a,b\in A; a\neq b} \rrm^{a+b-2} ,
  $$
enabling us to bound the variance as required: $\hbox{Var}(C_A) = E\left[C_A^2\right] - E[C_A]^2$ and so
\begin{eqnarray*}
\hbox{Var}(C_A) - E[C_A] &\leq&
  \frac{\ell^2}{m^2}\sum_{a,b\in A; a\neq b} \rrm^{a+b-2} \;-\; \left(\frac{\ell}{m} \sum_{a\in A} \rrm^{a-1} \right)^2 \\
&=& \frac{\ell^2}{m^2}\left( \sum_{a,b\in A; a\neq b} \rrm^{a+b-2} \;-\; \sum_{a,b\in A} \rrm^{a+b-2} \right)\\
&\leq& 0 .
\end{eqnarray*}
\hfill \qedsymbol
\end{proof}

The bounding of the variance of a random variable by its mean implies a distribution with relatively little
variation about the mean when the mean is large. We put this to good use in the following two results.

\begin{lemma}
\label{lem:C}
Let $(X_n)$ be a sequence of non-negative random variables with $\hbox{Var}(X_n)\leq E[X_n]$ and
$\frac1n E[X_n]\to c$ as $n\to\infty$. Then $\frac1n X_n\cvginprob c$ as $n\to\infty$ (convergence in probability).
\end{lemma}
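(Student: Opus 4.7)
The plan is to apply Chebyshev's inequality to the rescaled variables $\frac{1}{n}X_n$, and then combine the resulting concentration around the mean with the convergence of the means.

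First, I would observe that the hypothesis $\frac{1}{n}E[X_n]\to c$ implies in particular that $E[X_n]=O(n)$, so that $\frac{E[X_n]}{n^2}\to 0$. Next, for any $\epsilon>0$, Chebyshev's inequality applied to $\frac{1}{n}X_n$ gives
\[
P\!\left(\left|\tfrac{1}{n}X_n - \tfrac{1}{n}E[X_n]\right| > \tfrac{\epsilon}{2}\right)
\;\leq\; \frac{4\,\mathrm{Var}(X_n)}{n^2\epsilon^2}
\;\leq\; \frac{4\,E[X_n]}{n^2\epsilon^2},
\]
using the variance bound $\mathrm{Var}(X_n)\leq E[X_n]$. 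The right-hand side tends to $0$ as $n\to\infty$.

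Finally, since $\frac{1}{n}E[X_n]\to c$ deterministically, for all sufficiently large $n$ one has $\bigl|\frac{1}{n}E[X_n]-c\bigr|<\epsilon/2$. On the event $\bigl\{|\frac{1}{n}X_n-\frac{1}{n}E[X_n]|\leq \epsilon/2\bigr\}$, the triangle inequality then yields $|\frac{1}{n}X_n - c|\leq \epsilon$. Consequently,
\[
P\!\left(\left|\tfrac{1}{n}X_n - c\right| > \epsilon\right)
\;\leq\; P\!\left(\left|\tfrac{1}{n}X_n - \tfrac{1}{n}E[X_n]\right| > \tfrac{\epsilon}{2}\right) \;\longrightarrow\; 0,
\]
which is the desired convergence in probability.

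There is no real obstacle here: the result is a standard consequence of Chebyshev's inequality together with the fact that a variance growing at most linearly while the mean grows linearly forces the coefficient of variation of $X_n/n$ to vanish. The only point to be a little careful about is splitting the error $\epsilon$ into a stochastic part (handled by Chebyshev) and a deterministic part (handled by the hypothesis on $E[X_n]/n$), which I did above.
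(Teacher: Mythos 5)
Your proof is correct. The core tool is the same as in the paper -- Chebyshev's inequality combined with the bound $\mathrm{Var}(X_n)\leq E[X_n]$, which makes $\mathrm{Var}(X_n)/n^2\to 0$ -- but the route is slightly different: the paper does not prove Lemma~\ref{lem:C} on its own, instead deducing it as the special case of the conditional version, Lemma~\ref{lem:CC}, in which the $\sigma$-fields $\F_n$ are trivial. The proof of Lemma~\ref{lem:CC} applies Chebyshev conditionally on $\F_n$ and then needs an extra uniform-integrability step to pass from convergence in probability of the conditional probabilities to convergence of their expectations; when $\F_n$ is trivial that step is vacuous and the argument collapses to essentially what you wrote. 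So your direct argument is a clean, more elementary proof of Lemma~\ref{lem:C} itself (and your $\epsilon/2$-splitting into a stochastic part and a deterministic centering part is exactly the right bookkeeping), but it does not by itself give Lemma~\ref{lem:CC}, which is the version the paper actually invokes in the inductive analyses of the Boston mechanisms, where the relevant means $E[X_n|\F_r]$ are random and only converge in probability after rescaling.
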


\begin{lemma}
\label{lem:CC}
Let $(X_n)$ be a sequence of non-negative random variables and $(\F_n)$ a sequence of $\sigma$-fields,
with $\hbox{Var}(X_n|\F_n)\leq E[X_n|\F_n]$ and $\frac1n E[X_n|\F_n]\cvginprob c$ as $n\to\infty$.
Then $\frac1n X_n\cvginprob c$ as $n\to\infty$.
\end{lemma}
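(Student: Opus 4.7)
The plan is to reduce to the conditional Chebyshev inequality and then integrate out the conditioning, handling the fact that the bound coming out is a random variable rather than a deterministic number.

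Write $\mu_n := E[X_n \mid \F_n]$ and $\sigma_n^2 := \hbox{Var}(X_n \mid \F_n)$. Decompose
\[
\frac{X_n}{n} - c \;=\; \frac{X_n-\mu_n}{n} \;+\; \left(\frac{\mu_n}{n}-c\right).
\]
The second term converges to $0$ in probability by hypothesis, so it suffices to prove that $(X_n - \mu_n)/n \cvginprob 0$. For this, I would apply the conditional Chebyshev inequality: for any $\epsilon > 0$,
\[
P\!\left(\left|X_n - \mu_n\right| > \epsilon n \,\Big|\, \F_n\right) \;\leq\; \frac{\sigma_n^2}{\epsilon^2 n^2} \;\leq\; \frac{\mu_n}{\epsilon^2 n^2},
\]
where the last step uses the assumed variance-by-mean bound.

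Taking expectations and using that any probability is at most $1$, I obtain
\[
P\!\left(\left|X_n-\mu_n\right| > \epsilon n\right) \;\leq\; E\!\left[\min\!\left(\frac{\mu_n}{\epsilon^2 n^2},\,1\right)\right].
\]
The inner random variable $\min(\mu_n/(\epsilon^2 n^2),1)$ tends to $0$ in probability, because $\mu_n/n \cvginprob c$ forces $\mu_n/n^2 = (\mu_n/n)/n \cvginprob 0$. Since the integrand is bounded by $1$, the bounded convergence theorem yields $E[\min(\mu_n/(\epsilon^2 n^2),1)] \to 0$, so $(X_n-\mu_n)/n \cvginprob 0$, and combining with the second term gives the claim.

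The only delicate point, and thus the main obstacle, is that the Chebyshev bound $\mu_n/(\epsilon^2 n^2)$ is random rather than deterministic, so one cannot simply say ``Chebyshev gives a bound going to zero.'' The clean way around this is exactly the truncation by $1$ together with bounded (or dominated) convergence, which converts convergence in probability of the bound into convergence of its expectation. This is the only essentially new ingredient compared with the proof of Lemma~\ref{lem:C}, which is just the special case where $\F_n$ is trivial and the expectation step is unnecessary.
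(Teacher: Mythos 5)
Your proof is correct and follows essentially the same route as the paper's: decompose $X_n/n - c$, apply conditional Chebyshev together with the bound $\hbox{Var}(X_n|\F_n)\leq E[X_n|\F_n]$, and then integrate out the conditioning by exploiting that the resulting bound is at most $1$ and tends to $0$ in probability. The only cosmetic difference is that you truncate the Chebyshev bound by $1$ and invoke bounded convergence, whereas the paper applies the same uniform-integrability argument directly to the conditional probability; these are the same mechanism.
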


\begin{proof}{Proof of Lemmas \ref{lem:C} and \ref{lem:CC}.}
Lemma \ref{lem:C} is just the special case of Lemma \ref{lem:CC} in which all the $\sigma$-fields $\F_n$
are trivial. 
For a proof of Lemma \ref{lem:CC}, it suffices to show that $\frac1n (X_n - E[X_n|\F_n])\cvginprob 0$.
For any $\epsilon>0$ we have by Chebyshev's inequality (\cite{Durrett})
$$
P\left(\Big|X_n - E[X_n|\F_n]\Big| > \epsilon n\Big|\F_n\right)
\;\leq\; (\epsilon n)^{-2} \hbox{Var}(X_n|\F_n)
\;\leq\; (\epsilon n)^{-2} E[X_n|\F_n] 
  $$
Since $n^{-2} E[X_n|\F_n]\cvginprob 0$, it follows that
$P\left(\Big|X_n - E[X_n|\F_n]\Big| > \epsilon n\Big|\F_n\right)\cvginprob0$.
As these conditional probabilities are a bounded (and thus uniformly integrable) sequence, the
convergence is also in ${\mathcal L}_1$ (Theorem 4.6.3 in \cite{Durrett}), and so
$$ P\left(\frac1n\Big|X_n - E[X_n|\F_n]\Big| > \epsilon\right)
\;=\; E\left[P\left(\Big|X_n - E[X_n|\F_n]\Big| > \epsilon n\Big|\F_n\right)\right]
\;\to\; 0 ,
  $$
giving the required convergence in probability.
\hfill \qedsymbol
\end{proof}

The introduction of asymptotics ($n\to\infty$) implies that we are considering problems of ever-larger sizes.
From now on, the reader should imagine that for each $n$, we have an instance of the house allocation problem
of size $n$; most quantities will accordingly have $n$ as a subscript.

In the upcoming sections, we shall need to consider the fortunes of agents as functions of their position
in the choosing order $\rho$. 

\begin{defn}
\label{defn:theta}
Define the {\em relative position} of an agent $a$ in the order $\rho$ to be
the fraction of all the agents whose position in $\rho$ is no worse than that of $a$. Thus, the
first agent in $\rho$ has relative position $1/n$ and the last has relative position $1$.
For $0\leq\theta\leq1$, let $A_n(\theta)$ denote the set of agents whose relative position is at most
$\theta$, and let $a_n(\theta)$ be the last agent in $A_n(\theta)$.

\end{defn}
\begin{remark}

For completeness, when $\theta < 1/n$ we let $a_n(\theta)$ be the first agent in $\rho$.
This exceptional definition will cause no trouble, as for $\theta>0$ it applies to only finitely many
$n$ and so does not affect asymptotic results, while for $\theta=0$ it allows us to say something
about the first agent in $\rho$.
\end{remark}

\section{Naive Boston}
\label{s:naive_asymptotics}

We now consider the Naive Boston algorithm. We begin with results about initial segments of the queue of agents.

\subsection{Groups of agents}
\label{ss:naive_all}

It will be useful to define the following sequence.

\begin{defn}
The sequence $(\omega_r)_{r=1}^{\infty}$ is defined by the initial condition  $\omega_1 = 1$ and 
recursion $\omega_{r+1}=\omega_r e^{-\omega_r}$ for $r\geq 1$.
\end{defn}
Thus, for example, $\omega_1 = 1, \omega_2 = e^{-1}, \omega_3 = e^{-1} e^{-e^{-1}}$.
The value of $\omega_r$ approximates $r^{-1}$, a relationship made more precise in the following result.

\begin{lemma}
\label{lem:wr_asymptotics}
For all $r\geq3$,
$$ \frac{1}{r+\log r} \;<\; \omega_r \;<\; \frac{1}{r}.
  $$
\end{lemma}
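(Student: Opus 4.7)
The plan is to invert the recursion by setting $u_r = 1/\omega_r$, so that $\omega_{r+1} = \omega_r e^{-\omega_r}$ becomes $u_{r+1} = u_r e^{1/u_r}$, and the two desired bounds translate to $r < u_r < r + \log r$. Both inequalities I would prove by induction on $r$, using the elementary two-sided bound $1 + x \leq e^x \leq 1 + x + \frac{x^2}{2(1-x)}$ for $0 \leq x < 1$, where the right-hand inequality comes from bounding the tail $\sum_{k \geq 2} x^k/k!$ by the geometric series $\sum_{k \geq 2} x^k/2$. The upper bound $u_r > r$ is the easy one: start at $r = 2$ with $u_2 = e > 2$, and apply $e^{1/u_r} \geq 1 + 1/u_r$ to get $u_{r+1} \geq u_r + 1$.

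For the lower bound $u_r < r + \log r$ the base case $r = 3$ is the numerical check $u_3 = e^{1 + 1/e} \approx 3.927 < 4.099 \approx 3 + \log 3$. For the inductive step, the upper bound on $e^x$ applied with $x = 1/u_r$ gives
$$ u_{r+1} \;=\; u_r e^{1/u_r} \;\leq\; u_r + 1 + \frac{1}{2(u_r - 1)} \;<\; u_r + 1 + \frac{1}{2(r-1)}, $$
where the last step uses $u_r > r$ from the upper bound. Combined with the inductive hypothesis $u_r < r + \log r$, the step reduces to
$$ \log\!\left(1 + \frac{1}{r}\right) \;\geq\; \frac{1}{2(r-1)} \qquad (r \geq 3). $$

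Verifying this last inequality is the main obstacle, but it is handled cleanly by the Padé-type bound $\log(1 + x) \geq 2x/(2+x)$ for $x \geq 0$, which follows from the fact that $f(x) := (2+x)\log(1+x) - 2x$ satisfies $f(0) = f'(0) = 0$ and $f''(x) = x/(1+x)^2 \geq 0$. Setting $x = 1/r$ gives $\log(1 + 1/r) \geq 2/(2r+1)$, and $2/(2r+1) \geq 1/(2(r-1))$ simplifies to $r \geq 5/2$, which holds for all integer $r \geq 3$. This closes the induction and also explains why $r = 3$ is the natural starting point: for $r = 2$ the required increment $1/(2(r-1)) = 1/2$ already exceeds $\log(1+1/r) = \log(3/2) \approx 0.405$.
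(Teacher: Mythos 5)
Your proof is correct, and it takes a noticeably different route from the paper's. The paper works directly with $\omega_r$: it verifies the inequalities by direct calculation for $3\leq r\leq 8$ and runs the induction only for $r\geq 8$, using the monotonicity of $x\mapsto xe^{-x}$ on $[0,1]$ together with the crude bounds $\log(1+x)\leq x$, $e^{-x}\geq 1-x$ and $\log(r+1)-\log r>\frac{1}{r+1}$; the threshold $r\geq 8$ is exactly what is needed to make the resulting error term ($\log r - 2$) positive. Your reparametrization $u_r=1/\omega_r$ converts the recursion into the nearly additive $u_{r+1}=u_re^{1/u_r}$, where the claim becomes $r<u_r<r+\log r$; the easy direction $u_{r+1}\geq u_r+1$ gives $u_r>r$ at once, and the sharper elementary bounds $e^x\leq 1+x+\frac{x^2}{2(1-x)}$ and the Pad\'e-type inequality $\log(1+x)\geq \frac{2x}{2+x}$ let the other induction close already at $r=3$, with a single numerical base case $u_3=e^{1+1/e}<3+\log 3$. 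I checked the details: the reduction of the inductive step to $\log(1+\frac1r)\geq\frac{1}{2(r-1)}$ is right, the convexity argument for the Pad\'e bound is right, and using the two bounds on $u_r$ in different places (the hypothesis $u_r<r+\log r$ in the additive term, $u_r>r$ in the denominator $2(u_r-1)$) is legitimate, so no monotonicity of $u\mapsto ue^{1/u}$ is needed. What each approach buys: the paper's bounds are blunter but the price is only a finite computational check over $3\leq r\leq 8$, whereas your version trades that for two slightly sharper (but still standard) inequalities and is fully verifiable by hand from $r=3$ on, which arguably makes it the cleaner write-up.
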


\begin{proof}{Proof.}
For $3\leq r \leq 8$ the inequalities can be verified by direct calculation. Beyond this, we rely on induction:
assume the result for a given $r\geq8$ and consider $\omega_{r+1}$. Observe that the function $x\mapsto xe^{-x}$
is monotone increasing on $[0,1]$: this gives us
$$ \omega_{r+1} \;=\; \omega_r e^{-\omega_r} \;<\; \frac{e^{-1/r}}{r}
 \;=\; \frac{1}{r+1} \exp\left(\log\left(1+\frac1r\right) - \frac1r\right)
 \;\leq\; \frac{1}{r+1},
  $$
via the well-known inequality $\log(1+x)\leq x$.
Also,
\begin{eqnarray*}
\omega_{r+1} \;=\; \omega_r e^{-\omega_r}
&>& \frac{e^{-1/(r+\log r)}}{r+\log r} \\
&\geq& \frac{1}{r+1+\log(r+1)}\left(1 + \frac{1+\log(r+1)-\log r}{r+\log r}\right)\left(1 - \frac{1}{r+\log r}\right),
\end{eqnarray*}
via the well-known inequality $e^{-x}\geq1-x$. Thus
\begin{eqnarray*}
\omega_{r+1}
&>& \frac{1}{r+1+\log(r+1)}\left(1 + \frac{(r-1+\log r)(\log(r+1)-\log r)-1}{(r+\log r)^2}\right) \\
&>& \frac{1}{r+1+\log(r+1)}\left(1 + \frac{-2+\log r}{(r+1)(r+\log r)^2}\right),
\end{eqnarray*}
since 
$$\log(r+1)-\log r=\int_r^{r+1}t^{-1}\, dt>\frac{1}{r+1}.$$
For $r\geq8$ we have $\log r > 2$ and so the result follows.
\hfill \qedsymbol
\end{proof}

We can now state our main result on the asymptotics of naive Boston.

\begin{theorem}[Number of agents remaining]
\label{thm:naive_asym_agent_numbers}
Consider the naive Boston algorithm. Fix $r\geq1$ and a relative position $\theta\in[0,1]$.
Then the number $N_n(r,\theta)$ of members of $A_n(\theta)$ present at round $r$ satisfies
$$ \frac1n N_n(r,\theta) \cvginprob z_r(\theta).
  $$
where $z_1(\theta)=\theta$ and
\begin{equation}
\label{eq:zr_recursion}
 z_{r+1}(\theta) = z_r(\theta)  - \left(1 - e^{-z_r(\theta)}\right)\omega_r \qquad\text{ for $r\geq1$.}
\end{equation}
In particular, the total number $N_n(r)$ of agents (and of items) present at round $r$ satisfies
$$ \frac1n N_n(r) \cvginprob z_r(1) = \omega_r .
  $$
\end{theorem}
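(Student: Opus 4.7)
The plan is to prove the result by induction on the round number $r$. The base case $r=1$ is immediate: every agent is present at round 1, so $N_n(1,\theta) = |A_n(\theta)| = \lfloor \theta n\rfloor$, giving $\frac1n N_n(1,\theta) \to \theta = z_1(\theta)$; in particular $z_1(1)=1=\omega_1$.

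For the inductive step, suppose the convergence holds at round $r$ for every $\theta\in[0,1]$ (including $\theta=1$, which controls the total number of survivors). Let $\F_n$ be the $\sigma$-field generated by the entire bidding history through the end of round $r-1$, so that both $N_n(r)$ and $N_n(r,\theta)$ are $\F_n$-measurable. Because $A_n(\theta)$ is a prefix of $\rho$ and the tiebreak order $\rho$ is used uniformly throughout, the members of $A_n(\theta)$ surviving to round $r$ form a prefix of the round-$r$ bidding order, of length $N_n(r,\theta)$. Conditional on $\F_n$, I would then apply Lemma~\ref{lem:Aprime} with $m=n-r+1$ (the items any given agent has not yet bid for), $\ell=N_n(r)$ (blue items, those still available), and $A=\{1,2,\ldots,N_n(r,\theta)\}$. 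Writing $C_n$ for the number of agents from $A_n(\theta)$ matched in round $r$, this yields
$$ E[C_n\mid\F_n] \;=\; \frac{N_n(r)}{n-r+1}\sum_{j=1}^{N_n(r,\theta)}\left(1-\frac{1}{n-r+1}\right)^{j-1} \;=\; N_n(r)\left(1 - \left(1-\tfrac{1}{n-r+1}\right)^{N_n(r,\theta)}\right), $$
together with the variance bound $\hbox{Var}(C_n\mid\F_n)\leq E[C_n\mid\F_n]$.

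Now I would use the inductive hypothesis and continuous mapping: since $N_n(r)/n\cvginprob\omega_r$ and $N_n(r,\theta)/n\cvginprob z_r(\theta)$, and $(1-1/(n-r+1))^{N_n(r,\theta)}\cvginprob e^{-z_r(\theta)}$, we obtain $\frac1n E[C_n\mid\F_n]\cvginprob \omega_r(1-e^{-z_r(\theta)})$. Lemma~\ref{lem:CC} then gives $\frac1n C_n\cvginprob \omega_r(1-e^{-z_r(\theta)})$. Since $N_n(r+1,\theta)=N_n(r,\theta)-C_n$, we conclude
$$ \frac{1}{n}N_n(r+1,\theta) \;\cvginprob\; z_r(\theta) - \omega_r\bigl(1-e^{-z_r(\theta)}\bigr) \;=\; z_{r+1}(\theta), $$
completing the induction. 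The final claim $z_r(1)=\omega_r$ is a separate induction using the $\omega_r$-recursion: if $z_r(1)=\omega_r$, then $z_{r+1}(1)=\omega_r - \omega_r(1-e^{-\omega_r})=\omega_r e^{-\omega_r}=\omega_{r+1}$.

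The main obstacle I expect is a bookkeeping one rather than a conceptual one: verifying cleanly that $A_n(\theta)\cap R_n(r)$ really is a prefix of the round-$r$ bidding order (so that the positions indexing the geometric sum match those in Lemma~\ref{lem:Aprime}), and controlling the $o(1)$ corrections when passing from $(1-1/(n-r+1))^{N_n(r,\theta)}$ to $e^{-z_r(\theta)}$ in probability. The latter is routine once one notes that $x\mapsto(1-1/m)^{xn}$ converges uniformly on compact sets to $e^{-x}$ as $m,n\to\infty$ with $m/n\to1$, which makes the continuous-mapping step immediate.
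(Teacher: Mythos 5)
Your proposal is correct and follows essentially the same route as the paper: induction on $r$, conditioning on the history before round $r$, applying Lemma~\ref{lem:Aprime} with $m=n-r+1$, $\ell=N_n(r)$ and the prefix $A=\{1,\ldots,N_n(r,\theta)\}$, then combining the geometric-sum expression for the conditional mean with the inductive hypothesis and Lemma~\ref{lem:CC}. The only cosmetic difference is that you spell out the prefix observation and the separate induction giving $z_r(1)=\omega_r$, both of which the paper leaves implicit.
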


Some of the functions $z_r(\theta)$ are illustrated in Figure \ref{fig:Boston_agent_numbers}.
Note that agents with an earlier position in $\rho$ are more likely to exit in the early rounds.
A consequence is that the position of an unsuccessful agent relative to other unsuccessful agents
tends to improve each time he fails to claim an item.

\begin{figure}[hbtp]
\centering
\includegraphics[width=0.8\textwidth]{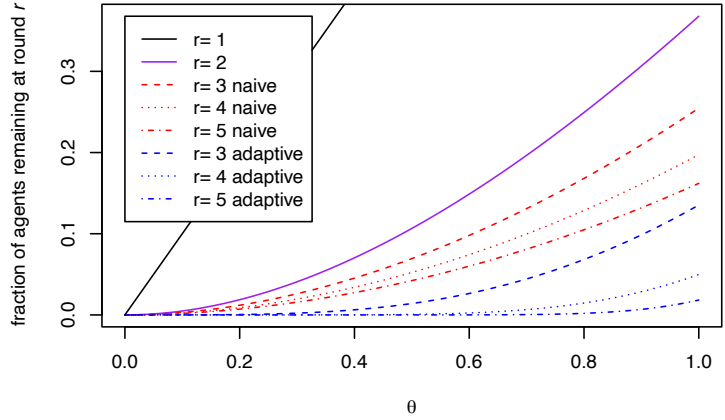}
\caption{The limiting fraction of the agents who have relative position $\theta$ or better and survive to participate in the $r$th round.}
\label{fig:Boston_agent_numbers}
\end{figure}

A better understanding of the functions $z_r(\theta)$ is given by the following result.

\begin{theorem}
\label{thm:naive_zr}
The functions $z_r(\theta)$ satisfy $z_r(\theta)=\int_0^\theta z'_r(\phi)\;d\phi$, where
\begin{eqnarray}
\label{eq:zr_prime_recursion}
z'_r(\theta) &=& \prod_{k=1}^{r-1} f_k(\theta) \qquad\qquad\hbox{ for $r\geq2$, and }\qquad z'_1(\theta)=1\\
f_r(\theta) &=& 1 - \omega_r \exp\left(-z_r(\theta)\right) .
\end{eqnarray}
\end{theorem}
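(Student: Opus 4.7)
The plan is to verify the formula by straightforward induction on $r$, using only the recursion (\ref{eq:zr_recursion}) that defines $z_r(\theta)$, together with the fundamental theorem of calculus. There is no probabilistic content in this theorem; it is purely an analytic statement about the sequence of functions produced by the recursion, so all that is needed is to check smoothness and then differentiate.

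First I would establish the boundary condition $z_r(0)=0$ for every $r$. This follows by induction: $z_1(0)=0$ by definition, and if $z_r(0)=0$ then (\ref{eq:zr_recursion}) gives $z_{r+1}(0)=0-(1-e^0)\omega_r=0$. Next I would observe that each $z_r$ is a smooth (in fact $C^\infty$) function of $\theta$ on $[0,1]$. This is also an induction: $z_1(\theta)=\theta$ is smooth, and $z_{r+1}$ is obtained from $z_r$ by applying the smooth map $x\mapsto x-(1-e^{-x})\omega_r$, preserving smoothness.

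With smoothness in hand, differentiating (\ref{eq:zr_recursion}) in $\theta$ gives
\[
z_{r+1}'(\theta)\;=\;z_r'(\theta)\;-\;\omega_r e^{-z_r(\theta)}\,z_r'(\theta)\;=\;z_r'(\theta)\bigl(1-\omega_r e^{-z_r(\theta)}\bigr)\;=\;z_r'(\theta)\,f_r(\theta).
\]
Since $z_1'(\theta)=1$, an immediate induction yields the product formula $z_r'(\theta)=\prod_{k=1}^{r-1}f_k(\theta)$ for $r\geq 2$. Finally, combining this with $z_r(0)=0$ and the fundamental theorem of calculus gives $z_r(\theta)=\int_0^\theta z_r'(\phi)\,d\phi$, which is the stated result.

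There is really no substantial obstacle here; the theorem is a direct consequence of differentiating the defining recursion. The only point worth being careful about is checking that the recursion preserves smoothness so that term-by-term differentiation is legitimate, and that $z_r(0)=0$ so that the integral starts at $0$. Both are handled by the same one-line inductions sketched above.
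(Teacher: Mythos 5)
Your proof is correct and follows the same route as the paper, which simply says ``differentiate (\ref{eq:zr_recursion}) with respect to $\theta$''; you have merely filled in the routine details (the boundary condition $z_r(0)=0$, smoothness, and the inductive product formula) that the paper leaves implicit.
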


\begin{proof}{Proof.}
Differentiate (\ref{eq:zr_recursion}) with respect to $\theta$.
Alternatively, integrate (\ref{eq:zr_prime_recursion}) by parts.
\qedsymbol
\end{proof}

\smallbreak

The quantity $f_r(\theta)$ can be interpreted (in a sense to be made precise later) as the conditional probability
that an agent with relative position $\theta$, if present at round $r$, is unmatched at that round.
The quantity $z'_r(\theta)$ can then be interpreted as the probability that an agent with relative position $\theta$
is still unmatched at the beginning of round $r$.
For the particular case of the last agent, we may note that $f_r(1) = 1-\omega_{r+1}$.
Other quantities for the first few rounds are shown in Table \ref{t:naive_limits}.


\begin{table}[hbtp]
\centering
\begin{tabular}{|l|c|c|c|c|}
\hline
meaning at round $r$ & quantity & $r=1$ & $r=2$ & $r=3$\\
\hline
Fraction of all agents: &&&&\\
$\bullet$ present & $\omega_r$ & $1$ & $e^{-1}\approx0.3679$ & $\exp(-1-e^{-1})\approx0.2546$\\
$\bullet$ in $A_n(\theta)$ and present & $z_r(\theta)$ & $\theta$ & $\theta+e^{-\theta}-1$
    & $\theta+e^{-\theta}-1-e^{-1}+\exp(-\theta-e^{-\theta})$\\
For an agent with relative position $\theta$: &&&&\\
$\bullet$ P(present) & $z'_r(\theta)$ & $1$ & $1-e^{-\theta}$
    & $(1-e^{-\theta})(1-\exp(-\theta-e^{-\theta}))$\\
$\bullet$ P(unmatched|present) & $f_r(\theta)$ & $1-e^{-\theta}$ & $1-\exp(-\theta-e^{-\theta})$
    & $1-\exp(-\theta-e^{-\theta}-\exp(-\theta-e^{-\theta}))$\\
\hline
\end{tabular}
\vspace{10pt}
\caption{Limiting quantities as $n\to\infty$ for the early rounds of the Naive Boston algorithm.}
\label{t:naive_limits}
\end{table}

\begin{theorem}
\label{thm:zbounds}
For $r\geq2$ and $0\leq\theta\leq1$,
\begin{equation}
\label{eq:zbound1}
 c_1 \omega_r (1-e^{-\theta}) \;\leq\; z'_r(\theta) \;\leq\; c_2 \omega_r (1-e^{-\theta})
\end{equation} 
and
\begin{equation}
\label{eq:zbound2}
 c_1 \omega_r (\theta + e^{-\theta} - 1) \;\leq\; z_r(\theta) \;\leq\; c_2 \omega_r (\theta + e^{-\theta} - 1)
\end{equation}
where the constants $c_1=e-1\approx1.718$ and $c_2=\exp(1+e^{-1})\approx2.927$.
\end{theorem}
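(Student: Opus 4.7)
The bounds (\ref{eq:zbound2}) on $z_r$ follow from those on $z'_r$ by integration, using $z_r(\theta)=\int_0^\theta z'_r(\phi)\,d\phi$ (Theorem~\ref{thm:naive_zr}) and $\theta+e^{-\theta}-1=\int_0^\theta(1-e^{-\phi})\,d\phi$. So I focus on proving (\ref{eq:zbound1}).

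For $r\geq2$, Theorem~\ref{thm:naive_zr} gives $z'_r(\theta)=\prod_{k=1}^{r-1}f_k(\theta)=(1-e^{-\theta})\prod_{k=2}^{r-1}f_k(\theta)$ with $f_k(\theta)=1-\omega_k e^{-z_k(\theta)}$. Since $z_k$ is nondecreasing in $\theta$ with $z_k(0)=0$ and $z_k(1)=\omega_k$, one has $0\leq z_k(\theta)\leq\omega_k$ and hence $1-\omega_k\leq f_k(\theta)\leq 1-\omega_{k+1}$. Iterating $\omega_{k+1}=\omega_k e^{-\omega_k}$ yields the identity $\omega_r=\prod_{k=1}^{r-1}e^{-\omega_k}$, so (\ref{eq:zbound1}) is equivalent to
\[
c_1 \;\leq\; e\prod_{k=2}^{r-1}(1-\omega_k)e^{\omega_k} \qquad\text{and}\qquad e\prod_{k=2}^{r-1}(1-\omega_{k+1})e^{\omega_k} \;\leq\; c_2.
\]

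For the upper bound, $\log(1-\omega_{k+1})\leq-\omega_{k+1}$ yields $\log\bigl[(1-\omega_{k+1})e^{\omega_k}\bigr]\leq\omega_k-\omega_{k+1}$; summing from $k=2$ to $r-1$ telescopes to $\omega_2-\omega_r\leq\omega_2=e^{-1}$, so the product is at most $e^{1/e}$ and the overall bound is $e\cdot e^{1/e}=e^{1+e^{-1}}=c_2$.

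For the lower bound, $(1-x)e^x\leq 1$ makes each factor at most $1$, so the partial products are nonincreasing in $r$ and it suffices to show the infinite product $L:=\prod_{k=2}^\infty(1-\omega_k)e^{\omega_k}$ exceeds $(e-1)/e$. The plan is to combine two ingredients: (i) the series estimate $-\log(1-x)-x=\sum_{j\geq2}x^j/j\leq x^2/[2(1-x)]$, applied with $x=\omega_k$ together with $1/(1-\omega_k)\leq e/(e-1)$ for $k\geq2$; and (ii) the elementary inequality $1-e^{-x}\geq x/(1+x)$, which applied to $x=\omega_k$ gives $\omega_k^2\leq(1+\omega_k)(\omega_k-\omega_{k+1})\leq(1+e^{-1})(\omega_k-\omega_{k+1})$, whose telescoping sum is $\sum_{k\geq2}\omega_k^2\leq(e+1)/e^2$. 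Combining (i) and (ii) produces $\log L\geq-(e+1)/[2e(e-1)]$, and a routine numerical check confirms this exceeds $\log((e-1)/e)$, establishing $L\geq(e-1)/e$ and hence the constant $c_1=e-1$. The main obstacle is precisely this lower bound: the infinite product has no closed form, and $c_1$ is close enough to the true limit that the two estimates must be arranged rather carefully, with a concluding numerical comparison in place of a purely algebraic identity.
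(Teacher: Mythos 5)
Your proof is correct, and it is worth separating its two halves when comparing with the paper. The setup is the same: both arguments start from $z'_r(\theta)=(1-e^{-\theta})\prod_{k=2}^{r-1}f_k(\theta)$ and the sandwich $1-\omega_k\le f_k(\theta)\le 1-\omega_{k+1}$ obtained from $0=z_k(0)\le z_k(\theta)\le z_k(1)=\omega_k$, and the bounds on $z_r$ follow by integration in both. Your upper bound is in substance the paper's as well: the paper shows $U_r=\omega_{r+1}^{-1}\prod_{k=3}^r(1-\omega_k)$ is decreasing via $(1-\omega_r)e^{\omega_r}<1$, which is exactly your $\log(1-x)\le-x$ step; your telescoping $\sum_{k=2}^{r-1}(\omega_k-\omega_{k+1})\le\omega_2$ yields the identical constant $c_2=e^{1+e^{-1}}$, arguably a bit more directly. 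The genuine divergence is the lower bound. The paper gets it exactly: with $L_r=\omega_r^{-1}\prod_{k=2}^r(1-\omega_k)$ it observes $L_{r+1}/L_r=e^{\omega_r}(1-\omega_{r+1})=e^{\omega_r}-\omega_r>1$ (using $\omega_{r+1}=\omega_r e^{-\omega_r}$), so $L_r\ge L_2=e-1$ and $c_1$ appears with no estimation or numerics. You instead bound the deficit of $\prod_{k\ge2}(1-\omega_k)e^{\omega_k}$ by $\exp\bigl(-\tfrac{e}{2(e-1)}\sum_{k\ge2}\omega_k^2\bigr)$, control $\sum_{k\ge2}\omega_k^2\le(1+e^{-1})\omega_2=(e+1)/e^2$ by telescoping, and finish with the numerical comparison $\tfrac{e+1}{2e(e-1)}\approx0.398<1-\log(e-1)\approx0.459$. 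That comparison is valid, and with more room than you suggest: your estimate gives the infinite product at least $e^{-0.398}\approx0.67$ against the required $1-e^{-1}\approx0.632$, so the lower bound is sound, just cruder than the paper's exact monotone-ratio identity and dependent on a closing numerical check. One cosmetic point: the two product inequalities you display are what the sandwich reduces (\ref{eq:zbound1}) to (they suffice for it), not literally equivalent to it.
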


\begin{proof}{Proof.}
It is enough to show (\ref{eq:zbound1}); (\ref{eq:zbound2}) then follows by integration.
From (\ref{eq:zr_prime_recursion}) we have
$$ f_1(\theta) \prod_{k=2}^{r-1} \left(1-\omega_r e^{-z_r(0)}\right)
\;\leq\; z'_r(\theta)
\;\leq\; f_1(\theta) \prod_{k=2}^{r-1} \left(1-\omega_r e^{-z_r(1)}\right)
  $$
since $z_r(\theta)$ is increasing in $\theta$. We have $f_1(\theta)=1-e^{-\theta}$, $z_r(0)=0$, and $z_r(1)=\omega_r$, so
\begin{equation}
\label{eq:zbound3}
(1-e^{-\theta}) \prod_{k=2}^{r-1} \left(1-\omega_r\right)
\;\leq\; z'_r(\theta)
\;\leq\; (1-e^{-\theta}) \prod_{k=3}^{r} \left(1-\omega_r\right) .
\end{equation}
Let $L_r=\omega_r^{-1}\prod_{k=2}^{r} \left(1-\omega_r\right)$ for all $r\geq1$.
This is an increasing sequence,
since $L_{r+1}/L_r = e^{\omega_r}(1-\omega_{r+1}) = e^{\omega_r} - \omega_r > 1$.
Hence, $L_r\geq L_2=\omega_2^{-1}(1-\omega_2)=e-1$ for all $r\geq2$;
that is, $\prod_{k=2}^{r} \left(1-\omega_r\right)\geq(e-1)\omega_r$ for $r\geq2$.
The lower bound in (\ref{eq:zbound3}) can thus be replaced by
$(e-1) \omega_{r-1} (1-e^{-\theta}) \;\leq\; z'_r(\theta)$ when $r\geq3$.
Since $\omega_{r-1}>\omega_r$, we obtain the lower bound in (\ref{eq:zbound1}) for $r\geq3$,
and we may verify directly that $z'_2(\theta)=1-e^{-\theta}$ satisfies this bound also.

A similar argument suffices for the upper bounds.
Let $U_r=\omega_{r+1}^{-1}\prod_{k=3}^r \left(1-\omega_r\right)$ for $r\geq2$.
This is a decreasing sequence,
since $U_r/U_{r-1}=e^{\omega_r}(1-\omega_r)<e^{\omega_r}e^{-\omega_r}=1$.
Hence, $U_r\leq U_2=\omega_3^{-1}$ for all $r\geq2$;
that is, $\prod_{k=3}^r \left(1-\omega_r\right) \leq \omega_3^{-1}\omega_{r+1}$ for $r\geq2$.
The upper bound in (\ref{eq:zbound3}) can thus be replaced by
$z'_r(\theta) \;\leq\; \omega_3^{-1}\omega_{r+1} (1-e^{-\theta})$ for $r\geq2$.
The constant $\omega_3^{-1}=\omega_2^{-1}e^{\omega_2}=\exp(1+e^{-1})$.
Since $\omega_{r+1}<\omega_r$, we obtain the upper bound in (\ref{eq:zbound1}).
\qedsymbol
\end{proof}

\begin{proof}{Proof of Theorem \ref{thm:naive_asym_agent_numbers}.}
Induct on $r$. For $r=1$, the result is immediate because $N_n(1,\theta)=\lfloor n\theta\rfloor$.
Now fix $r\geq1$ and assume the result for round $r$.
Let $\F_r$ be the $\sigma$-field generated by events prior to round $r$. Conditional on $\F_r$,
we have the situation of Lemma \ref{lem:Aprime}: there are $N_n(r)$ available items and
$N_n(r,\theta)$ agents of $A_n(\theta)$ who will be the first to attempt to claim them, with the
agents' bids chosen iid uniform from a larger pool of $n-r+1$ items.
Letting $S_n$ denote the number of these agents whose bids are successful, Lemma \ref{lem:Aprime} gives
$$ \hbox{Var}(S_n|\F_r) \;\leq\; E[S_n|\F_r]
\;=\; \frac{N_n(r)}{n-r+1} \sum_{a=1}^{N_n(r,\theta)} \left(1 - \frac{1}{n-r+1}\right)^{a-1} .
  $$
Summing the geometric series,
$$ E[S_n|\F_r] \;=\; 
  N_n(r) \left(1-\left(1-\frac{1}{n-r+1}\right)^{N_n(r,\theta)}\right) .
  $$
It then follows by the inductive hypothesis that
$$ \frac1n E[S_n|\F_r] \;\cvginprob\; \omega_r \left(1 - e^{-z_r(\theta)}\right) \qquad\hbox{ as $n\to\infty$.}
  $$
By Lemma \ref{lem:CC},
$$ \frac1n S_n \;\cvginprob\; \omega_r \left(1 - e^{-z_r(\theta)}\right) .
  $$
We have $N_n(r+1,\theta) = N_n(r,\theta) - S_n$, and so obtain
$$ \frac1n N_n(r+1,\theta) \;\cvginprob\; z_r(\theta) - \omega_r \left(1 - e^{-z_r(\theta)}\right)
= z_{r+1}(\theta).
  $$
The result follows.
\qedsymbol
\end{proof}

\begin{corollary}[limiting distribution of preference rank obtained]
\label{cor:naive_obtained}
The number $S_n(s,\theta)$ of members of $A_n(\theta)$ matched to their $s$th preference satisfies
$$ \frac1n S_n(s,\theta) \cvginprob \int_0^\theta q_s(\phi)\;d\phi .
  $$
where 
$$ q_s(\theta) = z'_s(\theta) - z'_{s+1}(\theta)  = z'_s(\theta) \omega_s e^{-z_s(\theta)}.
  $$
\end{corollary}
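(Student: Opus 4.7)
The plan is to exploit the defining property of Naive Boston: an agent who exits at round $s$ has just successfully bid for their $s$th choice, so the rank of the item obtained is exactly the exit round. Hence the count $S_n(s,\theta)$ of members of $A_n(\theta)$ matched to their $s$th preference is simply the count of agents from $A_n(\theta)$ present at round $s$ but not at round $s+1$:
$$ S_n(s,\theta) \;=\; N_n(s,\theta) - N_n(s+1,\theta). $$

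With this identification, the asymptotic claim is an immediate consequence of Theorem~\ref{thm:naive_asym_agent_numbers}. Applying that theorem to rounds $s$ and $s+1$, together with the fact that differences of convergent-in-probability sequences converge in probability to the difference of limits, yields
$$ \frac1n S_n(s,\theta) \;\cvginprob\; z_s(\theta) - z_{s+1}(\theta). $$
So the only remaining task is to identify this limit with $\int_0^\theta q_s(\phi)\,d\phi$ in both of its claimed forms.

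For the integral form, I would write $z_s(\theta) - z_{s+1}(\theta) = \int_0^\theta (z'_s(\phi) - z'_{s+1}(\phi))\,d\phi$, valid because $z_s(0) = z_{s+1}(0) = 0$ (the number of surviving agents with relative position $0$ is $0$ at every round). This gives the first equality $q_s(\theta) = z'_s(\theta) - z'_{s+1}(\theta)$.

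For the product form $q_s(\theta) = z'_s(\theta)\,\omega_s\, e^{-z_s(\theta)}$, I would use Theorem~\ref{thm:naive_zr}: since $z'_{s+1}(\theta) = z'_s(\theta)\, f_s(\theta)$ and $f_s(\theta) = 1 - \omega_s e^{-z_s(\theta)}$, it follows that
$$ z'_s(\theta) - z'_{s+1}(\theta) \;=\; z'_s(\theta)\bigl(1 - f_s(\theta)\bigr) \;=\; z'_s(\theta)\, \omega_s\, e^{-z_s(\theta)}, $$
as required. There is no serious obstacle here; the only content of the corollary beyond Theorem~\ref{thm:naive_asym_agent_numbers} is the round/rank identification special to Naive Boston, and the bookkeeping that connects the recursion for $z_r$ to the product formula for $z'_r$.
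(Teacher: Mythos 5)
Your proposal is correct and follows the paper's own argument: the paper's proof likewise notes that in Naive Boston an agent matched to the $s$th preference is exactly one present at round $s$ but not $s+1$, and then invokes Theorem~\ref{thm:naive_asym_agent_numbers}. The algebraic identifications you spell out (via Theorem~\ref{thm:naive_zr} and $f_s(\theta)=1-\omega_s e^{-z_s(\theta)}$) are left implicit in the paper but are exactly the intended bookkeeping.
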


\begin{proof}{Proof.}
An agent is matched to his $s$th preference if, and only if, he is present at round $s$ but not at round $s+1$.
The result follows by Theorem \ref{thm:naive_asym_agent_numbers}.
\qedsymbol
\end{proof}

The limiting functions $q_s(\theta)$ are illustrated in Figure \ref{fig:naive_exit_round}. For example, 
an agent at relative position $1/2$ has 
probability over 78\% of exiting at the first round while the last agent has corresponding probability just under $37\%$.

\begin{figure}[hbtp]
\centering
\includegraphics[width=0.8\textwidth]{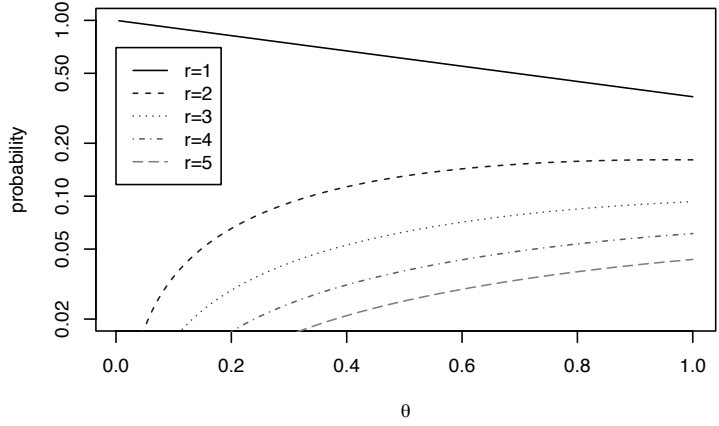}
\caption{The limiting probability that an agent exits the naive Boston mechanism at the $r$th round
(and so obtains his $r$th preference), as a function of the agent's initial relative position $\theta$.
Logarithmic scale on vertical axis.}
\label{fig:naive_exit_round}
\end{figure}

\subsection{Individual agents}
\label{ss:naive_individual}

Theorem \ref{thm:naive_asym_agent_numbers} and Corollary \ref{cor:naive_obtained} are concerned with
the outcomes achieved by the agent population collectively, and will be used in  Section~\ref{s:welfare} to say
something about utilitarian welfare.

Suppose, though, that our interest lies with individual agents.
It is tempting to informally ``differentiate" the result of
Theorem \ref{thm:naive_asym_agent_numbers} with respect to $\theta$, and thereby draw conclusions
about the fate of a single agent. The following result puts those conclusions on a sound footing.

\begin{theorem}[exit time of individual agent]
\label{thm:naive_individual_asymptotics}
Consider the naive Boston algorithm. Fix $r\geq1$ and a relative position $\theta\in[0,1]$.
Let $R_n(\theta)$ denote the round number at which the agent $a_n(\theta)$
(the last agent with relative position at most $\theta$) is matched.
Equivalently, $R_n(\theta)$ is the preference rank of the item obtained by this agent.
Then 
$$P(R_n(\theta)\geq r)\to z'_r(\theta) \quad \text{as $n\to\infty$}.
$$
\end{theorem}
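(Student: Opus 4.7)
The plan is to proceed by induction on $r$. The base case $r=1$ is immediate, since every agent is present at round $1$ and $z'_1(\theta)=1$. Writing $E_r=\{R_n(\theta)\geq r\}$, the inductive step rests on the identity $P(E_{r+1})=P(E_r)\,P(E_{r+1}\mid E_r)$, so the task reduces to showing that the conditional probability that $a_n(\theta)$ is matched at round $r$ tends to $\omega_r e^{-z_r(\theta)}$, yielding $P(E_{r+1}\mid E_r)\to f_r(\theta)=1-\omega_r e^{-z_r(\theta)}$.

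The key structural observation is this: if an agent is present at round $r$, then every item in his bid history is unavailable at round $r$. Indeed any such past bid was either for a then-already-unavailable item (which remains so) or for a then-available item that was consequently assigned---either to the bidder himself (contradicting his presence at round $r$) or to an earlier competitor. Consequently, conditional on $\F_r$ (the $\sigma$-field of events prior to round $r$) and on $E_r$, agent $a_n(\theta)$'s round-$r$ bid is uniform over the $n-r+1$ items outside his history, and so falls in the available set $B_n(r)$ with probability $N_n(r)/(n-r+1)$. Applying the same observation to any other still-present agent $a'$ shows that $B_n(r)$ is disjoint from $a'$'s history as well, so given that $a_n(\theta)$ has bid for a specific $i^* \in B_n(r)$, each such $a'$ independently bids for $i^*$ with probability $1/(n-r+1)$. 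Agent $a_n(\theta)$ is matched at round $r$ iff his bid is for an available item and no earlier-in-$\rho$ agent still present also bids for it; the set of such competitors is $A_n(\theta) \setminus \{a_n(\theta)\}$ intersected with the present agents, of size $N_n(r,\theta)-1$. Combining gives
$$
P(\text{matched at round }r \mid \F_r, E_r)=\frac{N_n(r)}{n-r+1}\left(1-\frac{1}{n-r+1}\right)^{N_n(r,\theta)-1}.
$$

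To close the induction, I invoke Theorem~\ref{thm:naive_asym_agent_numbers}, which gives $\frac1n N_n(r)\cvginprob \omega_r$ and $\frac1n N_n(r,\theta)\cvginprob z_r(\theta)$ unconditionally. Since the inductive hypothesis gives $P(E_r)\to z'_r(\theta)>0$ for $\theta>0$, these in-probability convergences automatically transfer to the conditional measure $P(\cdot\mid E_r)$ via the elementary bound $P(\,\cdot\mid E_r)\leq P(\,\cdot\,)/P(E_r)$. The integrand in the display is bounded by $1$, so bounded convergence yields $P(\text{matched at round }r \mid E_r)\to \omega_r e^{-z_r(\theta)}$, hence $P(E_{r+1})\to z'_r(\theta)\,f_r(\theta)=z'_{r+1}(\theta)$.

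I expect the main obstacle to be the careful bookkeeping in the conditional calculation---specifically, pinning down the fact that past bids of currently-present agents lie entirely in the unavailable set (so that the uniform-and-independent bid structure from the stochastic model in Section~\ref{s:stoch} gives a clean product formula) and identifying the tie-breaking competitor set of $a_n(\theta)$ as exactly the remaining members of $A_n(\theta)$. The boundary case $\theta=0$ needs a separate remark: there $a_n(0)$ is the first agent in $\rho$, who trivially wins his round-$1$ bid, so $R_n(0)=1$ almost surely, in agreement with $z'_r(0)=0$ for $r\geq2$.
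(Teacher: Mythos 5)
Your proof is correct and follows essentially the same route as the paper's: induction on $r$, conditioning on the pre-round-$r$ information to obtain the exact survival probability $1-\frac{N_n(r)}{n-r+1}\left(1-\frac{1}{n-r+1}\right)^{N_n(r,\theta)-1}$ (which the paper gets by applying Lemma~\ref{lem:Aprime} to the single agent $a_n(\theta)$), then passing to the limit via Theorem~\ref{thm:naive_asym_agent_numbers}. The only cosmetic difference is that you condition on the event $\{R_n(\theta)\geq r\}$ and transfer convergence in probability to the conditional measure (hence your separate $\theta=0$ remark, needed since $z'_r(0)=0$ for $r\geq2$), whereas the paper writes $P(R_n(\theta)\geq r+1)=E\left[1_{R_n(\theta)\geq r}Y_n\right]$ and uses ${\mathcal L}_1$ convergence of the bounded $Y_n$, which handles all $\theta\in[0,1]$ uniformly.
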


\begin{remark}
\label{rem:naive_qs}
The result of Theorem \ref{thm:naive_individual_asymptotics} could equivalently be stated as 
$$ P(R_n(\theta)=r)\to q_r(\theta)
  $$
where $q_r(\theta)$ is as in Corollary \ref{cor:naive_obtained}.
Note that $\sum_{r=1}^{\infty} q_r(\theta)=1$, consistent with the role of
$q_r(\theta)$ as an asymptotic probability.
\end{remark}

\begin{remark}
Theorem \ref{thm:naive_individual_asymptotics} tells us that an agent with fixed relative
position $\theta$ has a good chance of obtaining one of his first few preferences,
even if $n$ is large. This is even true of the very last agent ($\theta=1$). 
Figure~\ref{fig:naive_exit_round} displays the limiting values. 
\end{remark}

\begin{proof}{Proof of Theorem \ref{thm:naive_individual_asymptotics}.}
The result is trivial for $r=1$. Assume the result for a given value of $r$, and let
$\F_r$ be the $\sigma$-field generated by events prior to round $r$. Conditional on $\F_r$,
we can apply Lemma \ref{lem:Aprime} to the single agent $a_n(\theta)$ to obtain
\begin{equation}
\label{eq:naive_individual_induction}
P(R_n(\theta)\geq r+1)
 \;=\; E\left[P(R_n(\theta)\geq r+1|\F_r)\right]
 \;=\; E\left[1_{R_n(\theta)\geq r} Y_n\right] ,
\end{equation} 
where
$$ Y_n \;=\; 1 \;-\; \left(1 - \frac{1}{n-r+1}\right)^{N_n(r,\theta)-1} \left(\frac{N_n(r)}{n-r+1}\right) .
  $$
Observe that $Y_n\cvginprob 1 - \omega_r e^{-z_r(\theta)} = f_r(\theta)$ from Theorem \ref{thm:naive_asym_agent_numbers}.
Equation (\ref{eq:naive_individual_induction}) gives
$$ P(R_n(\theta)\geq r+1) \;-\; z'_{r+1}(\theta)
 \;=\; E\left[1_{R_n(\theta)\geq r}(Y_n - f_r(\theta))\right]
   \;+\; E\left[1_{R_n(\theta)\geq r} - z'_r(\theta)\right] f_r(\theta) .
  $$
The second term converges to 0 as $n\to\infty$ by the inductive hypothesis.
For the first term, note that the convergence $Y_n - f_r(\theta) \cvginprob 0$ is also
convergence in ${\mathcal L}_1$ by Theorem 4.6.3 in \cite{Durrett}, and so
$1_{R_n(\theta)\geq r}(Y_n - f_r(\theta))\to 0$ in ${\mathcal L}_1$ also.
\hfill \qedsymbol
\end{proof}

\section{Adaptive Boston}
\label{s:adaptive_asymptotics}

We again begin with results about initial segments of the queue of agents, and follow up with results about individual agents.

\subsection{Groups of agents}
\label{ss:adaptive_all}

A simple stochastic model of IC bidding for the adaptive Boston mechanism can be similar to the naive case.
At the beginning of the $r$th round, each remaining agent randomly chooses an item as his next preference for
which to bid; the bid is successful, and the agent matched to that item, if no other agent with an earlier
position in the order $\rho$ bids for the same item. But, whereas a naive-Boston participant chooses
from the set of $n-r+1$ items for which he has not already bid, the adaptive-Boston participant chooses from
a smaller set: the $N_n(r)$ items actually still available at the beginning of the round.
This model allows a result analogous to Theorem \ref{thm:naive_asym_agent_numbers}.

\begin{theorem}[Number of agents remaining]
\label{thm:adaptive_asym_agent_numbers}
Consider the adaptive Boston algorithm. Fix $r\geq1$ and a relative position $\theta\in[0,1]$.
Then the number $N_n(r,\theta)$ of members of $A_n(\theta)$ present at round $r$ satisfies
$$ \frac1n N_n(r,\theta) \cvginprob y_r(\theta).
  $$
where $y_1(\theta)=\theta$ and
\begin{equation}
\label{eq:yr_recursion}
y_{r+1}(\theta) = y_r(\theta) - e^{1-r}\left(1 - \exp\left(-e^{r-1}y_r(\theta)\right)\right) \qquad\text{ for $r\geq1$.}
\end{equation}
In particular, the total number $N_n(r)$ of agents (and of items) present at round $r$ satisfies
$$ \frac1n N_n(r) \cvginprob y_r(1) = e^{1-r} .
  $$
\end{theorem}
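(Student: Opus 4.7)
The plan is to mirror the proof of Theorem \ref{thm:naive_asym_agent_numbers}, but invoking Lemma \ref{lem:A} (rather than Lemma \ref{lem:Aprime}) to reflect the fact that in Adaptive Boston each surviving agent's bid at round $r$ is uniform on the $N_n(r)$ items still available, not on a larger fixed pool of $n-r+1$ items. The equivalence noted in Section \ref{s:stoch} (rejection sampling the agent's unconsidered preferences until an available item appears) is what allows us to invoke Lemma \ref{lem:A} directly with $m=N_n(r)$.

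Proceed by induction on $r$. The base case $r=1$ is immediate, since both algorithms coincide in the first round and $N_n(1,\theta)=\lfloor n\theta\rfloor$. For the inductive step, fix $r\geq 1$ and assume the result for round $r$. Let $\F_r$ be the $\sigma$-field generated by events prior to round $r$. Conditional on $\F_r$, we are in the setting of Lemma \ref{lem:A} with $m=N_n(r)$: the members of $A_n(\theta)$ still present at round $r$ occupy the first $N_n(r,\theta)$ slots in the $\rho$-ordering of remaining bidders, and each bids uniformly on the $N_n(r)$ available items. Let $S_n$ be the number of these whose bids succeed. Lemma \ref{lem:A} gives $\hbox{Var}(S_n|\F_r)\leq E[S_n|\F_r]$, and summing the geometric series yields
\[
 E[S_n|\F_r] \;=\; N_n(r)\left(1-\left(1-\frac{1}{N_n(r)}\right)^{N_n(r,\theta)}\right).
\]

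By the inductive hypothesis, $N_n(r)/n\cvginprob e^{1-r}$ and $N_n(r,\theta)/n\cvginprob y_r(\theta)$, so $(1-1/N_n(r))^{N_n(r,\theta)}\cvginprob\exp(-e^{r-1}y_r(\theta))$, and hence
\[
 \frac{1}{n} E[S_n|\F_r] \;\cvginprob\; e^{1-r}\left(1-\exp(-e^{r-1}y_r(\theta))\right).
\]
Lemma \ref{lem:CC} then transfers this to convergence in probability for $S_n/n$ itself. Since $N_n(r+1,\theta)=N_n(r,\theta)-S_n$, subtraction gives $n^{-1}N_n(r+1,\theta)\cvginprob y_{r+1}(\theta)$, closing the induction. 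The closed form $y_r(1)=e^{1-r}$ follows by a one-line induction: $y_1(1)=1$, and if $y_r(1)=e^{1-r}$ then $y_{r+1}(1)=e^{1-r}-e^{1-r}(1-e^{-1})=e^{-r}$.

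Essentially the entire substantive content is packaged inside Lemma \ref{lem:A} and Lemma \ref{lem:CC}, so no new technical obstacle arises. The only step requiring care is the justification that at round $r$ each remaining agent's bid is uniform on exactly the $N_n(r)$ available items (so that Lemma \ref{lem:A} applies with $m=N_n(r)$ rather than with the larger pool used in the naive case); this is precisely the structural feature of Adaptive Boston encoded in the model of Section \ref{s:stoch}, and it is what drives the different recursion and makes $N_n(r)/n$ decay at the $e^{-r}$ rate instead of the slower $\omega_r$ rate seen for Naive Boston.
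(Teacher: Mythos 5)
Your proposal is correct and follows essentially the same route as the paper's own proof: induction on $r$, conditioning on $\mathcal{F}_r$, applying Lemma \ref{lem:A} with $m=N_n(r)$ to the first $N_n(r,\theta)$ remaining bidders, summing the geometric series, and passing to the limit via Lemma \ref{lem:CC} before subtracting to close the induction. The only differences are cosmetic (your $S_n$ is the paper's $T_n$, and you spell out the trivial induction giving $y_r(1)=e^{1-r}$).
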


Some of the functions $y_r(\theta)$ are illustrated in Figure \ref{fig:Boston_agent_numbers}.
It is apparent that the adaptive Boston mechanism proceeds more quickly than naive Boston:
$e^{1-r}$ decays much more quickly than $\omega_r$ as $r\to\infty$.
Also, the tendency of advantageously-ranked agents to be matched in relatively early rounds is even
greater for the adaptive version of the algorithm.
In an adaptive-Boston assignment of a large number of items to agents with IC preferences,
under 2\% of the agents will be unmatched after four rounds (vs. 16\% for naive Boston),
and most of these (about 2/3) will be among the last 10\% of agents in the original
agent order.

A better understanding of the functions $y_r(\theta)$ is given by the following result, which is analogous to Theorem~\ref{thm:naive_zr}.

\begin{theorem}
\label{thm:adaptive_yr}
The functions $y_r(\theta)$ satisfy $y_r(\theta)=\int_0^\theta y'_r(\phi)\;d\phi$, where
\begin{eqnarray}
\label{eq:yr_prime_recursion}
y'_r(\theta) &=& \prod_{k=1}^{r-1} g_k(\theta) \qquad\hbox{ for $r\geq2$}, \quad\hbox{ and $y'_1(\theta)=1$}\\
g_r(\theta) &=& 1 - \exp\left(-e^{r-1} y_r(\theta)\right)
\end{eqnarray}
\end{theorem}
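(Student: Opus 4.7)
The plan is essentially to mirror the proof of Theorem~\ref{thm:naive_zr}: differentiate the recursion (\ref{eq:yr_recursion}) with respect to $\theta$ and induct on $r$. First I would verify that $y_r$ is a smooth function of $\theta$ for each fixed $r$, which follows by an easy induction from the recursion since $y_1(\theta)=\theta$ is smooth and the update $y_{r+1}=y_r - e^{1-r}(1-\exp(-e^{r-1}y_r))$ preserves smoothness. The base case $y'_1(\theta)=1$ is the trivial empty product, which matches the claim.

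For the inductive step, I would differentiate (\ref{eq:yr_recursion}) with respect to $\theta$. The chain rule gives
\begin{equation*}
y'_{r+1}(\theta) \;=\; y'_r(\theta) \;-\; e^{1-r}\cdot e^{r-1} y'_r(\theta)\exp\!\bigl(-e^{r-1}y_r(\theta)\bigr)
\;=\; y'_r(\theta)\bigl(1 - \exp(-e^{r-1}y_r(\theta))\bigr) \;=\; y'_r(\theta)\, g_r(\theta).
\end{equation*}
Combined with the inductive hypothesis $y'_r(\theta)=\prod_{k=1}^{r-1}g_k(\theta)$, this immediately gives $y'_{r+1}(\theta)=\prod_{k=1}^{r} g_k(\theta)$, completing the induction.

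For the integral representation, I would observe that $y_r(0)=0$ for every $r\geq 1$: this holds for $r=1$, and the recursion preserves it because $1-\exp(-e^{r-1}\cdot 0)=0$. Hence $y_r(\theta)=y_r(0)+\int_0^\theta y'_r(\phi)\,d\phi=\int_0^\theta y'_r(\phi)\,d\phi$, as claimed.

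There is no genuine obstacle here; the whole argument is elementary calculus together with an induction, and the only minor point to be careful about is the preservation of $y_r(0)=0$ under the recursion so that the integral has no additive constant. The probabilistic interpretation of $g_r(\theta)$ as the conditional probability of remaining unmatched at round $r$ for an agent at relative position $\theta$ is deferred to the subsequent individual-agent theorem (the analogue of Theorem~\ref{thm:naive_individual_asymptotics}) and is not needed for the proof itself.
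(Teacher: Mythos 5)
Your proposal is correct and follows the same route as the paper, which proves the theorem simply by differentiating the recursion (\ref{eq:yr_recursion}) with respect to $\theta$; your chain-rule computation $y'_{r+1}=y'_r\,g_r$ and the check that $y_r(0)=0$ just spell out the details the paper leaves implicit.
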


\begin{proof}{Proof.}
Differentiate (\ref{eq:yr_recursion}) with respect to $\theta$.
Alternatively, integrate (\ref{eq:yr_prime_recursion}) by parts.
\qedsymbol
\end{proof}

\smallbreak

\begin{remark}
\label{rem:adaptive_yr}
The quantity $g_r(\theta)$ is analogous to $f_r(\theta)$ in the naive case, and can be interpreted
(in a sense to be made precise later) as the conditional probability that an agent with relative position
$\theta$, if present at round $r$, is unmatched at that round.
The quantity $y'_r(\theta)$, analogous to $z'_r(\theta)$ in the naive case, can then be interpreted
as the probability that an agent with relative position $\theta$ is still unmatched at the beginning of round $r$.
For the particular case of the last agent, we may note that $g_r(1) = 1 - e^{-1}$ and $y'_r(1) = (1-e^{-1})^{r-1}$.
Other quantities for the first two rounds are shown in Table \ref{t:adaptive_limits}.
\end{remark}


\begin{table}[hbtp]
\centering
\begin{tabular}{|l|c|c|c|c|}
\hline
meaning at round $r$ & quantity & $r=1$ & $r=2$\\
\hline
Fraction of all agents: &&&\\
$\bullet$ present & $e^{1-r}$ & $1$ & $e^{-1}\approx 0.3679$ \\
$\bullet$ in $A_n(\theta)$ and present & $y_r(\theta)$ &$\theta$  &$\theta + e^{-\theta} - 1$\\
For an agent with relative position $\theta$: &&&\\
$\bullet$ P(present) & $y'_r(\theta)$ & $1$ & $1 - e^{-\theta}$\\
$\bullet$ P(unmatched|present) & $g_r(\theta)$ & $1 - e^{-\theta}$
     & $1-\exp(-e(\theta+e^{-\theta}-1))$\\
$\bullet$ P(bids for $s$th preference|present) & $u_{rs}$ & $1_{s=1}$
     & $e^{-1}(1-e^{-1})^{s-2}1_{s\geq2}$ \\
\hline
\end{tabular}
\vspace{10pt}
\caption{Limiting quantities for the early rounds of the adaptive Boston algorithm.}
\label{t:adaptive_limits}
\end{table}

\begin{proof}{Proof of Theorem \ref{thm:adaptive_asym_agent_numbers}.}
Induct on $r$.
For $r=1$ we have $N_n(1,\theta)=\lfloor n\theta\rfloor$; the result follows immediately.
Now suppose the result for a given value of $r$, and consider $r+1$.
Let $T_n$ be the number of agents of $A_n(\theta)$ matched at round $r$.
Conditioning on the $\sigma$-field $\F_r$ generated by events prior to round $r$, we have the situation
of Lemma \ref{lem:A}: there are $N_n(r)$ available items and $N_n(r,\theta)$ agents of $A_n(\theta)$
who will be the first to attempt to claim them, with each such agent bidding for one of the available items,
chosen uniformly at random independently of other agents.
Lemma \ref{lem:A} gives us $\hbox{Var}(T_n|\F_r)\leq E[T_n|\F_r]$ and
$$ E[T_n|\F_r]
\;=\; \sum_{a=1}^{N_n(r,\theta)} \left(1-\frac{1}{N_n(r)}\right)^{a-1}
\;=\; N_n(r) \left(1 - \left(1-\frac{1}{N_n(r)}\right)^{N_n(r,\theta)} \right) .
  $$
By the inductive hypothesis,
$$ \frac{N_n(r)}{n} \;\cvginprob\; e^{1-r}
\qquad\hbox{ and }\qquad
\left(1-\frac{1}{N_n(r)}\right)^{N_n(r,\theta)} \;\cvginprob\; \exp\left(-e^{r-1}y_r(\theta)\right) .
  $$
This gives us
$$ \frac1n E[T_n|\F_r]
\;\cvginprob\; e^{1-r}\left(1-\exp\left(-e^{r-1}y_r(\theta)\right)\right)
\;=\; y_r(\theta) - y_{r+1}(\theta) .
  $$
By Lemma \ref{lem:CC}, then,
$$ \frac1n T_n 
\;\cvginprob\; y_r(\theta) - y_{r+1}(\theta) .
  $$
Since $T_n=N_n(r,\theta) - N_n(r+1,\theta)$, it follows that
$\frac1n N_n(r+1,\theta)\cvginprob y_{r+1}(\theta)$.
Hence the result.  
\qedsymbol
\end{proof}

\subsubsection*{The rank of the item received}

Theorem \ref{thm:adaptive_asym_agent_numbers} is less satisfying than Theorem \ref{thm:naive_asym_agent_numbers}.
The naive Boston mechanism has a key simplifying feature: the rank of an item within its assigned
agent's preference order is equal to the round number in which it was matched. This means that Theorem
\ref{thm:naive_asym_agent_numbers} already enables some conclusions about agents' satisfaction with the outcome
of the process (see Corollary \ref{cor:naive_obtained}). But, in the adaptive case, we know only that an item matched at round $r>1$ will be
no better (and could be worse) than its assigned agent's $r$th preference.

To do better, we need a more detailed stochastic bidding model.
An agent $a$ still present at the beginning of the $r$th round will have thus far determined an initial
sub-sequence of his preference order comprising some number $F_{a,r-1}$ of most-preferred items, and failed
to obtain any of them. He thus has a pool of $n-F_{a,r-1}$ previously-unconsidered items from which to choose,
of which the $N_n(r)$ items actually still available are a subset. In accordance with the IC model, let us
imagine that he now generates further preferences by repeated random sampling without replacement from the
previously-unconsidered items, until one of the available items is sampled; this item becomes his bid in
the current round. Denote by $G_{ar}$ the number of items sampled to construct this bid;
thus $F_{ar}=\sum_{j=1}^{r} G_{aj}$ and $G_{a,1}=1$.
If the bid is successful, the agent will be matched to his $F_{ar}$th preference.

Note that while the simple bidding model used in Theorem \ref{thm:adaptive_asym_agent_numbers}
provides enough information to determine the matching of items to agents (along with the round numbers
at which the items are matched), it does not completely determine the agents' preference orders.
In particular, it does not determine the agents' preference ranks for the items they are assigned.
The random variables $G_{ar}$ provide additional information sufficient to determine this interesting
feature of the outcome.

It is convenient to think of the $G_{ar}$ and $F_{ar}$ as being determined by an auxiliary process
that runs after the simple bidding model has been run and the matching of agents to items determined.
This auxiliary process can be described in the following way.
Fix integers $n_1>n_2>\cdots>n_r>0$.
\begin{itemize}
\item Place $n_1$ balls, numbered from 1 to $n_1$, in an urn.
\item For $i=1,\ldots,r$
\begin{itemize}
\item Deem the $n_i$ lowest-numbered balls remaining in the urn ``good".
\item Draw balls at random from the urn, without replacement, until a good ball is drawn.
\end{itemize}
\end{itemize}
Let $H(n_1,\ldots,n_r)$ be the probability distribution of the total number of balls drawn, and
$q(s;n_1,\ldots,n_r) = P(X=s)$ where $X\sim H(n_1,\ldots,n_r)$.

Denote by $\M$ the $\sigma$-field generated by the simple bidding model, including the items on which each
agent bids and the resulting matching. Conditional on $\M$, the random variable $F_{ar}$ for an agent
$a$ still present at round $r$ has the $H(n,N_n(2),\ldots,N_n(r))$ distribution.
That is,
\begin{equation}
\label{eq:adaptive_bidding_M}
P(F_{ar}=s|\M) \;=\; q(s;n,N_n(2),\ldots,N_n(r)) .
\end{equation}
Also, the $\{F_{ar}:a\hbox{ present at round }r\}$ are conditionally independent given $\M$.

\begin{lemma}
\label{lem:H_distn}
$q(1;n_1)=1$; $q(s;n_1)=0$ for $s>1$; and $q(s;n_1,\ldots,n_r)=0$ for $s<r$ or $s>n_1-n_r+1$.
The $H(n_1,\ldots,n_r)$ distribution's other probabilities are given by the recurrence
$$ q(s;n_1,\ldots,n_r)
 \;=\; \sum_{t=r-1}^{s-1} q(t;n_1,\ldots,n_{r-1}) \left(\frac{n_r}{n_1-s+1}\right)
       \prod_{0\leq i<s-t-1} \left(1-\frac{n_r}{n_1-t-i}\right) .
  $$
\end{lemma}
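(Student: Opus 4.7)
The plan is to induct on $r$. For $r=1$ every ball in the urn is good, so the first draw is always good and the total is exactly $1$; hence $q(1;n_1)=1$ and $q(s;n_1)=0$ for $s>1$. The support claims then follow easily: each of the $r$ rounds must terminate with at least one (good) draw, so $s\geq r$; and by induction the total $t$ after round $r-1$ satisfies $t\leq n_1-n_{r-1}+1$, leaving an urn of size at least $n_{r-1}\geq n_r$, from which round $r$ contributes at most $n_1-t-n_r+1$ further draws, giving $s\leq n_1-n_r+1$.

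For the recurrence, let $T_{r-1}$ denote the total number of balls drawn through the first $r-1$ rounds. Condition on $T_{r-1}=t$ and use the inductive interpretation $q(t;n_1,\ldots,n_{r-1})=P(T_{r-1}=t)$:
$$ q(s;n_1,\ldots,n_r) \;=\; \sum_{t=r-1}^{s-1} q(t;n_1,\ldots,n_{r-1})\cdot P\bigl(\text{round $r$ draws exactly $s-t$ balls} \mid T_{r-1}=t\bigr), $$
with the range of summation dictated by the support conditions. It remains only to evaluate the conditional probability.

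The key step appeals to exchangeability of sampling without replacement: conditional on any specific history of the first $t$ draws, the order in which the $n_1-t$ remaining balls will be drawn is a uniformly random permutation of that set. The ``good'' balls in round $r$ are the $n_r$ lowest-numbered among the remainder, a subset of fixed cardinality $n_r$ (since $n_r\leq n_1-t$ on the relevant range). Consequently the probability that round $r$ consists of $s-t-1$ bad draws followed by one good draw depends only on $n_r$ and $n_1-t$, and equals
$$ \prod_{i=0}^{s-t-2}\left(1-\frac{n_r}{n_1-t-i}\right) \cdot \frac{n_r}{n_1-s+1}, $$
by the telescoping urn-sampling calculation (draw bad with probability $(n_1-t-n_r-i)/(n_1-t-i)$ at each successive step, then good with the indicated final probability). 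Substituting yields exactly the stated recurrence.

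The main obstacle is making the exchangeability argument precise: one must justify that the distribution of round $r$'s number of draws, conditional on $T_{r-1}=t$, is independent of which particular balls were drawn earlier. This rests on the standard fact that, given the set of balls already drawn, the future order of draws is uniform on permutations of the remainder, combined with the observation that the number of good balls among the remainder is always exactly $n_r$ (so only counts, not identities, enter the probability). With that in hand the recurrence is an elementary enumeration, and the inductive step is complete.
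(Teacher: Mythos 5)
Your proof is correct and follows essentially the same route as the paper's: decompose on the number $t$ of balls drawn in the first $r-1$ rounds (which is distributed as $H(n_1,\ldots,n_{r-1})$) and compute the last round's conditional probability as the product of per-draw probabilities, using the fact that the good set always has exactly $n_r$ members so the conditional law depends only on $t$. One tiny slip: after $t\le n_1-n_{r-1}+1$ draws the urn has at least $n_{r-1}-1$ (not $n_{r-1}$) balls, but since $n_{r-1}>n_r$ this still yields the needed bound $n_1-t\ge n_r$.
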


\begin{proof}{Proof.}
Let $N$ be the number of balls drawn in the first $r-1$ iterations of the process, and $M$ the number drawn
in the final iteration. Then $P(N+M=s) \;=\; \sum_{t=r-1}^{s-1} P(N=t) P(M=s-t|N=t)$, and we have
$$ P(M=s-t|N=t) \;=\; \left(\frac{n_r}{n_1-s+1}\right) \prod_{0\leq i<s-t-1} \left(1-\frac{n_r}{n_1-t-i}\right) .
  $$
(The final iteration must first sample $s-t-1$ consecutive non-good balls: the probabilities of achieving
this are $1-\frac{n_r}{n_1-t}$ for the first, $1-\frac{n_r}{n_1-t-1}$ for the second,
$\ldots1-\frac{n_r}{n_1-s+2}$ for the last. At last, a good ball must be drawn:
the probability of this is $\frac{n_r}{n_1-s+1}$.)
The result follows.
\qedsymbol
\end{proof}

\smallbreak

Our interest in the $H(n_1,\ldots,n_r)$ distribution mostly concerns its asymptotic limits as the numbers of balls
become large, and the ``without replacement'' stipulation becomes unimportant.
To this end, fix $p_1,\ldots,p_r\in(0,1]$ and let $u(s;p_1,\ldots,p_r)=P\left(r+\sum_{i=1}^r G_i = s\right)$,
where $G_1,\ldots,G_r$ are independent random variables
with geometric distributions: $P(G_i=x)=p_i(1-p_i)^x$ for $x=0,1,\ldots$.

\begin{lemma}
\label{lem:G_distn}
$u(s;p)=p(1-p)^{s-1}$; $u(s;p_1,\ldots,p_r)=0$ for $s<r$; and
$$ u(s;p_1,\ldots,p_r) \;=\; \sum_{t=r-1}^{s-1} u(t;p_1,\ldots,p_r) p_r (1-p_r)^{s-t-1} . 
  $$
\end{lemma}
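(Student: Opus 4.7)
The plan is straightforward because $u(s;p_1,\ldots,p_r)$ is defined outright as the probability mass function of an independent sum of geometric variables; all three claims should drop out of that representation, and I would handle them in order.

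For the base case $r=1$, I would just substitute: $u(s;p) = P(1 + G_1 = s) = P(G_1 = s-1) = p(1-p)^{s-1}$. For the support claim, the variables $G_i$ are nonnegative integers, so $r + \sum_{i=1}^r G_i \geq r$ almost surely, forcing $u(s;p_1,\ldots,p_r) = 0$ for $s<r$.

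For the recurrence I would split
$$ r + \sum_{i=1}^r G_i \;=\; \Bigl((r-1) + \sum_{i=1}^{r-1} G_i\Bigr) + (1 + G_r) \;=:\; T + M, $$
where $T$ and $M$ are independent by independence of the $G_i$. By definition $T$ has mass function $u(\,\cdot\,;p_1,\ldots,p_{r-1})$, supported on $\{r-1,r,\ldots\}$, while $M = 1 + G_r$ has $P(M=k) = p_r(1-p_r)^{k-1}$ on $\{1,2,\ldots\}$. A one-line convolution then gives
$$ u(s;p_1,\ldots,p_r) \;=\; \sum_{t} u(t;p_1,\ldots,p_{r-1}) \, p_r(1-p_r)^{s-t-1}, $$
with the range $r-1 \leq t \leq s-1$ dictated by the two supports. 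This matches the recurrence in the statement (I note in passing that the subscript on $u$ inside the sum should read $p_1,\ldots,p_{r-1}$, not $p_1,\ldots,p_r$; taken literally the displayed formula is circular, so I read this as a typographical slip).

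There is no real technical obstacle here; the only thing worth being careful about is the bookkeeping on the summation limits, which I would derive from the supports of $T$ and $M$ rather than pull out of the air. The point of stating the lemma in this form is presumably to place it in structural parallel with the recurrence of Lemma \ref{lem:H_distn}, so that later the limiting identification of $q(\,\cdot\,;n_1,\ldots,n_r)$ with $u(\,\cdot\,;p_1,\ldots,p_r)$ for suitable $p_i$ can be carried out by induction using the two recurrences side by side.
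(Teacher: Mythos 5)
Your argument is exactly the paper's: the authors also prove the recurrence by writing $r+\sum_{i=1}^r G_i$ as the independent sum of $(r-1)+\sum_{i=1}^{r-1}G_i$ and $1+G_r$ and convolving, and your observation that the displayed recurrence should read $u(t;p_1,\ldots,p_{r-1})$ inside the sum is correct (the paper's own proof uses $r-1+\sum_{i=1}^{r-1}G_i$, confirming the typographical slip). Nothing is missing from your treatment.
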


\begin{proof}{Proof.}
$$ P\left(r+\sum_{i=1}^r G_i = s\right) \;=\; \sum_{t=r-1}^{s-1} P\left(r-1 +\sum_{i=1}^{r-1} G_i = t\right) P(1+G_r = s-t) . \qedsymbol
  $$
\end{proof}

\begin{lemma}
$$ q(s;n_1,\ldots,n_r) \;\to\; u(s;p_1,\ldots,p_r)
\qquad\text{as $n_1,\ldots,n_r\to\infty$ with $\frac{n_i}{n_1}\to p_i$.}
  $$
\end{lemma}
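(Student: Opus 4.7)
I propose to prove the lemma by induction on $r$, using the explicit recurrence from Lemma \ref{lem:H_distn} and matching it term-by-term with the recurrence from Lemma \ref{lem:G_distn}.

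\textbf{Base case.} When $r=1$, the hypothesis $n_i/n_1 \to p_i$ forces $p_1 = 1$. By Lemma \ref{lem:H_distn}, $q(1;n_1) = 1$ and $q(s;n_1) = 0$ for $s > 1$. On the other hand, $u(s;1) = 1 \cdot 0^{s-1}$, which equals $1$ for $s=1$ and $0$ otherwise. So the convergence holds (in fact with equality) for every $s$.

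\textbf{Inductive step.} Assume $q(t;n_1,\ldots,n_{r-1}) \to u(t;p_1,\ldots,p_{r-1})$ for every $t$, as $n_1,\ldots,n_{r-1}\to\infty$ with $n_i/n_1 \to p_i$. Fix $s \geq r$; the cases $s < r$ are handled by the zero clause in Lemma \ref{lem:H_distn}. The recurrence gives
$$ q(s;n_1,\ldots,n_r) = \sum_{t=r-1}^{s-1} q(t;n_1,\ldots,n_{r-1}) \left(\frac{n_r}{n_1-s+1}\right) \prod_{0 \leq i < s-t-1}\left(1 - \frac{n_r}{n_1-t-i}\right).$$
The sum has only finitely many terms (indexed by $t \in \{r-1,\ldots,s-1\}$), so it suffices to take limits inside. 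For fixed $s,t,i$, the factor $n_r/(n_1 - s + 1) \to p_r$ and $1 - n_r/(n_1-t-i) \to 1 - p_r$, because $s$, $t$, $i$ are bounded while $n_1, n_r \to \infty$ with $n_r/n_1 \to p_r$. Combined with the inductive hypothesis applied to $q(t;n_1,\ldots,n_{r-1})$, each summand converges to $u(t;p_1,\ldots,p_{r-1}) \, p_r (1-p_r)^{s-t-1}$, and hence
$$ q(s;n_1,\ldots,n_r) \to \sum_{t=r-1}^{s-1} u(t;p_1,\ldots,p_{r-1}) \, p_r (1-p_r)^{s-t-1} = u(s;p_1,\ldots,p_r),$$
the last equality being the (properly read) recurrence of Lemma \ref{lem:G_distn}.

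\textbf{Anticipated obstacle.} There is essentially no hard analytic step: the argument is bookkeeping plus continuity of finitely many factors. The only subtlety is the base case, where one must observe that $p_1 = 1$ is forced by the hypothesis, so that $u(\cdot;1)$ degenerates to a point mass matching $q(\cdot;n_1)$. If desired, one could alternatively give a direct probabilistic argument: the number of draws needed in iteration $i$ until the first good ball is drawn (in the urn model) converges in distribution, as $n_1 \to \infty$ with $n_i/n_1 \to p_i$, to $1 + G_i$ with $G_i$ geometric of parameter $p_i$, and the draws in different iterations become asymptotically independent because the pool size remains of order $n_1$ throughout any fixed number of iterations. This yields the result by the continuous mapping theorem, but the inductive proof via the recurrence is shorter and self-contained.
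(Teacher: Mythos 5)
Your proof is correct and is exactly the paper's argument in expanded form: the paper's proof is the one-liner ``take limits in Lemma \ref{lem:H_distn}; compare Lemma \ref{lem:G_distn}'', and you have simply carried out that limit passage by induction on $r$, including correctly reading $u(t;p_1,\ldots,p_{r-1})$ in the recurrence of Lemma \ref{lem:G_distn}. No further comment needed.
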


\begin{proof}{Proof.}
Take limits in Lemma \ref{lem:H_distn}; compare Lemma \ref{lem:G_distn}.
\qedsymbol
\end{proof}

\begin{corollary}
\label{cor:F_cvginprob}
Consider the adaptive Boston mechanism, and fix $s$. We have
$$ q(s;n,N_n(2),\ldots,N_n(r)) \;\cvginprob\; u(s;1,e^{-1},\ldots,e^{1-r})
\qquad\text{as $n\to\infty$.}
  $$
\end{corollary}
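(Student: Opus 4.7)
The plan is to lift the deterministic asymptotic of the preceding (unnumbered) lemma to convergence in probability by a standard subsequence argument, with the probabilistic input supplied by Theorem~\ref{thm:adaptive_asym_agent_numbers}. Fix $s$. By Theorem~\ref{thm:adaptive_asym_agent_numbers}, for each $i = 2,\ldots,r$ we have $N_n(i)/n \cvginprob e^{1-i}$. Given any subsequence of the indices $n$, I would extract a further sub-subsequence along which $N_n(i)/n \to e^{1-i}$ almost surely and simultaneously for all $i$ (possible because only finitely many $i$ are involved). On this sub-subsequence $N_n(i) \to \infty$ a.s.\ as well, so the hypothesis of the preceding deterministic lemma is satisfied pathwise with $n_1 = n$ and $n_i = N_n(i)$; applying that lemma gives $q(s;n,N_n(2),\ldots,N_n(r)) \to u(s;1,e^{-1},\ldots,e^{1-r})$ almost surely along the sub-subsequence. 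Since the starting subsequence was arbitrary, the subsequence characterization of convergence in probability delivers the claim.

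The only real subtlety is that the preceding lemma is not quite a continuous-mapping statement: its hypothesis requires each $n_i$ itself to tend to infinity, not merely the ratios $n_i/n_1$ to stabilize. The subsequence detour is what handles this cleanly, exploiting the fact that the limits $e^{1-i}$ are strictly positive so that $N_n(i) \to \infty$ automatically along the a.s.\ convergent sub-subsequence. Beyond this there is nothing to worry about: $s$ and $r$ are fixed throughout, so no uniformity in those parameters is ever needed, and the main obstacle---such as it is---amounts to spelling out the subsequence bookkeeping rather than any substantive calculation.
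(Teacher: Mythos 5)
Your proposal is correct and is essentially the paper's argument: the paper's one-line proof simply invokes the convergence $\frac1n N_n(i)\cvginprob e^{1-i}$ from Theorem~\ref{thm:adaptive_asym_agent_numbers} together with the preceding deterministic limit lemma, and your subsequence/a.s.-extraction bookkeeping is just the standard way of making that combination rigorous (including the observation that $e^{1-i}>0$ forces $N_n(i)\to\infty$ pathwise).
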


\begin{proof}{Proof.}
Use the convergence of $\frac1n N_n(i)$ given by Theorem \ref{thm:adaptive_asym_agent_numbers}.
\qedsymbol
\end{proof}

Corollary \ref{cor:F_cvginprob} and (\ref{eq:adaptive_bidding_M}) give us an asymptotic limit
for the distribution, conditional on $\M$, of $F_{ar}$, the preference rank of the bid made at
round $r$ by an agent still present at that round.
To condense notation, we will denote the limit $u(s;1,e^{-1},\ldots,e^{1-r})$ by $u_{rs}$.
That is,
$$ P(F_{ar}=s|\M) \;\cvginprob\; u_{rs} .
  $$
Note that the limit $u_{rs}$ does not depend on the position of the agent $a$ in the choosing order.
It is fairly clear why this should be so: all remaining agents must enter their bids at the
beginning of the round, before any other agent has bid, and so the bidding process, at least,
treats them symmetrically. The advantage arising from a favourable position lies in a higher
probability of obtaining the item bid for, not in constructing the bid itself.

We make use of the following simplified recurrence.

\begin{lemma}
\label{lem:three_term_recurrence}
$u(s;p)=p(1-p)^{s-1}$ and $u(s;p_1,\ldots,p_r)=0$ for $s<r$; other values are given by the recurrence
$$ u(s;p_1,\ldots,p_r) \;=\; p_r u(s-1;p_1,\ldots,p_{r-1}) \;+\; (1-p_r) u(s-1;p_1,\ldots,p_{r}) .
  $$
In particular: $u_{11}=1$, $u_{1,s}=0$ for $s>1$, $u_{rs}=0$ for $s<r$, and
\begin{equation}
\label{eq:three_term_recurrence}
 u_{rs} \;=\; e^{1-r} u_{r-1,s-1} \;+\; (1-e^{1-r}) u_{r,s-1} .
\end{equation}
\end{lemma}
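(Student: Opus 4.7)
The plan is to obtain the three-term recurrence directly from the probabilistic definition $u(s;p_1,\ldots,p_r)=P(r+\sum_{i=1}^r G_i=s)$ by conditioning on the value of the last geometric variable $G_r$ and exploiting its memoryless property. This gives a clean one-line decomposition and avoids the index bookkeeping required if we instead start from the summation formula in Lemma~\ref{lem:G_distn}.

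Concretely, split the probability according to whether $G_r=0$ (probability $p_r$) or $G_r\geq1$ (probability $1-p_r$). In the first case, $r+\sum_{i=1}^r G_i=(r-1)+\sum_{i=1}^{r-1}G_i+1$, so the event $\{r+\sum G_i=s\}$ coincides with $\{(r-1)+\sum_{i=1}^{r-1}G_i=s-1\}$, contributing $p_r\,u(s-1;p_1,\ldots,p_{r-1})$. In the second case, by the memoryless property $G_r-1$ is geometric with parameter $p_r$ and independent of $G_1,\ldots,G_{r-1}$; writing $G_r=1+G_r'$ shows that $\{r+\sum G_i=s\}$ becomes $\{r+G_1+\cdots+G_{r-1}+G_r'=s-1\}$, contributing $(1-p_r)\,u(s-1;p_1,\ldots,p_r)$. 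Adding the two contributions yields the stated recurrence.

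For readers preferring a purely algebraic derivation, the same identity can be extracted from Lemma~\ref{lem:G_distn}: isolate the $t=s-1$ summand (which produces exactly $p_r\,u(s-1;p_1,\ldots,p_{r-1})$) and factor $(1-p_r)$ out of the remaining sum over $r-1\leq t\leq s-2$, recognizing what is left as $u(s-1;p_1,\ldots,p_r)$ via the same formula with $s$ replaced by $s-1$.

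Finally, the ``in particular'' assertions are routine specializations: $u_{11}=1$ and $u_{1,s}=0$ for $s>1$ follow by evaluating $u(s;1)=1\cdot(1-1)^{s-1}$; the support statement $u_{rs}=0$ for $s<r$ is already recorded in Lemma~\ref{lem:G_distn} (since all $G_i\geq0$); and (\ref{eq:three_term_recurrence}) is the main recurrence with $p_i=e^{1-i}$. The main ``obstacle'' here is essentially just choosing the cleanest formulation; conceptually the argument is a one-line conditioning, and no new estimates or asymptotics are needed.
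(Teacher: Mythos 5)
Your proof is correct, and it takes a different route from the paper's. The paper disposes of the lemma by a uniqueness argument: it observes that the recurrence in Lemma~\ref{lem:G_distn} and the three-term recurrence (with the stated boundary values) each determine their solution uniquely, and asserts that one checks that either solution satisfies the other recurrence. Your main argument instead derives the three-term recurrence directly from the probabilistic definition $u(s;p_1,\ldots,p_r)=P\bigl(r+\sum_{i=1}^r G_i=s\bigr)$ by conditioning on $\{G_r=0\}$ versus $\{G_r\geq1\}$ and invoking the memoryless property of the geometric law; this is clean, self-contained, and needs no appeal to uniqueness at all, since the recurrence is obtained constructively rather than verified against a candidate. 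Your secondary, algebraic derivation (peel off the $t=s-1$ term of the sum in Lemma~\ref{lem:G_distn} and factor $(1-p_r)$ out of the rest) is essentially the "easy check" the paper leaves implicit, just presented directly; note that it relies on reading the summand of Lemma~\ref{lem:G_distn} as $u(t;p_1,\ldots,p_{r-1})$, as the proof of that lemma makes clear, rather than the $u(t;p_1,\ldots,p_r)$ that appears in its displayed statement. The specializations at the end ($p_1=1$ giving $u_{11}=1$, $u_{1,s}=0$ for $s>1$; support $u_{rs}=0$ for $s<r$; and $p_i=e^{1-i}$ giving (\ref{eq:three_term_recurrence})) are handled correctly. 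In short: what the paper buys with its uniqueness framing is brevity; what your conditioning argument buys is a transparent probabilistic explanation of where the recurrence comes from.
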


\begin{proof}{Proof.}
The recurrence in (\ref{eq:three_term_recurrence}) has a unique solution; as does the one in Lemma \ref{lem:G_distn}.
It is easy to check that either solution also satisfies the other recurrence.
\qedsymbol
\end{proof}

\begin{remark}
\label{rem:sum_urs}
It follows directly from \eqref{eq:three_term_recurrence} that the bivariate generating function $F(x,y) = \sum_{r,s} u_{rs} x^r y^s$ satisfies the 
defining equation $F(x,y)(1-y) = xy+F(x/e,y)(x-e)$. It follows directly (from substituting $y=1$) 
that
$\sum_{s=r}^{\infty} u_{rs} = 1$,
consistent with its role as a probability distribution. We have not found a nice 
explicit formula for $u_{rs}$.
\end{remark}

We can now state a more detailed version of Theorem \ref{thm:adaptive_asym_agent_numbers}.

\begin{theorem}[the bidding process at a given round]
\label{thm:adaptive_asymptotics}
Consider the adaptive Boston algorithm. Fix $s\geq r\geq1$ and a relative position $\theta\in[0,1]$.
Let $y_r(\theta)$ be as in Theorem \ref{thm:adaptive_asym_agent_numbers},
and $u_{rs}$ be as in Lemma \ref{lem:three_term_recurrence}.
\begin{enumerate}[(i)]
\item
\label{adaptive_bidding}
The number $N_n(r,s,\theta)$ of members of $A_n(\theta)$ making a bid for their $s$th preference at round $r$ satisfies
$$ \frac1n N_n(r,s,\theta) \cvginprob u_{rs} y_r(\theta)
  $$
\item
\label{adaptive_bidding_and_unmatched}
The number $U_n(r,s,\theta)$ of members of $A_n(\theta)$ making an unsuccessful bid for their $s$th preference at round $r$ satisfies
$$ \frac1n U_n(r,s,\theta) \cvginprob u_{rs} y_{r+1}(\theta).
  $$
\item
\label{adaptive_bidding_and_matched}
The number $S_n(r,s,\theta)$ of members of $A_n(\theta)$ making a successful bid for their $s$th preference at round $r$ satisfies
$$ \frac1n S_n(r,s,\theta) \cvginprob  u_{rs} (y_r(\theta) - y_{r+1}(\theta)).
  $$
\end{enumerate}
\end{theorem}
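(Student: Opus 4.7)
The plan is to use Lemma \ref{lem:CC} conditional on the $\sigma$-field $\M$ generated by the simple bidding model, exploiting the conditional independence of the preference-rank random variables $F_{ar}$ given $\M$. The work has already been laid: Theorem \ref{thm:adaptive_asym_agent_numbers} controls the size of $A_n(\theta) \cap \{\text{present at round } r\}$, and Corollary \ref{cor:F_cvginprob} gives that the conditional probability $q(s;n,N_n(2),\ldots,N_n(r))$ that any given round-$r$ bid targets the $s$th preference converges in probability to $u_{rs}$.

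First I would prove part (\ref{adaptive_bidding}). Conditional on $\M$, the random variable $N_n(r,s,\theta)$ is a sum of $N_n(r,\theta)$ independent Bernoulli random variables, each with success probability $q(s;n,N_n(2),\ldots,N_n(r))$ (using (\ref{eq:adaptive_bidding_M}) and the conditional independence of the $F_{ar}$). Therefore
\begin{equation*}
E[N_n(r,s,\theta)\mid\M] \;=\; N_n(r,\theta)\, q(s;n,N_n(2),\ldots,N_n(r))
\end{equation*}
and $\hbox{Var}(N_n(r,s,\theta)\mid\M)\leq E[N_n(r,s,\theta)\mid\M]$ because the conditional distribution is binomial. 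Dividing by $n$ and combining Theorem \ref{thm:adaptive_asym_agent_numbers} with Corollary \ref{cor:F_cvginprob} gives $\frac1n E[N_n(r,s,\theta)\mid\M]\cvginprob u_{rs}y_r(\theta)$, and Lemma \ref{lem:CC} (with $\F_n=\M$) yields the claim.

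Part (\ref{adaptive_bidding_and_matched}) is handled identically, except that one conditions instead on the subset of $A_n(\theta)$ that is actually \emph{matched} at round $r$. This set is $\M$-measurable (the bidding model determines the matching), and its size equals $N_n(r,\theta)-N_n(r+1,\theta)$, which by Theorem \ref{thm:adaptive_asym_agent_numbers} satisfies $\frac1n(N_n(r,\theta)-N_n(r+1,\theta))\cvginprob y_r(\theta)-y_{r+1}(\theta)$. Conditional on $\M$, $S_n(r,s,\theta)$ is again binomial with $N_n(r,\theta)-N_n(r+1,\theta)$ trials and success probability $q(s;n,N_n(2),\ldots,N_n(r))$, so the same argument via Lemma \ref{lem:CC} gives $\frac1n S_n(r,s,\theta)\cvginprob u_{rs}(y_r(\theta)-y_{r+1}(\theta))$. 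Finally, part (\ref{adaptive_bidding_and_unmatched}) follows immediately from the identity $U_n(r,s,\theta)=N_n(r,s,\theta)-S_n(r,s,\theta)$ and the two limits just obtained.

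The only genuinely delicate point is the justification that, conditional on $\M$, the indicators $\{1_{F_{ar}=s}:a\in A_n(\theta),\,a\text{ present at round }r\}$ really are independent Bernoulli with the common parameter $q(s;n,N_n(2),\ldots,N_n(r))$; this is precisely the content of the independence assertion following (\ref{eq:adaptive_bidding_M}), so it can simply be invoked. The remainder is a routine application of Lemma \ref{lem:CC} with the conditional binomial variance bound.
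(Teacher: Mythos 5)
Your proposal is correct and follows essentially the same route as the paper: condition on $\M$, note that the relevant count is conditionally binomial with success probability $q(s;n,N_n(2),\ldots,N_n(r))$, and combine Lemma \ref{lem:CC} with Theorem \ref{thm:adaptive_asym_agent_numbers} and Corollary \ref{cor:F_cvginprob}. The only (harmless) difference is bookkeeping: the paper proves (\ref{adaptive_bidding}) and (\ref{adaptive_bidding_and_unmatched}) directly and gets (\ref{adaptive_bidding_and_matched}) by subtraction, whereas you prove (\ref{adaptive_bidding}) and (\ref{adaptive_bidding_and_matched}) directly (using that the set of round-$r$ matched agents is $\M$-measurable) and get (\ref{adaptive_bidding_and_unmatched}) by subtraction.
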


\begin{proof}{Proof.}
Conditional on the $\sigma$-field $\M$, each agent $a$ participating in round $r$ enters a bid for his $F_{ar}$th preference;
the $F_{ar}$ for this group of agents are conditionally independent given $\M$.
Thus, the conditional distribution of $N(r,s,\theta)$ given $\M$ is the binomial distribution with $N_n(r,\theta)$
trials and success probability $P(F_{ar}=s|\M)$ given by (\ref{eq:adaptive_bidding_M}).
The variance of a binomial distribution never exceeds its mean (\cite{Feller}), so Lemma \ref{lem:CC} applies.
We will thus obtain Part (\ref{adaptive_bidding}) of the theorem if we can merely show that
$\frac1n E[N_n(r,s,\theta)|\M] \cvginprob u_{rs} y_r(\theta)$; that is
\begin{equation}
\frac1n N_n(r,\theta) q(s;n,N_n(2),\ldots,N_n(r)) \cvginprob u_{rs} y_r(\theta) .
\end{equation}
Theorem \ref{thm:adaptive_asym_agent_numbers} gives $\frac1n N_n(r,\theta)\cvginprob y_r(\theta)$,
and Corollary \ref{cor:F_cvginprob} gives $q(s;n,N_n(2),\ldots,N_n(r))\cvginprob u_{rs}$.
Part (\ref{adaptive_bidding}) follows.

The proof of Part (\ref{adaptive_bidding_and_unmatched}) is very similar:
the conditional distribution of $U(r,s,\theta)$ given $\M$ is the binomial distribution with $N_n(r+1,\theta)$
trials and success probability $P(F_{ar}=s|\M)$ given by (\ref{eq:adaptive_bidding_M}).
Part (\ref{adaptive_bidding_and_matched}) follows from Parts (\ref{adaptive_bidding}) and (\ref{adaptive_bidding_and_unmatched}).
\qedsymbol
\end{proof}

We now have the analog for Adaptive Boston of Corollary~\ref{cor:naive_obtained}.

\begin{corollary}[limiting distribution of preference rank obtained]
\label{cor:adaptive_obtained}
The number $S_n(s,\theta)$ of members of $A_n(\theta)$ matched to their $s$th preference satisfies
$$ \frac1n S_n(s,\theta) \cvginprob \int_0^\theta q_s(\phi)\;d\phi .
  $$
where 
$$ q_s(\theta) = \sum_{r=1}^s u_{rs} \left(y'_r(\theta) - y'_{r+1}(\theta)\right)
               = \sum_{r=1}^s u_{rs} y'_r(\theta) \exp\left(-e^{r-1}y_r(\theta)\right).
  $$
\end{corollary}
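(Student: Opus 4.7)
The plan is to partition the event ``agent gets matched to his $s$th preference'' by the round $r$ in which the match happens, and then sum the corresponding counts using Theorem~\ref{thm:adaptive_asymptotics}. More precisely, an agent receives his $s$th preference if and only if, for some $r\in\{1,\dots,s\}$, he makes a \emph{successful} bid for his $s$th preference at round $r$; the upper limit $r\leq s$ holds because an agent present at round $r$ always bids for a preference of rank at least $r$ (equivalently $u_{rs}=0$ for $s<r$ by Lemma~\ref{lem:three_term_recurrence}). Hence
\[
  S_n(s,\theta) \;=\; \sum_{r=1}^{s} S_n(r,s,\theta).
\]

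Since this is a \emph{finite} sum (for fixed $s$), convergence in probability is preserved by summation. First I would apply Theorem~\ref{thm:adaptive_asymptotics}(\ref{adaptive_bidding_and_matched}) to each term to obtain $\tfrac{1}{n}S_n(r,s,\theta)\cvginprob u_{rs}(y_r(\theta)-y_{r+1}(\theta))$, and conclude
\[
  \frac1n S_n(s,\theta) \;\cvginprob\; \sum_{r=1}^{s} u_{rs}\bigl(y_r(\theta)-y_{r+1}(\theta)\bigr).
\]
Next, since $y_r(\theta)-y_{r+1}(\theta)=\int_0^\theta (y'_r(\phi)-y'_{r+1}(\phi))\,d\phi$ by Theorem~\ref{thm:adaptive_yr}, pulling the finite sum inside the integral rewrites the right-hand side as $\int_0^\theta q_s(\phi)\,d\phi$ with the first expression for $q_s$.

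To verify the second, equivalent form of $q_s(\theta)$, I would differentiate the recursion \eqref{eq:yr_recursion} with respect to $\theta$ (equivalently, combine the two identities in Theorem~\ref{thm:adaptive_yr}) to get
\[
  y'_{r+1}(\theta) \;=\; y'_r(\theta)\bigl(1-\exp(-e^{r-1}y_r(\theta))\bigr),
\]
so that $y'_r(\theta)-y'_{r+1}(\theta) = y'_r(\theta)\exp(-e^{r-1}y_r(\theta))$, yielding the second displayed formula for $q_s(\theta)$ termwise.

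There is no real obstacle here: the only mildly delicate point is the assertion that only rounds $r\leq s$ contribute, which must be justified by noting that $F_{ar}\geq r$ deterministically (formally, $u_{rs}=0$ for $s<r$ from Lemma~\ref{lem:three_term_recurrence}) so that $S_n(r,s,\theta)=0$ for $r>s$. All the analytic heavy lifting has already been done in Theorems~\ref{thm:adaptive_asym_agent_numbers}, \ref{thm:adaptive_yr}, and~\ref{thm:adaptive_asymptotics}; this corollary is essentially a bookkeeping consequence of them.
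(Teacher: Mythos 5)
Your proposal is correct and takes essentially the same route as the paper: the paper likewise deduces the corollary directly from Part (\ref{adaptive_bidding_and_matched}) of Theorem~\ref{thm:adaptive_asymptotics}, with Theorem~\ref{thm:adaptive_yr} supplying the integral form of the limit. Your write-up simply spells out the bookkeeping the paper leaves implicit (the finite decomposition over rounds $r\leq s$, justified by $u_{rs}=0$ for $s<r$, and the identity $y'_r-y'_{r+1}=y'_r\exp(-e^{r-1}y_r)$ from Theorem~\ref{thm:adaptive_yr}), all of which is accurate.
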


\begin{figure}[hbtp]
\centering
\includegraphics[width=0.8\textwidth]{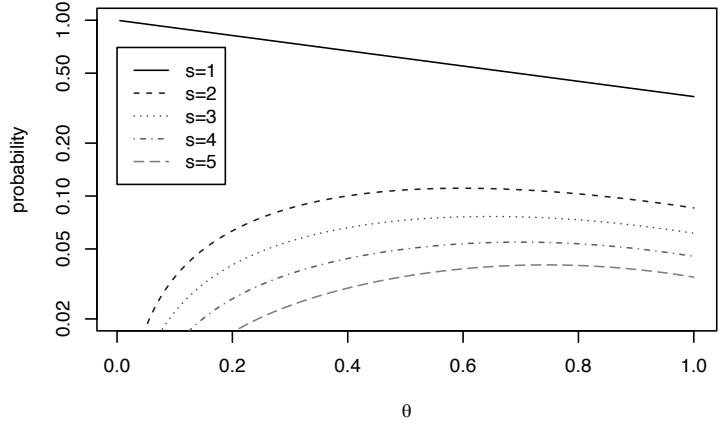}
\caption{The limiting probability $q_s(\theta)$ that an agent obtains his $s$th preference via the adaptive Boston
mechanism, as a function of the agent's initial relative position $\theta$. Logarithmic scale on vertical axis.}
\label{fig:adaptive_pref_obtained}
\end{figure}

The functions $q_s(\theta)$ are illustrated in Figure \ref{fig:adaptive_pref_obtained}.

Figure~\ref{fig:adaptive_round_2_last} shows for the last agent ($\theta = 1$) the distribution of the rank of the item bid for and the item obtained at the second round. 

\begin{figure}[hbtp]
\centering
\includegraphics[width=0.8\textwidth]{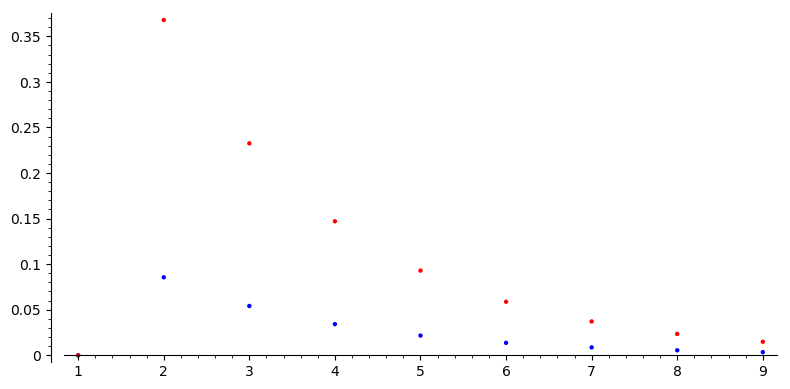}
\caption{Distribution of rank of item for which the last agent bids (upper) and successfully bids (lower) in round 2, Adaptive Boston}
\label{fig:adaptive_round_2_last}
\end{figure}


\begin{remark}
\label{rem:adaptive_sum_qs}
It is clear from the definition (and Remark \ref{rem:sum_urs}) that $\sum_{s=1}^{\infty} q_s(\theta)=1$.
This is consistent with the implied role of $q_s(\theta)$ as a probability distribution: the limiting
probability that an agent in position $\theta$ obtains his $s$th preference.
See also Theorem \ref{thm:adaptive_individual_asymptotics} Part \ref{adaptive_individual_obtained}.
\end{remark}


\begin{proof}{Proof of Corollary \ref{cor:adaptive_obtained}.}
This is an immediate consequence of Part (\ref{adaptive_bidding_and_matched}) of Theorem \ref{thm:adaptive_asymptotics},
with Theorem \ref{thm:adaptive_yr} providing the integral form of the limit.
\qedsymbol
\end{proof}

\subsection{Individual agents}
\label{ss:adaptive_individual}

If we wish to follow the fate of a single agent in the adaptive Boston mechanism, we need limits analogous
to that of Theorem \ref{thm:naive_individual_asymptotics}. These are provided by the following result.

\begin{theorem}[exit time and rank obtained for individual agent]
\label{thm:adaptive_individual_asymptotics}
Consider the adaptive Boston algorithm. Fix $s\geq r\geq1$ and a relative position $\theta\in[0,1]$.
Let $V_n(r,\theta)$ denote the preference rank of the item for which the agent $a_n(\theta)$
(the last agent with relative position at most $\theta$) bids at round $r$.
(For completeness, set $V_n(r,\theta)=0$ whenever $a_n(\theta)$ is not present at round $r$.)
Let $R_n(\theta)$ denote the round number at which $a_n(\theta)$ is matched.
Then
\begin{enumerate}
\item
\label{adaptive_individual_present}
(Agent present at round $r$.)
$$ P(R_n(\theta)\geq r)\to y'_r(\theta) \quad \text{as $n\to\infty$}.
  $$
\item
\label{adaptive_individual_bidding}
(Agent bids for $s$th preference at round $r$.)
$$ P(V_n(r,\theta)=s)\to y'_r(\theta) u_{rs} \quad \text{as $n\to\infty$}.
  $$
\item
\label{adaptive_individual_matched}
(Agent matched to $s$th preference at round $r$.)
$$ P(R_n(\theta)=r\hbox{ and }V_n(r,\theta)=s)\to y'_r(\theta) u_{rs} (1-g_r(\theta)) \quad \text{as $n\to\infty$}.
  $$
\item
\label{adaptive_individual_obtained}
(Agent matched to $s$th preference.)
$$ P(V_n(R_n(\theta),\theta)=s)\to q_s(\theta) \quad \text{as $n\to\infty$}.
  $$
\end{enumerate}
The limiting quantities $y'_r(\theta)$, $g_r(\theta)$, $u_{rs}$, and $q_s(\theta)$ are as defined in
Theorem \ref{thm:adaptive_yr}, Lemma \ref{lem:three_term_recurrence} and Corollary \ref{cor:adaptive_obtained}.
\end{theorem}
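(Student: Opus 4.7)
The plan is to prove the four parts in sequence. Part (\ref{adaptive_individual_present}) is proved by induction on $r$, mirroring the proof of Theorem \ref{thm:naive_individual_asymptotics} but with Lemma \ref{lem:A} replacing Lemma \ref{lem:Aprime}. The base case $r=1$ is immediate since $y'_1(\theta)=1$. For the inductive step, let $\F_r$ be the $\sigma$-field generated by events prior to round $r$. Conditional on $\F_r$ and on $a_n(\theta)$ being present at round $r$, this agent bids uniformly and independently among the $N_n(r)$ available items; among the $N_n(r,\theta)$ present members of $A_n(\theta)$, listed by priority, $a_n(\theta)$ is last, so Lemma \ref{lem:A} applied to the singleton $\{a_n(\theta)\}$ gives conditional success probability $\bigl(1-1/N_n(r)\bigr)^{N_n(r,\theta)-1}$, and hence conditional failure probability
\[
Y_n := 1 - \bigl(1-1/N_n(r)\bigr)^{N_n(r,\theta)-1} \cvginprob 1 - \exp\bigl(-e^{r-1}y_r(\theta)\bigr) = g_r(\theta),
\]
by Theorem \ref{thm:adaptive_asym_agent_numbers}. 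Just as in the naive proof, decompose
\[
P\bigl(R_n(\theta)\geq r+1\bigr) = E\bigl[1_{R_n(\theta)\geq r}(Y_n - g_r(\theta))\bigr] + g_r(\theta)\,P\bigl(R_n(\theta)\geq r\bigr);
\]
the first term vanishes by bounded convergence (since $Y_n-g_r(\theta)$ is bounded and tends to $0$ in probability, hence in $\mathcal{L}_1$), and the second tends to $g_r(\theta)\,y'_r(\theta) = y'_{r+1}(\theta)$ by the inductive hypothesis.

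For Parts (\ref{adaptive_individual_bidding}) and (\ref{adaptive_individual_matched}) I would condition on the larger $\sigma$-field $\M$ generated by the simple bidding model of Section~\ref{ss:adaptive_all}. The indicators $1_{R_n(\theta)\geq r}$ and $1_{R_n(\theta)=r}$ are both $\M$-measurable, whereas the preference rank $V_n(r,\theta)=F_{a_n(\theta),r}$ is, given $\M$ and the agent's presence, distributed as $H(n,N_n(2),\ldots,N_n(r))$ independently of everything else, by the discussion surrounding (\ref{eq:adaptive_bidding_M}). For Part (\ref{adaptive_individual_bidding}) I would therefore write
\[
P\bigl(V_n(r,\theta)=s\bigr) = E\bigl[1_{R_n(\theta)\geq r}\, q(s;n,N_n(2),\ldots,N_n(r))\bigr]
\]
and split this as $u_{rs}\,E[1_{R_n(\theta)\geq r}]$ plus an error $E\bigl[1_{R_n(\theta)\geq r}(q(s;\cdots)-u_{rs})\bigr]$ bounded in absolute value by $E|q(s;\cdots)-u_{rs}|$. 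The error vanishes by Corollary \ref{cor:F_cvginprob} together with bounded convergence, while Part (\ref{adaptive_individual_present}) sends the main term to $u_{rs}\,y'_r(\theta)$. Part (\ref{adaptive_individual_matched}) follows from the same argument with $1_{R_n(\theta)\geq r}$ replaced by $1_{R_n(\theta)=r}$, using $P(R_n(\theta)=r) = P(R_n(\theta)\geq r) - P(R_n(\theta)\geq r+1) \to y'_r(\theta) - y'_{r+1}(\theta) = y'_r(\theta)(1-g_r(\theta))$, which comes directly from Part (\ref{adaptive_individual_present}) and the relation $y'_{r+1}(\theta)=g_r(\theta)y'_r(\theta)$.

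Finally, Part (\ref{adaptive_individual_obtained}) follows from Part (\ref{adaptive_individual_matched}) by summation. Since $V_n(r,\theta)\geq r$, the event $\{V_n(R_n(\theta),\theta)=s\}$ decomposes as a finite disjoint union over $r=1,\ldots,s$ of $\{R_n(\theta)=r,\,V_n(r,\theta)=s\}$, so the limit equals $\sum_{r=1}^{s} u_{rs}\,y'_r(\theta)(1-g_r(\theta)) = \sum_{r=1}^{s} u_{rs}\bigl(y'_r(\theta)-y'_{r+1}(\theta)\bigr)$, which is precisely the formula for $q_s(\theta)$ given in Corollary \ref{cor:adaptive_obtained}. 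The chief technical obstacle is the decoupling in Parts (\ref{adaptive_individual_bidding})--(\ref{adaptive_individual_matched}): one must carefully exploit that the auxiliary ranking process producing $F_{a_n(\theta),r}$ is conditionally independent of the matching given $\M$, and use boundedness of the conditional probabilities $q(s;\cdots)$ together with Corollary \ref{cor:F_cvginprob} to upgrade convergence in probability into convergence of the desired joint expectations.
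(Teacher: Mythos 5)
Your proposal is correct and follows essentially the same route as the paper: the same induction with Lemma \ref{lem:A} and the $\mathcal{L}_1$/bounded-convergence step for Part (\ref{adaptive_individual_present}), the same conditioning on $\M$ and splitting into a main term plus an error controlled by Corollary \ref{cor:F_cvginprob} for Parts (\ref{adaptive_individual_bidding}) and (\ref{adaptive_individual_matched}), and the same summation over $r$ for Part (\ref{adaptive_individual_obtained}). No gaps; your added remarks on $\M$-measurability and the disjoint decomposition merely make explicit what the paper leaves implicit.
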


\begin{proof}{Proof.}
Part (\ref{adaptive_individual_present}) is proved in a similar way to Theorem \ref{thm:naive_individual_asymptotics}.
The result is trivial for $r=1$. Assume the result for a given value of $r$, and let $\F_r$ be the
$\sigma$-field generated by events prior to round $r$. Then
\begin{equation}
\label{eq:adaptive_individual_induction1}
P(R_n(\theta)\geq r+1)
 \;=\; E\left[P(R_n(\theta)\geq r+1|\F_r)\right]
 \;=\; E\left[1_{R_n(\theta)\geq r} Y_n\right] ,
\end{equation} 
where (by applying Lemma \ref{lem:A} to the single agent $a_n(\theta)$)
$$ Y_n \;=\; 1 \;-\; \left(1 - \frac{1}{N_n(r)}\right)^{N_n(r,\theta)-1} .
  $$
Observe that $Y_n\cvginprob 1 - \exp\left(-e^{r-1}y_r(\theta)\right) = g_r(\theta)$ by Theorem \ref{thm:adaptive_asym_agent_numbers}.
Equation (\ref{eq:adaptive_individual_induction1}) gives
$$ P(R_n(\theta)\geq r+1) \;-\; y'_{r+1}(\theta)
 \;=\; E\left[1_{R_n(\theta)\geq r}(Y_n - g_r(\theta))\right]
   \;+\; E\left[1_{R_n(\theta)\geq r} - y'_r(\theta)\right] g_r(\theta) .
  $$
The second term converges to 0 as $n\to\infty$ by the inductive hypothesis.
For the first term, note that the convergence $Y_n - g_r(\theta) \cvginprob 0$ is also
convergence in ${\mathcal L}_1$ by Theorem 4.6.3 in \cite{Durrett}, and so
$1_{R_n(\theta)\geq r}(Y_n - g_r(\theta))\to 0$ in ${\mathcal L}_1$ also.
Part (\ref{adaptive_individual_present}) follows.

For Part (\ref{adaptive_individual_bidding}), we have
\begin{equation*}
P(V_n(r,\theta)=s)
 \;=\; E\left[1_{R_n(\theta)\geq r} P(F_{a_n(\theta),r}=s|\M)\right]
 \;=\; E\left[1_{R_n(\theta)\geq r} q(s;n, N_n(2),\ldots,N_n(r))\right] .
\end{equation*}
Hence,
\begin{equation}
\label{eq:adaptive_individual_induction2}
P(V_n(r,\theta)=s) - u_{rs}y'_r(\theta)
 \;=\; E\left[1_{R_n(\theta)\geq r} (q(s;n, N_n(2),\ldots,N_n(r))-u_{rs})\right]
 \;+\; u_{rs}\left(P(R_n(\theta)\geq r) - y'_r(\theta)\right) .
\end{equation}
Both terms converge in probability to 0. For the second term, the convergence is given by
Part (\ref{adaptive_individual_present}). For the first term, it is a consequence of
Corollary \ref{cor:F_cvginprob}: $q(s;n,N_n(2),\ldots,N_n(r))\cvginprob u_{rs}$, which
is also convergence in ${\mathcal L}_1$ by Theorem 4.6.3 in \cite{Durrett}.
Part (\ref{adaptive_individual_bidding}) follows.

The proof of Part (\ref{adaptive_individual_matched}) is very similar to that of Part (\ref{adaptive_individual_bidding});
just replace $1_{R_n(\theta)\geq r}$ by $1_{R_n(\theta)=r}$
and $y'_r(\theta)$ by $y'_r(\theta) - y'_{r+1}(\theta)$.

Part (\ref{adaptive_individual_obtained}) is obtained from Part (\ref{adaptive_individual_matched})
by summation over $r$.
\qedsymbol
\end{proof}

\section{Serial Dictatorship}
\label{s:serial_dictatorship}

Unlike the Boston algorithms, SD is strategyproof, but it is known to behave worse in welfare and fairness.
However, we are not aware of detailed quantitative comparisons. The analysis for SD is very much simpler than for the Boston algorithms. In particular, the exit time is not interesting. In this section, we suppose that $n$ items and $n$ agents with Impartial Culture preferences are matched
by the Serial Dictatorship algorithm. 
\subsection{Groups of agents}
\label{ss:SD_group}

Results analogous to those in Sections \ref{s:naive_asymptotics} and
\ref{s:adaptive_asymptotics} are obtainable from the following explicit formula.

\begin{theorem}
\label{thm:SD_matching_probabilities}
The probability that the $k$th agent obtains his $s$th preference is $\binom{n-s}{k-s}\big/\binom{n}{k-1}$ for $s=1,\ldots,k$,
and zero for other values of $s$.
\end{theorem}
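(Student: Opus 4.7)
The plan is to exploit the symmetry of the Impartial Culture model to reduce the computation to a purely combinatorial count. Let $T$ denote the (random) set of items already allocated to agents $1,2,\ldots,k-1$ under SD, so $|T|=k-1$. The $k$th agent will obtain their $s$th preference if and only if their top $s-1$ items all lie in $T$ and their $s$th-ranked item does not lie in $T$.

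The first step is to establish two independence/uniformity facts. First, agent $k$'s preference order is independent of $T$, since $T$ is a function of the preferences of agents $1,\ldots,k-1$, which in the IC model are independent of agent $k$'s preference. Second, the distribution of $T$ is uniform over all $(k-1)$-element subsets of the item set: the joint law of all agents' preferences is invariant under permutations of item labels, and the SD procedure is equivariant with respect to such relabellings, so the distribution of $T$ is invariant under all permutations of $[n]$ and hence uniform on subsets of its (deterministic) size $k-1$.

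Having done this, I would condition on the $k$th agent's preference order. By symmetry we may assume it is the identity order $1\prec 2\prec\cdots\prec n$. The event ``$k$th agent obtains the $s$th preference'' then becomes $\{1,\ldots,s-1\}\subseteq T$ and $s\notin T$. Counting subsets satisfying this: the $s-1$ items $1,\ldots,s-1$ are forced to lie in $T$, the item $s$ is forbidden, and the remaining $k-1-(s-1)=k-s$ elements of $T$ must be chosen from $\{s+1,\ldots,n\}$, a set of size $n-s$. This yields $\binom{n-s}{k-s}$ subsets out of the $\binom{n}{k-1}$ equally likely possibilities, giving the stated probability. The boundary conditions take care of themselves: for $s>k$ the binomial coefficient $\binom{n-s}{k-s}$ vanishes, correctly reflecting the fact that the $k$th chooser faces $n-k+1\geq 1$ available items and so cannot receive worse than their $k$th-ranked choice.

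There is no substantial technical obstacle; the only subtle point is the uniformity of $T$, which follows cleanly from item-exchangeability rather than requiring any direct calculation over preference profiles. The rest is a single combinatorial identity.
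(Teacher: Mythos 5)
Your proof is correct and follows essentially the same route as the paper: identify the set $T$ of $k-1$ already-taken items as uniform over $(k-1)$-subsets and independent of agent $k$'s preferences, then count the subsets containing the top $s-1$ preferences but not the $s$th, giving $\binom{n-s}{k-s}\big/\binom{n}{k-1}$. The only difference is that you justify the uniformity and independence of $T$ explicitly via item-exchangeability, which the paper's proof takes for granted.
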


\begin{proof}{Proof.}
By the time agent $k$ gets an item, a random subset $T$ of $k-1$ of the $n$ items is already taken.
This agent's $s$th preference will be the best one left if and only if $T$ includes his first $s-1$ preferences,
but not the $s$th preference.
Of the $\binom{n}{k-1}$ equally-probable subsets $T$, the number satisfying this condition is $\binom{n-s}{k-s}$:
the remaining $k-s$ items in $T$ must be chosen from $n-s$ possibilities.
\hfill \qedsymbol
\end{proof}

In particular, the $n$th and last agent is equally likely to get each possible item.

\begin{corollary}[preference rank obtained]
\label{cor:SD_obtained}
Consider the serial dictatorship algorithm. Fix $s\geq1$ and a relative position $\theta\in[0,1]$.
The number $S_n(s,\theta)$ of members of $A_n(\theta)$ matched to their $s$th preference satisfies
$$ \frac1n S_n(s,\theta) \cvginprob \int_0^\theta q_s(\phi)\;d\phi
  $$
where $q_s(\theta) = \theta^{s-1}(1-\theta)$.
\end{corollary}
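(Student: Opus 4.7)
The plan is to write $S_n(s,\theta) = \sum_{k=1}^{a_n(\theta)} I_k$, where $I_k$ is the indicator that agent $k$ is matched to his $s$th preference, and then apply Lemma~\ref{lem:C}. Theorem~\ref{thm:SD_matching_probabilities} already supplies $E[I_k] = \binom{n-s}{k-s}/\binom{n}{k-1}$ in closed form, so the tasks reduce to computing the limit of $\frac1n E[S_n(s,\theta)]$ and bounding its variance.

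For the expectation, I would rewrite
\[
\frac{\binom{n-s}{k-s}}{\binom{n}{k-1}} = \frac{(k-1)(k-2)\cdots(k-s+1)(n-k+1)}{n(n-1)\cdots(n-s+1)},
\]
from which it is clear that for fixed $s$ and $k/n\to\phi$ the ratio converges to $\phi^{s-1}(1-\phi)=q_s(\phi)$, uniformly in $k$. A standard Riemann sum argument (the summands lie in $[0,1]$ and the boundary terms contribute negligibly) then gives $\frac1n E[S_n(s,\theta)] \to \int_0^\theta q_s(\phi)\,d\phi$.

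The main step is to show $\hbox{Var}(S_n(s,\theta)) \leq E[S_n(s,\theta)]$, so that Lemma~\ref{lem:C} applies; this reduces to showing the $I_k$ are pairwise uncorrelated. The crucial observation is that, conditional on the set $T_{k-1}$ of items already claimed by the first $k-1$ agents, $I_k$ is the indicator that the top $s-1$ entries of the independent uniform permutation $\pi_k$ lie in $T_{k-1}$ while the $s$th entry does not; a direct calculation in a uniform random permutation shows that $P(I_k=1\mid T_{k-1})$ depends on $T_{k-1}$ only through $|T_{k-1}|=k-1$, and so equals the constant $P(I_k=1)$. Since $I_j$ (for $j<k$) is measurable with respect to $\sigma(\pi_1,\ldots,\pi_{k-1})$ and $\pi_k$ is independent of this $\sigma$-field, $E[I_j I_k] = E\bigl[I_j\,P(I_k=1\mid T_{k-1})\bigr] = E[I_j]\,P(I_k=1)$, giving $\hbox{Cov}(I_j,I_k)=0$. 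Hence $\hbox{Var}(S_n(s,\theta)) = \sum_k \hbox{Var}(I_k) \leq \sum_k E[I_k] = E[S_n(s,\theta)]$, and Lemma~\ref{lem:C} delivers the required convergence in probability.

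The principal obstacle is the variance bound, but it collapses once one notices that the conditional probability $P(I_k=1\mid T_{k-1})$ is insensitive to \emph{which} items have been taken, only to \emph{how many}. This is a symmetry consequence of the IC model that makes the otherwise-tricky dependence between agents' outcomes vanish at the level of covariances.
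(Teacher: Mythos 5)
Your proposal is correct and follows essentially the same route as the paper: decompose $S_n(s,\theta)$ into per-agent indicators, bound the variance by the mean so that Lemma~\ref{lem:C} applies, and obtain the limit of $\frac1n E[S_n(s,\theta)]$ from the closed form in Theorem~\ref{thm:SD_matching_probabilities} by a Riemann-sum/dominated-convergence argument. The only (minor) difference is that the paper asserts outright that the indicators are mutually independent under IC, whereas you establish the weaker but sufficient pairwise uncorrelatedness by noting that $P(I_k=1\mid T_{k-1})$ depends on $T_{k-1}$ only through $|T_{k-1}|=k-1$ — a slightly more explicit justification of the same variance bound.
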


\begin{proof}{Proof.}
Let $p_{kn} = \binom{n-s}{k-s}\Big/\binom{n}{k-1}$.
Let $X_{kn}$ be the indicator of the event that the $k$th agent (of $n$) is matched to his $s$th
preference; thus $E[X_{kn}]=p_{kn}$ and $\hbox{Var}(X_{kn}) = p_{kn}(1-p_{kn})$.
The Impartial Culture model requires agents to choose their preferences
independently; thus the random variables $(X_{kn})_{k=1}^n$ are independent.
We have
$$ S_n(s,\theta)=\sum_{k=s}^{\lfloor n\theta\rfloor} X_{kn}
  $$
and so $E[S_n(s,\theta)] = \sum_{k=s}^{\lfloor n\theta\rfloor} p_{kn}$ and
$\hbox{Var}(S_n(s,\theta)) = \sum_{k=s}^{\lfloor n\theta\rfloor} p_{kn}(1-p_{kn})$.
Hence $\hbox{Var}(S_n(s,\theta)) \leq E[S_n(s,\theta)]$ and Lemma \ref{lem:C} applies.
It now remains only to show that $\frac1n E[S_n(s,\theta)] \to \int_0^\theta q_s(\phi)\;d\phi$.

Note that
$$ p_{kn} \;=\; \frac{(n-k+1)\cdot(k-1)(k-2)\cdots(k-s+1)}{n(n-1)\cdots(n-s+1)}
\;=\; \left(1-\frac{k-1}{n}\right)\prod_{j=1}^{s-1}\left(\frac{k-j}{n-j}\right) .
  $$
Hence,
$$ \frac1n E[S_n(s,\theta)]
\;=\; \frac1n \sum_{k=s}^{\lfloor n\theta\rfloor}
           \left(1-\frac{k-1}{n}\right)\prod_{j=1}^{s-1}\left(\frac{k-j}{n-j}\right)
\;=\; \int_0^\theta f_n(\phi)\;d\phi ,
  $$
where
$$ f_n(\phi) \;=\;
\begin{cases}
    \left(1-\frac{k-1}{n}\right)\prod_{j=1}^{s-1}\left(\frac{k-j}{n-j}\right)
    & \hbox{ for } \frac{k-1}{n}\leq\phi<\frac{k}{n}, \quad k=s,\ldots,\lfloor n\theta\rfloor\\
    0 & \hbox{ otherwise.}
\end{cases}
  $$
As $n\to\infty$, $f_n(\phi)\to(1-\phi)\phi^{s-1}$ pointwise;
since we also have $0\leq f_n(\phi)\leq1$, the dominated convergence theorem (\cite{Durrett})
ensures that $\int_0^\theta f_n(\phi)\;d\phi \to \int_0^\theta (1-\phi)\phi^{s-1}\;d\phi$.
\hfill \qedsymbol
\end{proof}

\subsection{Individual agents}
\label{ss:SD_indiv}

For individual agents, we have the following analogous result.

\begin{theorem}[preference rank obtained]
\label{thm:SD_individual_asymptotics}
Consider the serial dictatorship algorithm. Fix $s\geq1$ and a relative position $\theta\in[0,1]$.
The probability that agent $a_n(\theta)$ (the last with relative position at most $\theta$) 
is matched to his $s$th preference converges to $q_s(\theta) = \theta^{s-1}(1-\theta)$ as $n\to\infty$.
\end{theorem}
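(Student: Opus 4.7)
The plan is to read off the result directly from the exact finite-$n$ formula of Theorem~\ref{thm:SD_matching_probabilities}, by substituting $k = \lfloor n\theta\rfloor$ (which is what Definition~\ref{defn:theta} gives for $a_n(\theta)$ once $n$ is large enough that $\theta\geq 1/n$) and passing to the limit. Unlike the Boston analyses, there is no induction on rounds here; one exact formula does all the work.

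First I would recall from the proof of Corollary~\ref{cor:SD_obtained} the convenient factorization
\[
\binom{n-s}{k-s}\Big/\binom{n}{k-1} \;=\; \left(1-\frac{k-1}{n}\right)\prod_{j=1}^{s-1}\frac{k-j}{n-j},
\]
which rewrites the hypergeometric probability as a product of $s$ factors, each of a form that is visibly stable under $n\to\infty$ with $k/n\to\theta$.

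Second, I would set $k_n = \lfloor n\theta\rfloor$ (and invoke the boundary convention of the remark after Definition~\ref{defn:theta} to handle $\theta<1/n$, which is irrelevant asymptotically). For fixed $s$, the factor $1-\frac{k_n-1}{n}\to 1-\theta$, and for each fixed $j\in\{1,\ldots,s-1\}$ the factor $\frac{k_n-j}{n-j}\to\theta$. Since the product involves only finitely many such factors, the product converges to $(1-\theta)\theta^{s-1}$.

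The only mild subtlety is checking that Theorem~\ref{thm:SD_matching_probabilities} applies, i.e.\ that $s\leq k_n$ eventually; but this holds for all $n\geq s/\theta$, so it is harmless. Thus
\[
P\bigl(a_n(\theta)\text{ gets his }s\text{th preference}\bigr) \;=\; \left(1-\frac{k_n-1}{n}\right)\prod_{j=1}^{s-1}\frac{k_n-j}{n-j} \;\longrightarrow\; \theta^{s-1}(1-\theta) \;=\; q_s(\theta),
\]
as required. There is no genuine obstacle in this proof; the only thing to be careful about is the boundary case $\theta=0$, where $a_n(0)$ is defined as the first agent and $P(a_n(0)\text{ gets his }1\text{st preference})=1$, matching $q_1(0)=1$ and $q_s(0)=0$ for $s\geq 2$.
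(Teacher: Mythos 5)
Your proof is correct and follows essentially the same route as the paper: substitute $k=\lfloor n\theta\rfloor$ into the exact formula of Theorem~\ref{thm:SD_matching_probabilities}, written in the factorized form $\left(1-\frac{k-1}{n}\right)\prod_{j=1}^{s-1}\frac{k-j}{n-j}$, and pass to the limit factor by factor. The extra care you take with the cases $s\leq k_n$ and $\theta=0$ is fine but not a point of divergence from the paper's argument.
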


\begin{proof}{Proof.}
From Theorem~\ref{thm:SD_matching_probabilities}, this probability is
$$ \left(1-\frac{\lfloor n\theta\rfloor -1}{n}\right)\prod_{j=1}^{s-1}\left(\frac{\lfloor n\theta\rfloor-j}{n-j}\right) .
  $$
The result follows immediately.
\qedsymbol
\end{proof}

\section{Welfare}
\label{s:welfare}

In this section we obtain results on the utilitarian welfare achieved by the three mechanisms. We use the standard method of imputing utility to agents via scoring rules, since we know only their ordinal preferences.

\begin{defn}
\label{def:scoring_rule}
A \emph{positional scoring rule} is given by a sequence $(\sigma_n(s))_{s=1}^n$ of real numbers
with $0\leq\sigma_n(s)\leq\sigma_n(s-1)\leq1$ for $2\leq s\leq n$. 
\end{defn}

Commonly used scoring rules include \emph{$k$-approval} defined by $(1,1,\ldots,1,0,0,\ldots,0)$ where
the number of $1$'s is fixed at $k$ independent of $n$; when $k=1$ this is the usual plurality rule.
Note that $k$-approval is \textit{coherent}: for all $n$ the utility of a fixed rank object depends
only on the rank and not on $n$.
Another well-known rule is \emph{Borda} defined by $\sigma_n(s) = \frac{n-s}{n-1}$; Borda is not coherent.
Borda utility is often used in the literature, sometimes under the name ``linear utilities".

Each positional scoring rule defines an \emph{induced rank utility function}, common to all agents:
an agent matched to his $s$th preference derives utility $\sigma_n(s)$ therefrom.

Suppose (adopting the notation of Corollary \ref{cor:naive_obtained}, Corollary \ref{cor:adaptive_obtained},
and Corollary \ref{cor:SD_obtained})
that an assignment mechanism for $n$ agents matches $S_n(s,\theta)$ of the agents with relative position
at most $\theta$ to their $s$th preferences, for each $s=1,2,\ldots$.
According to the utility function induced by the scoring rule $(\sigma_n(s))_{s=1}^n$, the welfare
(total utility) of the agents with relative position at most $\theta$ is thus
\begin{equation}
\label{eq:welfare}
W_n(\theta) = \sum_{s=1}^n \sigma_n(s) S_n(s,\theta) .
\end{equation}

\begin{theorem}[Asymptotic welfare of the mechanisms]
\label{thm:welfare_asymptotics}
Assume an assignment mechanism with
$$ \frac1n S_n(s,\theta) \cvginprob \int_0^\theta q_s(\phi)\;d\phi 
\qquad\hbox{ as $n\to\infty$, for each $s=1,2,\ldots$}
  $$
where $\sum_{s=1}^\infty q_s(\theta)=1$.
Suppose the scoring rule $(\sigma_n(s))_{s=1}^n$ satisfies 
$$ \sigma_n(s) \to \lambda_s \qquad\hbox{ as $n\to\infty$, for each $s=1,2,\ldots$}
  $$
Then the welfare given by (\ref{eq:welfare}) satisfies
$$ \frac1n W_n(\theta) \cvginprob \sum_{s=1}^\infty \lambda_s \int_0^\theta q_s(\phi)\;d\phi .
  $$
\end{theorem}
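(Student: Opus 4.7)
The plan is a standard truncation argument. First observe that the candidate limit $L := \sum_{s=1}^{\infty} \lambda_s \int_0^\theta q_s(\phi)\,d\phi$ is finite and absolutely convergent, since $0\le\lambda_s\le1$ and $\sum_s \int_0^\theta q_s(\phi)\,d\phi = \int_0^\theta \sum_s q_s(\phi)\,d\phi = \theta$ by the hypothesis $\sum_s q_s(\phi)=1$ combined with bounded convergence. For a truncation level $K\ge1$, decompose
$$\frac{W_n(\theta)}{n} - L \;=\; A_n^{(K)} \;+\; B_n^{(K)} \;-\; C^{(K)},$$
where $A_n^{(K)} = \sum_{s=1}^{K} \bigl(\sigma_n(s)\,S_n(s,\theta)/n - \lambda_s \int_0^\theta q_s(\phi)\,d\phi\bigr)$, $B_n^{(K)} = \frac{1}{n}\sum_{s=K+1}^{n} \sigma_n(s)\,S_n(s,\theta) \ge 0$, and $C^{(K)} = \sum_{s>K} \lambda_s \int_0^\theta q_s(\phi)\,d\phi \ge 0$.

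For each fixed $K$, $A_n^{(K)}$ is a finite sum of $K$ terms. Since $S_n(s,\theta)/n \cvginprob \int_0^\theta q_s(\phi)\,d\phi$ by hypothesis, $\sigma_n(s)\to\lambda_s$ deterministically, and all quantities are bounded (in particular $\sigma_n(s)\le1$), a direct application of Slutsky's theorem to each of the $K$ summands gives $A_n^{(K)} \cvginprob 0$ as $n\to\infty$.

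The crux is uniform tail control of $B_n^{(K)}$ and $C^{(K)}$. The observation to exploit is that in each of our three mechanisms every agent is eventually matched, so $\sum_{s=1}^{n} S_n(s,\theta) = |A_n(\theta)| = \lfloor n\theta\rfloor$. Combined with the uniform bound $\sigma_n(s)\le1$, this gives
$$0 \;\le\; B_n^{(K)} \;\le\; \frac{\lfloor n\theta\rfloor}{n} \;-\; \sum_{s=1}^{K} \frac{S_n(s,\theta)}{n} \;\cvginprob\; \sum_{s>K}\int_0^\theta q_s(\phi)\,d\phi,$$
and $C^{(K)}$ is bounded by the same expression since $\lambda_s\le1$. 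As $K\to\infty$ this common bound tends to $0$, using exactly the finiteness $\sum_s\int_0^\theta q_s(\phi)\,d\phi = \theta$ established at the start.

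To conclude, given $\epsilon>0$ choose $K$ so that $\sum_{s>K}\int_0^\theta q_s(\phi)\,d\phi<\epsilon/3$; then $C^{(K)}<\epsilon/3$ deterministically, $P(B_n^{(K)}\ge2\epsilon/3)\to0$, and $P(|A_n^{(K)}|\ge\epsilon/3)\to0$, so by the triangle inequality $P(|W_n(\theta)/n - L|\ge\epsilon)\to0$. The main obstacle, and the reason the monotonicity/boundedness of the scoring rule is needed, is precisely the tail: the coefficients $\lambda_s$ need not decay at all (for Borda, $\lambda_s=1$ for every $s$), so the tail cannot be controlled by pointwise smallness of $\lambda_s$; instead it is the matching-conservation identity $\sum_s S_n(s,\theta) = \lfloor n\theta\rfloor$ together with $\sigma_n(s)\le1$ that forces the tails to vanish.
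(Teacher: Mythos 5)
Your proof is correct and follows essentially the same route as the paper: the same truncation into a finite sum converging in probability plus two tails, both controlled via $0\le\sigma_n(s),\lambda_s\le1$ together with the mass identity $\sum_s S_n(s,\theta)=\lfloor n\theta\rfloor$ and $\sum_s q_s(\theta)=1$, finished by the same $\epsilon/3$ argument. No substantive differences to report.
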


\begin{proof}{Proof of Theorem \ref{thm:welfare_asymptotics}.}
For convenience, define $\sigma_n(s)=0$ when $n<s$; this allows us to write
$W_n(\theta)=\sum_{s=1}^{\infty}\sigma_n(s) S_n(s,\theta)$.
For any fixed $s'$, the finite sum $Y_n(s')$ defined by
$$ Y_n(s') = \sum_{s=1}^{s'} \left(\sigma_n(s)\frac{S_n(s,\theta)}{n} - \lambda_s \int_0^\theta q_s(\phi)\;d\phi\right)
  $$
has $Y_n(s')\cvginprob0$ as $n\to\infty$. We have
$$ \frac{W_n(\theta)}{n} - \sum_{s=1}^{\infty} \lambda_s \int_0^\theta q_s(\phi)\;d\phi
\;\;=\;\; Y_n(s')
\;+\; \sum_{s>s'} \sigma_n(s)\frac{S_n(s,\theta)}{n}
\;-\; \sum_{s>s'} \lambda_s \int_0^\theta q_s(\phi)\;d\phi
  $$
and so
\begin{equation}
\label{eq:welfare_cvg_bound}
\left|\frac{W_n(\theta)}{n} - \sum_{s=1}^{\infty} \lambda_s \int_0^\theta q_s(\phi)\;d\phi\right|
\;\;\leq\;\; \left|Y_n(s')\right|
\;+\; \sum_{s>s'} \frac{S_n(s,\theta)}{n}
\;+\; \sum_{s>s'} \int_0^\theta q_s(\phi)\;d\phi
\end{equation} 
(since $0\leq\sigma_n(s)\leq1$).
Note also that
$\sum_{s=1}^{s'}\frac{S_n(s,\theta)}{n}\;\cvginprob\;\sum_{s=1}^{s'}\int_0^\theta q_s(\phi)\;d\phi$,
while
$$ \sum_{s=1}^{\infty} \frac{S_n(s,\theta)}{n}
\;=\; \frac{\lfloor n\theta\rfloor}{n}
\;\to\; \theta
\;=\; \sum_{s=1}^{\infty} \int_0^\theta q_s(\phi)\;d\phi ,
  $$
and so
$$ \sum_{s>s'} \frac{S_n(s,\theta)}{n}
\;\cvginprob\; \sum_{s>s'} \int_0^\theta q_s(\phi)\;d\phi .
  $$
We can now establish the required convergence in probability.
Let $\epsilon>0$, and choose $s'$ so that $\sum_{s>s'} \int_0^\theta q_s(\phi)\;d\phi < \epsilon/3$.
Then (\ref{eq:welfare_cvg_bound}) gives
$$ P\left(\left|\frac{W_n(\theta)}{n} - \sum_{s=1}^{\infty} \lambda_s \int_0^\theta q_s(\phi)\;d\phi\right|>\epsilon\right)
\;\leq\; P\left(\left|Y_n(s')\right|>\epsilon/3\right)
\;+\; P\left(\sum_{s>s'} \frac{S_n(s,\theta)}{n}>\epsilon/3\right)
\;\to\; 0
  $$
as $n\to\infty$.
\hfill \qedsymbol
\end{proof}

Theorem \ref{thm:welfare_asymptotics} is applicable to naive Boston (via Corollary \ref{cor:naive_obtained}),
adaptive Boston (via Corollary \ref{cor:adaptive_obtained}), and serial dictatorship (via Corollary \ref{cor:SD_obtained}).

\begin{corollary}
\label{cor:welfare_asymptotics_k_approval}
The average $k$-approval welfare over all agents satisfies
$$ \frac1n W_n(1) \;\cvginprob\;
\begin{cases}
1 - \omega_{k+1} & \text{for Naive Boston}\\
(1 - e^{-1})  \sum_{\{(r,s): r\leq s\leq k\}} e^{1-r} u_{rs} & \text{for Adaptive Boston}\\
\frac{k}{k+1} & \text{for serial dictatorship}.
\end{cases}
$$
\end{corollary}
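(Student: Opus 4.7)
The plan is to apply Theorem \ref{thm:welfare_asymptotics} with $\theta=1$ and the $k$-approval scoring rule, for which $\sigma_n(s) = \lambda_s = 1$ if $s\leq k$ and $0$ otherwise. This reduces the problem to evaluating
$$ \sum_{s=1}^{k} \int_0^1 q_s(\phi)\, d\phi $$
for each of the three mechanisms, with $q_s$ given by Corollaries \ref{cor:naive_obtained}, \ref{cor:adaptive_obtained}, and \ref{cor:SD_obtained} respectively.

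For Serial Dictatorship I would plug in $q_s(\phi)=\phi^{s-1}(1-\phi)$ and use the beta-type integral $\int_0^1 \phi^{s-1}(1-\phi)\,d\phi = \frac{1}{s(s+1)}$. The sum then telescopes as $\sum_{s=1}^k \bigl(\tfrac1s - \tfrac{1}{s+1}\bigr) = 1 - \tfrac{1}{k+1} = \tfrac{k}{k+1}$. For Naive Boston, use $q_s(\theta) = z'_s(\theta) - z'_{s+1}(\theta)$ from Corollary \ref{cor:naive_obtained}, so $\int_0^1 q_s(\phi)\,d\phi = z_s(1) - z_{s+1}(1) = \omega_s - \omega_{s+1}$ by Theorem \ref{thm:naive_asym_agent_numbers}. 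Summing from $s=1$ to $k$ telescopes to $\omega_1 - \omega_{k+1} = 1 - \omega_{k+1}$.

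For Adaptive Boston, substitute $q_s(\theta) = \sum_{r=1}^{s} u_{rs}\bigl(y'_r(\theta) - y'_{r+1}(\theta)\bigr)$ from Corollary \ref{cor:adaptive_obtained}. Integrating termwise and using $y_r(1) = e^{1-r}$ from Theorem \ref{thm:adaptive_asym_agent_numbers} gives
$$ \int_0^1 q_s(\phi)\,d\phi \;=\; \sum_{r=1}^{s} u_{rs}(e^{1-r} - e^{-r}) \;=\; (1-e^{-1})\sum_{r=1}^{s} e^{1-r} u_{rs}.
$$
Swapping the order of summation in $\sum_{s=1}^{k}\sum_{r=1}^{s}$ to $\sum_{r=1}^{k}\sum_{s=r}^{k}$ (equivalently, summing over the triangle $\{(r,s): r\leq s\leq k\}$) yields the stated expression $(1-e^{-1})\sum_{\{(r,s): r\leq s\leq k\}} e^{1-r} u_{rs}$.

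Each of the ingredients needed for Theorem \ref{thm:welfare_asymptotics} is already verified: the remark after Corollary \ref{cor:naive_obtained}, Remark \ref{rem:adaptive_sum_qs}, and the fact that $\sum_s q_s(\theta) = 1-\theta+\theta = 1$ for the SD family confirm the required normalization $\sum_s q_s(\theta)=1$, while $\sigma_n(s)\to\lambda_s$ is immediate for the coherent $k$-approval rule. The only subtlety is the bookkeeping in the Adaptive Boston case, and the main obstacle there is just being careful about the telescoping $y_r(1)-y_{r+1}(1) = e^{1-r}(1-e^{-1})$ and the interchange of the double sum; everything else is routine.
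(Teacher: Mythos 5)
Your proposal is correct and follows essentially the same route as the paper: apply Theorem \ref{thm:welfare_asymptotics} with the $k$-approval rule at $\theta=1$ and substitute the $q_s$ from Corollaries \ref{cor:naive_obtained}, \ref{cor:adaptive_obtained} and \ref{cor:SD_obtained}. The telescoping evaluations ($\omega_s-\omega_{s+1}$, $y_r(1)-y_{r+1}(1)=e^{1-r}(1-e^{-1})$, and $\tfrac1s-\tfrac1{s+1}$) that you spell out are exactly the computations the paper leaves implicit, so there is nothing to add.
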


\begin{proof}{Proof.}
For the special case of $k$-approval utilities, the result of Theorem \ref{thm:welfare_asymptotics} reduces to
$$ \frac1n W_n(\theta) \cvginprob \sum_{s=1}^k \int_0^\theta q_s(\phi)\;d\phi .
  $$
Setting $\theta=1$ and using the expressions for $q_s(\phi)$ found in Corollary \ref{cor:naive_obtained}, Corollary
 \ref{cor:adaptive_obtained}, and Corollary \ref{cor:SD_obtained} yields the results.
\qedsymbol
\end{proof}

Corollary~\ref{cor:welfare_asymptotics_k_approval}  and Lemma~\ref{lem:wr_asymptotics} show that for each fixed $k$, Naive Boston has higher average welfare than Serial Dictatorship. This is expected, because Naive Boston maximizes the number of agents receiving their first choice, then the number receiving their second choice, etc. Adaptive Boston apparently scores better than Serial Dictatorship for each $k$, although we do not have a formal proof. Figure~\ref{fig:welf comp} illustrates this for $1\leq k \leq 10$. Already for $k=3$, where the limiting values are 0.75, 0.776 and 0.803, the algorithms give similar welfare results, and they each asymptotically approach $1$ as $k\to \infty$.

\begin{table}[hbtp]
\centering
\begin{tabular}{|c|c|c|c|}
\hline
algorithm & $k=1$ & $k=2$ & $k=3$\\
\hline
Naive Boston & $1-e^{-1} \approx 0.632$  & $1 - e^{-1}e^{-e^{-1}}\approx 0.745$ & $1 - e^{-1}e^{-e^{-1}}e^{-e^{-e^{-1}}}\approx 0.803$\\
Adaptive Boston  & $1-e^{-1} \approx 0.632$ & $(1-e^{-1})(1+e^{-2}) \approx 0.718$ & $(1-e^{-1})(1+2e^{-2}-e^{-3}+e^{-5})\approx 0.776$\\
Serial Dictatorship & $1/2 = 0.500$ & $2/3\approx 0.667$ & $3/4=0.750$ \\
\hline
\end{tabular}
\vspace{10pt}
\caption{Limiting values as $n\to \infty$ of $k$-approval welfare.}
\label{t:app_limits}
\end{table}

\begin{figure}
    \centering
    \includegraphics[width=0.8\textwidth]{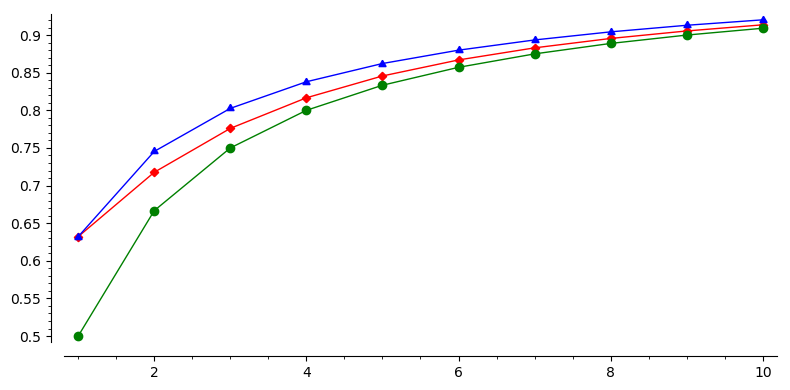}
    \caption{Limiting values as $n\to \infty$ of $k$-approval welfare for $1\leq k \leq 10$. Top: Naive Boston. Middle: Adaptive Boston. Bottom: Serial Dictatorship.}
    \label{fig:welf comp}
\end{figure}

\begin{corollary}
\label{cor:welfare_asymptotics_Borda}
For an assignment mechanism as in Theorem \ref{thm:welfare_asymptotics},
the Borda welfare satisfies
$$ \frac1n W_n(\theta) \cvginprob \theta .
  $$
\end{corollary}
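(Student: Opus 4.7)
The plan is to apply Theorem \ref{thm:welfare_asymptotics} directly to the Borda scoring rule and then interchange a sum with an integral. First, I would recognize that the Borda rule $\sigma_n(s) = \frac{n-s}{n-1}$ satisfies $0 \leq \sigma_n(s) \leq 1$ and, for each fixed $s \geq 1$, converges to $\lambda_s := 1$ as $n \to \infty$. Hence the hypothesis of Theorem \ref{thm:welfare_asymptotics} is met.

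Applying that theorem with $\lambda_s = 1$ yields
\begin{equation*}
\frac{1}{n} W_n(\theta) \;\cvginprob\; \sum_{s=1}^{\infty} \int_0^\theta q_s(\phi)\,d\phi.
\end{equation*}
Next, since each $q_s$ is non-negative, Tonelli's theorem permits swapping sum and integral, giving
\begin{equation*}
\sum_{s=1}^{\infty} \int_0^\theta q_s(\phi)\,d\phi \;=\; \int_0^\theta \sum_{s=1}^{\infty} q_s(\phi)\,d\phi.
\end{equation*}
The hypothesis $\sum_{s=1}^\infty q_s(\phi) = 1$ (which holds for each of the three mechanisms; see Remarks \ref{rem:naive_qs} and \ref{rem:adaptive_sum_qs}, and inspection of Corollary \ref{cor:SD_obtained}) then collapses the inner sum to $1$, and the integral evaluates to $\theta$.

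The only subtlety worth flagging is that, unlike coherent scoring rules such as $k$-approval, Borda is not covered by a one-line substitution into a finite formula; the convergence $\sigma_n(s) \to 1$ happens rank by rank but the sum $\sum_s \sigma_n(s) q_s$ involves infinitely many terms whose contributions are being reshuffled in the limit. All the real work, however, is already packaged in the truncation-plus-tail-bound argument inside the proof of Theorem \ref{thm:welfare_asymptotics}, so no additional difficulty arises here — the only nontrivial step is the harmless Tonelli swap, which is justified by non-negativity of the $q_s$.
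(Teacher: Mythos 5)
Your proposal is correct and matches the paper's (implicit) argument: the paper gives no separate proof, treating the corollary as an immediate application of Theorem~\ref{thm:welfare_asymptotics} with $\lambda_s=1$, and the identity $\sum_{s=1}^{\infty}\int_0^\theta q_s(\phi)\,d\phi=\theta$ that your Tonelli swap justifies is already invoked inside the proof of that theorem. Your rank-by-rank verification that $\sigma_n(s)=\frac{n-s}{n-1}\to 1$ and the remark on non-coherence are consistent with the paper's own discussion following the corollary.
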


\begin{corollary}
\label{cor:welfare_asymptotics_compare_borda}
For each of Naive Boston, Adaptive Boston and Serial Dictatorship, the average normalized Borda welfare over all agents is asymptotically equal to $1$.
\end{corollary}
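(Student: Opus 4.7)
The plan is to derive the claim immediately from Corollary~\ref{cor:welfare_asymptotics_Borda} by setting $\theta = 1$. The Borda scoring rule $\sigma_n(s) = (n-s)/(n-1)$ is already normalized to take values in $[0,1]$ with $\sigma_n(1) = 1$, so the maximum possible total Borda welfare of the $n$ agents is $n$ and the average normalized Borda welfare is exactly $W_n(1)/n$.

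To apply Corollary~\ref{cor:welfare_asymptotics_Borda} to each of the three mechanisms I need to check the hypotheses of Theorem~\ref{thm:welfare_asymptotics}. The convergence $\frac{1}{n} S_n(s,\theta) \cvginprob \int_0^\theta q_s(\phi)\,d\phi$ is supplied by Corollary~\ref{cor:naive_obtained}, Corollary~\ref{cor:adaptive_obtained}, and Corollary~\ref{cor:SD_obtained} respectively; and the normalization $\sum_{s=1}^\infty q_s(\theta) = 1$ is recorded in Remark~\ref{rem:naive_qs} for Naive Boston, in Remark~\ref{rem:adaptive_sum_qs} for Adaptive Boston, and for Serial Dictatorship follows at once from summing the geometric series $\sum_{s\geq 1}\theta^{s-1}(1-\theta)$ given by Corollary~\ref{cor:SD_obtained}. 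Setting $\theta = 1$ in the conclusion of Corollary~\ref{cor:welfare_asymptotics_Borda} then gives $\frac{1}{n}W_n(1) \cvginprob 1$, which is the statement to be proved.

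There is no real technical obstacle. The substance of the result lies in Corollary~\ref{cor:welfare_asymptotics_Borda}, which follows from Theorem~\ref{thm:welfare_asymptotics} via the observation that the Borda rule satisfies $\sigma_n(s) \to 1$ for every fixed $s$, so in the notation of that theorem $\lambda_s = 1$ throughout and $\sum_{s\geq1}\lambda_s \int_0^\theta q_s(\phi)\,d\phi = \int_0^\theta \sum_{s\geq1} q_s(\phi)\,d\phi = \theta$ (the interchange being justified by nonnegativity via monotone convergence). The conceptual message is that all three mechanisms asymptotically match nearly every agent to an item whose preference rank is $o(n)$, and any scoring rule whose per-rank utility tends to $1$ at every fixed rank will assign such agents asymptotically full utility; hence the Borda rule cannot distinguish the three mechanisms in average welfare, while the finer-grained analysis of Corollary~\ref{cor:welfare_asymptotics_k_approval} and the order-bias results elsewhere in the paper can.
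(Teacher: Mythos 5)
Your proof is correct and follows the same route the paper intends: the corollary is an immediate consequence of Corollary~\ref{cor:welfare_asymptotics_Borda} at $\theta=1$, whose hypotheses (via Theorem~\ref{thm:welfare_asymptotics}, with $\lambda_s=1$ and $\sum_s q_s=1$) are supplied for the three mechanisms exactly by the corollaries and remarks you cite. The paper gives no separate argument beyond this, so there is nothing to add.
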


\begin{remark}
Note that the Borda utility of a fixed preference rank $s$ has the limit $\lambda_s=1$, meaning that,
in the asymptotic limit as $n\to\infty$, agents value the $s$th preference (of $n$) just as highly as the first preference.
Consequently, mechanisms such as serial dictatorship or the Boston algorithms, which under IC are able to
give most agents one of their first few preferences, achieve the same asymptotic Borda welfare as if
every agent were matched to his first preference. This behaviour is really a consequence of the normalization of the
Borda utilities $\sigma_n(s) = \frac{n-s}{n-1}$ to the interval $[0,1]$: the first few preferences all have
utility close to 1.
\end{remark}

\section{Order bias}
\label{s:order_bias}

A recently introduced \cite{FrPW2021} average-case measure of fairness of discrete allocation algorithms is \emph{order bias}. The relevant definitions are recalled here for an arbitrary discrete assignment algorithm $\algo$ that fixes an order on agents (such as the order $\rho$ assumed in the present paper).

\begin{defn}
The \emph{expected rank distribution} under $\algo$ is the mapping $D_\algo$
on $\{1, \dots, n\} \times \{1, \dots, n\}$ whose value at $(r,j)$ is the probability under IC
that $\algo$ assigns the $r$th agent his $j$th most-preferred item.
\end{defn}
We usually represent this mapping as a matrix where the rows represent agents and the columns represent items. 

\begin{defn}
Let $u$ be a common rank utility function for all agents: $u(j)$ is the utility derived by an agent
who obtains his $j$th preference. Define the order bias of $\algo$ by
$$ \beta_n(\algo; u) = \frac{\max_{1\leq p, q \leq n} |U(p) - U(q)|}{u(1)-u(n)},
  $$
where $U(p)=\sum_{j=1}^n D_\algo(p,j) u(j)$, the expected utility of the item obtained by
the $p$th agent.
\end{defn}

It is desirable that $\beta_n$ be as small as possible, out of fairness to each position in the order in the absence of any knowledge of the profile.

The mechanisms in this paper (naive and adaptive Boston, and serial dictatorship) treat agents
unequally by using a choosing/tiebreak order $\rho$. In all of these mechanisms, the first agent in
$\rho$ always obtains his first-choice item, and so has the best possible expected utility.
The last agent in $\rho$ has the smallest expected utility; this is a consequence of the
following result.

\begin{theorem}[Earlier positions do better on average]
\label{thm:last_agent_is_worst_off}
Let $a$ be an agent in an instance of the house allocation problem with IC preferences.
Let the random variable $S$ be the preference rank of the item obtained by $a$.
The naive and adaptive Boston mechanisms and serial dictatorship all have the property that
for all $s\geq1$, $P(S>s)$ is monotone increasing in the relative position of $a$
({\it i.e.} greater for later agents in $\rho$).
\end{theorem}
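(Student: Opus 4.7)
The plan is to prove monotonicity by a coupling argument that exploits the symmetry of the Impartial Culture distribution. Given a random preference profile $\sigma=(\sigma_1,\ldots,\sigma_n)$ and adjacent positions $k$ and $k+1$, let $\sigma'$ denote the profile obtained by transposing the preferences at those two positions. By IC, $\sigma$ and $\sigma'$ are identically distributed. Write $R_i(\sigma)$ for the preference rank obtained by the agent at position $i$ when the mechanism runs on $\sigma$. The main step is the pointwise inequality
\[
R_k(\sigma') \;\leq\; R_{k+1}(\sigma),
\]
from which $P(R_k>s)=P(R_k(\sigma')>s)\leq P(R_{k+1}(\sigma)>s)=P(R_{k+1}>s)$ follows for every $s$. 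This yields monotonicity of $P(S>s)$ across adjacent positions; the general case follows by chaining adjacent transpositions.

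For Serial Dictatorship the pointwise claim is immediate. Let $T$ be the set of items taken at positions $1,\ldots,k-1$, identical in both runs because the preferences there are unaltered. In $\sigma'$ the agent with preferences $\sigma_{k+1}$ (now at position $k$) chooses his most-preferred item in $\{1,\ldots,n\}\setminus T$, while in $\sigma$ the same agent chooses his most-preferred item in $\{1,\ldots,n\}\setminus T\setminus\{x\}$, where $x$ is the item taken at position $k$ under $\sigma$. Since the second feasible set is contained in the first, the rank (in $\sigma_{k+1}$) of the item chosen in $\sigma$ is at least the rank of the item chosen in $\sigma'$.

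For Naive and Adaptive Boston I would argue by induction on the number of rounds remaining, comparing the two coupled runs round by round. At round $r$ the swap influences outcomes only through the bids of the two swapped agents, and three cases exhaust the possibilities. (i) These two agents bid for different items: transposing two consecutive positions does not alter the relative $\rho$-order of any agent outside $\{k,k+1\}$ against any other agent, so every contested item has the same winner in both runs and the state at the start of round $r+1$ is identical. (ii) They bid for the same item $t$, but some agent at a position earlier than $k$ also bids for $t$: that earlier agent wins $t$ in both runs and the states again agree going into round $r+1$. (iii) They bid for $t$ and no agent at a position earlier than $k$ bids for $t$: then $t$ is won by whoever sits at position $k$, which is the agent with preferences $\sigma_{k+1}$ in $\sigma'$ and the agent with preferences $\sigma_k$ in $\sigma$. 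Case (iii) yields the pointwise inequality directly, since $R_k(\sigma')$ equals the preference rank of $t$ in $\sigma_{k+1}$, while in $\sigma$ the agent with preferences $\sigma_{k+1}$ loses $t$ and must continue, ending with a strictly worse rank (in Naive Boston because rank equals the round of exit, and in Adaptive Boston because $t$ was his top-available item, so any item he subsequently obtains is ranked lower in his order). In cases (i) and (ii) the inductive hypothesis applied at round $r+1$ closes the argument.

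The main technical obstacle is verifying the recursion for Adaptive Boston, whose round-$(r+1)$ bids depend on the set of items still available. The key observation is that in cases (i) and (ii) the sets of matched agents and taken items at round $r$ agree in the two coupled runs, so the available sets at round $r+1$ coincide and each remaining agent makes the same bid in both runs. Case (iii) breaks this parity (a different agent is matched to $t$) but requires no further recursion, as it yields the desired comparison immediately. Since the Boston mechanisms terminate in finitely many rounds, the induction is well-founded.
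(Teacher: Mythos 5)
Your proposal is correct and takes essentially the same route as the paper: both arguments couple the given run with the run in which the two adjacent agents exchange preference orders, establish a pointwise rank domination under that coupling, and use IC-exchangeability to turn it into the distributional claim, your round-by-round case analysis being just a more explicit rendering of the paper's observation that the two runs evolve identically (with the pair's bids transposed) until the two agents first contend head-to-head without an earlier bidder. The one detail to add is the Naive Boston subcase where the item $t$ bid for by both swapped agents was already taken in an earlier round: there neither agent wins (so the assertions in your cases (ii) and (iii) about who wins $t$ need the proviso that $t$ is still available), but parity is preserved and the induction continues exactly as in your case (ii).
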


\begin{remark}
Thus in the expected rank distribution matrix, each row stochastically dominates the one below it.
For each common rank utility function $u$, the expected utility of agent $a$ is $u(1) + \sum_{s=1}^{n-1} (u(s+1)-u(s))\;P(S>s)$,
so Theorem \ref{thm:last_agent_is_worst_off} implies that the expected utility is
monotone decreasing in the relative position of $a$. In particular, the 
first agent has the highest and the last agent the lowest expected utility.

\end{remark}

\begin{proof}{Proof of Theorem \ref{thm:last_agent_is_worst_off}.}
Let $a_1$ and $a_2$ be consecutive agents, with $a_2$ immediately after $a_1$ in $\rho$.
Let $S_1$ and $S_2$ be the preference ranks of the items obtained by $a_1$ and $a_2$.
It will suffice to show that $P(S_1>s)\leq P(S_2>s)$.
To this end, consider an alternative instance of the problem in which $a_1$ and $a_2$
exchange preference orders before the allocation mechanism is applied.
We will refer to this instance and the original one as the ``exchanged'' and ``non-exchanged''
processes respectively.
Denote by $S'_1$ and $S'_2$ the preference ranks of the items obtained by $a_1$ and $a_2$
in the exchanged process.
Since the exchanged process also has IC preferences, $S_1$ and $S'_1$ have the same probability
distribution; similarly $S_2$ and $S'_2$.

We now show that all three of our allocation mechanisms have the property that $S_1\leq S'_2$.
From this the result will follow, since $S_1\leq S'_2\implies P(S_1>s)\leq P(S'_2>s) = P(S_2>s)$.

For serial dictatorship, the exchanged and non-exchanged processes evolve identically for
agents preceding $a_1$ and $a_2$. In the non-exchanged process, agent $a_1$ then finds that his
first $S_1-1$ preferences are already taken; in the exchanged process, these same items are the
first $S_1-1$ preferences of $a_2$. Hence, $S'_2\geq S_1$.

For the Boston mechanisms, let $R$ be the number of unsuccessful bids made by $a_1$ in the
non-exchanged process. Then the exchanged and non-exchanged processes evolve identically for
the first $R$ rounds, except that the bids of $a_1$ and $a_2$ are made in reversed order; this
reversal has no effect on the availability of items to other agents. After these $R$ rounds,
$a_1$ (in the non-exchanged process) and $a_2$ (in the exchanged process) have reached the
same point in their common preference order; in the next round both will bid for the $S_1$th
preference in this order. Hence, $S'_2\geq S_1$.
\hfill \qedsymbol
\end{proof}

The order bias of Serial Dictatorship is easy to analyse.

\begin{theorem}
\label{thm:sd_bias_app_exact}
Fix $k\geq 1$ and $n\geq 1$. Then
\begin{enumerate}[(i)]
\item The $k$-approval order bias for Serial Dictatorship equals $1 - \frac{k}{n}$.
\item The Borda order bias for Serial Dictatorship equals $1/2$.
\end{enumerate}
\end{theorem}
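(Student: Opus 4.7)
The plan is to combine Theorem \ref{thm:last_agent_is_worst_off} (monotonicity of expected utility in position) with the explicit formula from Theorem \ref{thm:SD_matching_probabilities}. The monotonicity result tells us that the expected utility $U(p)$ is maximized at $p=1$ and minimized at $p=n$, so the maximum in the definition of $\beta_n$ is $U(1)-U(n)$. The denominator equals $1$ for both $k$-approval (with $k<n$) and Borda, since in each case $u(1)=1$ and $u(n)=0$. So it suffices to compute $U(1)$ and $U(n)$.

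First I would handle $U(1)$: the first agent in $\rho$ faces no prior claims, so always obtains his first preference, giving $U(1)=u(1)=1$ in both parts. Next I would compute the distribution of the preference rank obtained by agent $n$. Specializing Theorem \ref{thm:SD_matching_probabilities} to $k=n$ gives $\binom{n-s}{0}/\binom{n}{n-1} = 1/n$ for each $s=1,\ldots,n$, so the last agent's outcome is uniformly distributed over his preference ranks. (This also matches the remark after Theorem \ref{thm:SD_matching_probabilities}.)

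For part (i), under $k$-approval $u(s)=1_{s\leq k}$, so $U(n) = \sum_{s=1}^n (1/n)\, 1_{s\leq k} = k/n$, yielding
\[ \beta_n = \frac{U(1)-U(n)}{u(1)-u(n)} = \frac{1-k/n}{1-0} = 1-\frac{k}{n}. \]
For part (ii), under Borda $u(s)=(n-s)/(n-1)$, so
\[ U(n) = \frac{1}{n}\sum_{s=1}^n \frac{n-s}{n-1} = \frac{1}{n(n-1)} \cdot \frac{n(n-1)}{2} = \frac{1}{2}, \]
which gives $\beta_n = 1 - 1/2 = 1/2$.

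There is no real obstacle here: the only subtlety is verifying that the extremes of $U(\cdot)$ occur at $p=1$ and $p=n$ (which is given by Theorem \ref{thm:last_agent_is_worst_off} and the accompanying remark) and that $u(1)-u(n)=1$ for both scoring rules so that no normalization issues arise. Everything else reduces to a one-line computation using the explicit matching probabilities.
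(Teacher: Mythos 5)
Your proof is correct and follows essentially the same route as the paper: identify the extremes of expected utility at the first and last positions (via Theorem \ref{thm:last_agent_is_worst_off}), use the fact that the last agent's obtained rank is uniform on $\{1,\dots,n\}$ (the $k=n$ case of Theorem \ref{thm:SD_matching_probabilities}), and compute $U(n)=k/n$ resp.\ $1/2$ against $U(1)=1$. Your explicit check that $u(1)-u(n)=1$ is a small extra care the paper leaves implicit (and, like the paper, the formula in (i) tacitly assumes $k<n$), but the argument is the same.
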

\begin{proof}{Proof.}
The probability of getting each choice is $1/n$ for the last agent. Hence the expected utility under $k$-approval for that agent is $k/n$. The first agent always gets its first choice. This yields (i). For (ii), note that for the last agent, the probability of getting each rank in his preference order is $1/n$. Hence the expected utility under Borda for that agent is 
$$\frac{1}{n} \sum_{j=1}^{n} \frac{n-j}{n-1} = \frac{1}{n(n-1)} \sum_{j=0}^{n-1} j 
= \frac{1}{2}.$$
Again, the first agent always gets his first choice.
\hfill \qedsymbol
\end{proof}

\begin{corollary}
\label{cor:sd_bias_app_asymp}
For each fixed $k$, the $k$-approval order bias of SD is asymptotically equal to $1$ and the Borda order bias is asymptotically equal to $1/2$.
\end{corollary}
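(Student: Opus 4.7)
The plan is to deduce this corollary as an immediate consequence of Theorem \ref{thm:sd_bias_app_exact}, since that theorem already gives exact (non-asymptotic) formulas for both the $k$-approval and Borda order biases of Serial Dictatorship. No new probabilistic argument is needed; the work is purely in passing to the limit.

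First, for the $k$-approval part, I would simply invoke Theorem \ref{thm:sd_bias_app_exact}(i), which states that the $k$-approval order bias equals $1 - k/n$ for every $n \geq 1$ and every fixed $k \geq 1$. For $k$ held fixed, $1 - k/n \to 1$ as $n \to \infty$, yielding the first assertion.

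Second, for the Borda part, Theorem \ref{thm:sd_bias_app_exact}(ii) already gives the exact value $1/2$ for every $n$, independently of $n$. The asymptotic statement is therefore immediate (indeed, the sequence is constant).

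There is no real obstacle here: the corollary is only stated for emphasis and for comparison with the asymptotic results of the Boston mechanisms in the preceding sections. The only thing worth remarking on is that the two limits behave qualitatively differently as functions of $k$: for $k$-approval the bias tends to its worst possible value $1$ (because the last agent obtains any fixed first $k$ preferences with vanishing probability $k/n$, while the first agent always obtains his top choice), whereas for Borda the bias stays pinned at $1/2$ because the last agent's expected normalized Borda score is exactly the average $1/2$ of the uniformly distributed rank $1/n,\ldots,(n-1)/(n-1)$ values, regardless of $n$.
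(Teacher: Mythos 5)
Your proposal is correct and is exactly the paper's (implicit) argument: the corollary follows immediately from the exact formulas in Theorem \ref{thm:sd_bias_app_exact} by letting $n\to\infty$ with $k$ fixed, so $1-\frac{k}{n}\to 1$ while the Borda bias is constantly $\frac12$. Nothing further is needed.
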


We now move to the Boston mechanisms.     

\begin{theorem}
\label{thm:na_boston bias}

For each fixed $k$, the $k$-approval order bias of Naive Boston is asymptotically
$z'_{k+1}(1).$ 
\end{theorem}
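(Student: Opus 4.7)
The plan is to identify the two positions that realize the maximum and minimum expected utilities, compute each of them (exactly for the first agent, asymptotically for the last), and take the difference. Since $u$ is $k$-approval, $u(1)=1$ and $u(n)=0$ for $n>k$, so the denominator in the definition of $\beta_n$ equals $1$, and we may focus on the numerator $\max_{p,q}|U(p)-U(q)|$.

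First, by Theorem~\ref{thm:last_agent_is_worst_off}, the expected utility $U(p)$ is monotone decreasing in $p$ (the remark following that theorem spells this out for any common rank utility, in particular for $k$-approval). Consequently the maximum of $|U(p)-U(q)|$ is $U(1)-U(n)$. The first agent in $\rho$ always wins his first-round bid (no one precedes him in the tiebreak), so he obtains his top preference and thus $U(1)=u(1)=1$ exactly.

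It therefore remains to show $U(n)\to 1-z'_{k+1}(1)$ as $n\to\infty$. For Naive Boston the round at which an agent exits equals the preference rank of the item he obtains, so with $R_n(\theta)$ denoting that round for agent $a_n(\theta)$ as in Theorem~\ref{thm:naive_individual_asymptotics}, we have
\[
U(n) \;=\; \sum_{s=1}^{k} P(R_n(1)=s) \;=\; 1 - P(R_n(1)\geq k+1).
\]
Theorem~\ref{thm:naive_individual_asymptotics} applied at $\theta=1$ gives $P(R_n(1)\geq k+1)\to z'_{k+1}(1)$, so $U(n)\to 1-z'_{k+1}(1)$. Combining, $\beta_n\to 1-(1-z'_{k+1}(1))=z'_{k+1}(1)$.

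There is essentially no obstacle here beyond invoking the monotonicity result and the individual-agent limit theorem already proved; the only thing to be careful about is that the maximum in the definition of $\beta_n$ really is attained at $(p,q)=(1,n)$, which is why Theorem~\ref{thm:last_agent_is_worst_off} is needed rather than merely the limit for $\theta=1$.
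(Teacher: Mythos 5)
Your proposal is correct and follows essentially the same route as the paper: the bias reduces to $U(1)-U(n)$ with $U(1)=1$, and for Naive Boston $1-U(n)$ is the probability the last agent survives to round $k+1$, which tends to $z'_{k+1}(1)$ by Theorem~\ref{thm:naive_individual_asymptotics} at $\theta=1$. The paper's proof is just a compressed version of this, with the appeal to Theorem~\ref{thm:last_agent_is_worst_off} (to place the extremes at the first and last positions) left implicit.
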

\begin{proof}{Proof.}
Since the first agent always gets its top choice with utility $1$, it follows that
$\beta_n(NB)$ equals the probability that the last agent survives until round $k+1$, which asymptotically equals $z'_{k+1}(1)$.
\qedsymbol
\end{proof}

\begin{theorem}
\label{thm:adaptive_kapproval_bias}
For each fixed $k$, the $k$-approval order bias of Adaptive Boston is asymptotically
$$
1 - e^{-1} \sum_{\{(r,s): r\leq s \leq k\}}\left(1 - e^{-1}\right)^{r-1}  u_{rs}.
  $$
\end{theorem}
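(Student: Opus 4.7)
The plan is to reduce the order bias to the difference $U(1)-U(n)$ in expected $k$-approval utility between the first and last agent, and then apply Theorem~\ref{thm:adaptive_individual_asymptotics}(\ref{adaptive_individual_obtained}) at $\theta=1$. First I would fix $n>k$, so that the $k$-approval rank utility $u(s)=\mathbf{1}_{s\le k}$ satisfies $u(1)-u(n)=1$; then $\beta_n=\max_{p,q}|U(p)-U(q)|$. The remark following Theorem~\ref{thm:last_agent_is_worst_off} says that $U(p)$ is monotone decreasing in the relative position of agent $p$ for any common rank utility, so the maximum range is $U(1)-U(n)$. The first agent always obtains his top choice, so $U(1)=1$ exactly; hence the bias equals $1-U(n)$ for every $n>k$.

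Next I would take the limit of $U(n)$. Writing $U(n)=\sum_{s=1}^{k}P(\text{last agent is matched to his $s$th preference})$, i.e.\ the sum of $P(V_n(R_n(1),1)=s)$ in the notation of Theorem~\ref{thm:adaptive_individual_asymptotics}, part (\ref{adaptive_individual_obtained}) of that theorem gives
\[
U(n)\;\to\;\sum_{s=1}^{k} q_s(1)\qquad\text{as }n\to\infty.
\]
Then I would substitute the formula $q_s(\theta)=\sum_{r=1}^{s}u_{rs}\bigl(y'_r(\theta)-y'_{r+1}(\theta)\bigr)$ from Corollary~\ref{cor:adaptive_obtained} together with $y'_r(1)=(1-e^{-1})^{r-1}$ from Remark~\ref{rem:adaptive_yr}, so that $y'_r(1)-y'_{r+1}(1)=e^{-1}(1-e^{-1})^{r-1}$. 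Exchanging the order of summation over the triangular index set $\{(r,s):1\le r\le s\le k\}$ produces exactly the expression $e^{-1}\sum_{\{(r,s):r\le s\le k\}}(1-e^{-1})^{r-1}u_{rs}$, and subtracting from $1$ gives the stated asymptotic bias.

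There is no serious obstacle: once Theorems~\ref{thm:last_agent_is_worst_off} and \ref{thm:adaptive_individual_asymptotics} are in hand, the remaining work is pure bookkeeping. The one point requiring a line of justification is identifying $(p,q)=(1,n)$ as the arg-max pair, which is handled by the monotonicity remark; and one should note that the passage from $\beta_n$ to $1-U(n)$ is an equality, not merely an asymptotic, because $U(1)=1$ and the denominator $u(1)-u(n)=1$ hold for every $n>k$, so the only limit being taken is that of $U(n)$.
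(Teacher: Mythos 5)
Your proposal is correct and follows essentially the same route as the paper: identify the bias with $1-U(n)$ (using that the first agent always gets his top choice and, via Theorem~\ref{thm:last_agent_is_worst_off}, that the last agent is worst off), apply Theorem~\ref{thm:adaptive_individual_asymptotics} Part~(\ref{adaptive_individual_obtained}) at $\theta=1$, and substitute $q_s(1)=\sum_{r=1}^{s}u_{rs}\bigl(y'_r(1)-y'_{r+1}(1)\bigr)$ with $y'_r(1)=(1-e^{-1})^{r-1}$. Your explicit justification of the arg-max pair $(1,n)$ is a point the paper leaves to the surrounding discussion, but the argument is the same.
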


\begin{proof}{Proof.}
The probability that the last agent in $\rho$ is matched to one of his first $k$
preferences is $\sum_{s=1}^k D_\algo(n,s)$. According to Theorem
\ref{thm:adaptive_individual_asymptotics} Part \ref{adaptive_individual_obtained},
the asymptotic limit of this quantity is $\sum_{s=1}^k q_s(1)$, where
$$ q_s(1) \;=\; \sum_{r=1}^s u_{rs}(y'_r(1) - y'_{r+1}(1)) .
  $$
The asymptotic order bias is thus
$$ \lim_n \left(1 - \sum_{s=1}^k D_\algo(n,s)\right)
\;=\; 1 - \sum_{s=1}^k \sum_{r=1}^s u_{rs}(y'_r(1) - y'_{r+1}(1)) .
  $$
As noted in Remark \ref{rem:adaptive_yr}, we have
$y'_r(1)=(1-e^{-1})^{r-1}$. The result follows.
\qedsymbol
\end{proof}

\begin{theorem}
\label{thm:boston_borda_bias}
The Borda order bias of each Boston mechanism is asymptotically zero.
\end{theorem}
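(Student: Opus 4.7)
The plan is to reduce the claim to showing that the last agent's expected Borda utility $U_n(n)$ converges to $1$ as $n\to\infty$, for each Boston mechanism. By the Remark following Theorem~\ref{thm:last_agent_is_worst_off}, expected utility is monotone decreasing in an agent's relative position, so the maximum in the definition of $\beta_n$ is attained between positions $1$ and $n$. The first agent in $\rho$ always receives his top choice, giving $U_n(1)=1$, and Borda normalization gives $u(1)-u(n)=1$; hence $\beta_n = 1 - U_n(n)$, and the theorem amounts to proving $U_n(n)\to 1$.

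The argument is a simple truncation using only the individual-agent limits established earlier. Let $S_n$ denote the preference rank of the item obtained by the last agent, so that
$$ U_n(n) \;=\; \sum_{s=1}^{n} \sigma_n(s)\, P(S_n = s), \qquad \sigma_n(s) = \frac{n-s}{n-1}. $$
All summands are non-negative, so for any fixed $M$,
$$ U_n(n) \;\geq\; \sum_{s=1}^{M} \sigma_n(s)\, P(S_n = s). $$
For each fixed $s$, $\sigma_n(s) \to 1$ directly from the Borda formula, and $P(S_n = s) \to q_s(1)$ --- by Remark~\ref{rem:naive_qs} in the Naive case, and by Theorem~\ref{thm:adaptive_individual_asymptotics}(\ref{adaptive_individual_obtained}) in the Adaptive case. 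The finite sum on the right therefore converges to $\sum_{s=1}^{M} q_s(1)$ as $n\to\infty$. Taking $\liminf$ over $n$, then letting $M\to\infty$ and invoking $\sum_{s=1}^{\infty} q_s(1) = 1$ (Remark~\ref{rem:naive_qs} and Remark~\ref{rem:adaptive_sum_qs}), we obtain $\liminf_n U_n(n) \geq 1$. The trivial bound $U_n(n) \leq 1$ closes the argument.

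The step that most invites difficulty is the temptation to try to bound $E[S_n]/n$ directly via $U_n(n) = 1 - (E[S_n]-1)/(n-1)$. This is workable for Naive Boston --- where $S_n = R_n(1)$ and Theorem~\ref{thm:naive_individual_asymptotics} combined with $z'_r(1)\to 0$ yields a tail bound good enough for the sum $\sum_r P(R_n(1)\geq r)$ --- but substantially more delicate for Adaptive Boston, since an agent caught in a late round can be forced to sample far into his preference order (the conditional mean of $F_{a_n,r}$ grows like $e^r$), so that $S_n$ may grow with $n$ even when the exit round $R_n(1)$ stays small. The truncation proposed above bypasses this entirely: since $\sigma_n(s)\in[0,1]$, the tail contribution $\sum_{s>M}\sigma_n(s)P(S_n=s)$ is at most $\sum_{s>M} P(S_n=s)$, which can be made arbitrarily small once $M$ is chosen so that $\sum_{s>M} q_s(1)$ is small, with no need to control the distribution of $S_n$ on its upper tail.
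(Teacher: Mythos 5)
Your proof is correct and follows essentially the same route as the paper's: both reduce the order bias to $1$ minus the last agent's expected Borda utility and lower-bound that utility by truncating the sum at a fixed rank, using the limits $P(S_n=s)\to q_s(1)$ together with $\sum_{s\geq 1} q_s(1)=1$ and the fact that the Borda weights of the first few ranks tend to $1$. Your explicit justification that the maximum in the bias is attained between the first and last agents (via Theorem~\ref{thm:last_agent_is_worst_off}) is a point the paper leaves implicit, but the substance of the argument is the same.
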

\begin{proof}{Proof.}
Let $\ell_n$ denote the expected Borda utility of the last agent in $\rho$,
that is
$$ \ell_n \;=\; \sum_{s=1}^n \left(\frac{n-s}{n-1}\right)D_\algo(n,s) .
  $$
Then for any $s_0$,
$$ \liminf_n \; \ell_n
\;\geq\; \liminf_n \; \left(\frac{n-s_0}{n-1}\right) \sum_{s=1}^{s_0} D_\algo(n,s)
\;=\; \sum_{s=1}^{s_0} q_s(1) ,
  $$
where $q_s(1) = \lim_n D_\algo(n,s)$, as given by Theorem \ref{thm:naive_individual_asymptotics} (naive Boston) and Theorem
\ref{thm:adaptive_individual_asymptotics} (adaptive Boston).
Since $\sum_{s=1}^{\infty}q_s(1)=1$ (see Remarks \ref{rem:naive_qs} and
\ref{rem:adaptive_sum_qs}) and $s_0$ was arbitrary, we obtain
$\lim_n \ell_n = 1$. The order bias is $1-\ell_n$; hence the result.
\qedsymbol
\end{proof}

\begin{table}[hbtp]
\centering
\begin{tabular}{|l|l|l|l|}
\hline
algorithm & $k=1$ & $k=2$ & $k=3$\\
\hline
NB & $1-e^{-1}\approx 0.632$ & $(1 - e^{-1})(1-e^{-1}e^{-e^{-1}}) \approx 0.471$
   & $(1 - e^{-1})(1 - e^{-1}e^{-e^{-1}})(1 - e^{-1}e^{-e^{-1}}e^{-e^{-{e^{-1}}}})\approx 0.378$ \\
AB & $1-e^{-1}\approx 0.632$ & $(1-e^{-1})(1-e^{-2})\approx 0.547$ & $(1-e^{-1})(1-e^{-2}) - (1-e^{-1})^2(e^{-2}+e^{-4}) \approx 0.485$ \\
SD & 1 & 1 & 1\\
\hline
\end{tabular}
\vspace{10pt}
\caption{Limiting quantities for $k$-approval order bias.}
\end{table}

\begin{figure}
    \centering
    \includegraphics[width=0.8\textwidth]{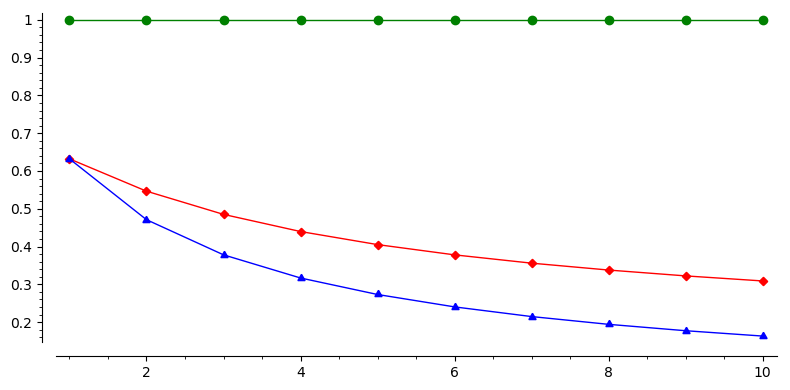}
    \caption{Limiting values as $n\to \infty$ of $k$-approval order bias for $1\leq k \leq 10$. Top: Serial Dictatorship. Middle: Adaptive Boston. Bottom: Naive Boston.}
    \label{fig:bias comp}
\end{figure}

%
%
%

\section{Conclusion}
\label{s:conclude}

If we relax the IC assumption on preferences, we should expect different results, although the relative performance of the three algorithms will likely not vary. For example, simulations \cite{FrPW2021} with preferences drawn from the Mallows distribution show that for small values of the Mallows dispersion parameter it is much harder to satisfy all agents or keep order bias low, but nevertheless NB beats AB, which beats SD, over the entire range of parameters.

A striking feature of our results, under the IC assumption on preferences and assuming sincere agent behavior, is that although the Boston algorithms have a welfare advantage over Serial Dictatorship, the advantage is rather small. 

The limiting results for average welfare gained by the agents up to position $\theta$ in the choosing order show that the limit is concave in $\theta$. For the Boston mechanisms, this concavity is slight: for example, even for plurality utilities the median of the cumulative Adaptive Boston welfare distribution occurs at position approximately $0.378$, and this becomes even more evenly distributed as $k$ increases and we choose $k$-approval utilities (the limiting case is the same as Borda, where the cumulative distribution is linear).  

However, there is a huge difference in the values of the more egalitarian fairness criterion order bias, with SD being asymptotically as biased as it could be, and the Boston algorithms being asymptotically unbiased with respect to our normalized Borda utilities and having much lower bias than SD even for utilities such as $k$-approval for small $k$. 

Thus Naive Boston beats Adaptive Boston on both welfare and order bias, and Adaptive Boston beats Serial Dictatorship. From a welfare viewpoint, then, SD should be avoided. Of course, there are always tradeoffs. A persistent theme of the research literature is the inevitable tradeoff between strategyproofness, economic efficiency and agent welfare, and there is still much to be learned about these issues. SD is strategyproof, while AB gives less incentive to strategize than NB \cite{MeSe2021}.

The order bias of the Boston algorithms, although smaller than that of SD, is still rather large. Thus if this fairness criterion is important, it makes sense to use a mechanism like Top Trading Cycles, which is strategyproof and has zero order bias in this situation \cite{FrPW2021}. Note that since TTC (with a randomly chosen endowment) is equivalent to SD \cite{AbSo1998}, and SD does not give up much in welfare to NB, TTC may be a good choice if preferences of agents are well described by IC. 

A simple idea that will reduce order bias is to reverse the order in which agents choose at each round (or just at the second round). Quantifying the improvement via an analysis analogous to that in this paper is not easy, because 
it is no longer clear that the worst off agent will be the initially last one in the choosing order. We leave this for future work.

The Boston algorithms discussed here are specializations of algorithms used for school choice to the case where each school has a single seat and schools have a common preference order over applicants. 
Further analysis of school choice mechanisms in the general case, from the viewpoint of welfare and order bias, would be very desirable.

We have studied only sincere behavior by agents. Strategic behavior under the Boston mechanisms does occur in practice, and does cause welfare loss, but the social welfare cost of adopting a strategyproof alternative such as (random) Serial Dictatorship is often substantial, as shown in analysis of Harvard course matching \cite{BuCa2012}. It would be interesting to explore this issue further in the housing allocation model, and to study welfare and 
order bias in the multi-unit assignment model used in \cite{BuCa2012}.
 
The $k$-approval utilities we have used here are widely used in assignment applications.
For example, statistics such as the fraction of school choice students obtaining one of their top three choices,
or their one favorite course, are commonly discussed.



\bibliographystyle{plain}
\bibliography{assignment.bib} 

\begin{thebibliography}{1}

\bibitem{AbSo1998}
Atila Abdulkadiro\u{g}lu and Tayfun S\"{o}nmez.
\newblock Random serial dictatorship and the core from random endowments in
  house allocation problems.
\newblock {\em Econometrica}, 66(3):689--701, 1998.

\bibitem{BuCa2012}
Eric Budish and Estelle Cantillon.
\newblock The multi-unit assignment problem: Theory and evidence from course
  allocation at {H}arvard.
\newblock {\em American Economic Review}, 102(5):2237--71, 2012.

\bibitem{Durrett}
Rick Durrett.
\newblock {\em Probability: Theory and Examples}.
\newblock Cambridge Series in Statistical and Probabilistic Mathematics.
  Cambridge University Press, 5th edition, 2019.

\bibitem{Feller}
William Feller.
\newblock {\em An introduction to probability theory and its applications, vol.
  1}.
\newblock Wiley, 3rd edition, 1970.

\bibitem{FrPW2021}
Rupert Freeman, Geoffrey Pritchard, and Mark~C. Wilson.
\newblock Order symmetry: A new fairness criterion for assignment mechanisms,
  Jul 2021.

\bibitem{HyZe1979}
Aanund Hylland and Richard Zeckhauser.
\newblock The efficient allocation of individuals to positions.
\newblock {\em Journal of Political Economy}, pages 293--314, 1979.

\bibitem{MeSe2014}
Timo Mennle and Sven Seuken.
\newblock The {Naive} versus the {Adaptive} {Boston} {Mechanism}.
\newblock {\em arXiv preprint arXiv:1406.3327}, 2014.

\bibitem{MeSe2021}
Timo Mennle and Sven Seuken.
\newblock Partial strategyproofness: Relaxing strategyproofness for the random
  assignment problem.
\newblock {\em Journal of Economic Theory}, 191:105144, 2021.

\end{thebibliography}


\end{document}